\newcommand{\R}{\mathbb{R}}
\newcommand{\K}{\mathbb{K}}
\newcommand{\N}{\mathbb{N}}
\newcommand{\NN}{\mathcal{N}\mathcal{N}}
\numberwithin{equation}{section}  
\newtheorem{defn}{Definition}[section]
\newtheorem{rem}[defn]{Remark}
\newtheorem{thm}[defn]{Theorem}
\newtheorem{prop}[defn]{Proposition}
\newtheorem{cor}[defn]{Corollary}
\newtheorem{lem}[defn]{Lemma}
\newtheorem{asu}[defn]{Assumption}
\title[Neural Networks can detect model-free static arbitrage strategies]{Neural Networks can detect \\model-free static arbitrage strategies}
\author[A. Neufeld, J. Sester]{ Ariel Neufeld$^{1}$, Julian Sester$^{2}$}
\begin{document}

\maketitle

\begin{center}
\normalsize{\today} \\ \vspace{0.5cm}
\small\textit{$^{1}$NTU Singapore, Division of Mathematical Sciences,\\ 21 Nanyang Link, Singapore 637371.\\
$^{2}$National University of Singapore, Department of Mathematics,\\ 21 Lower Kent Ridge Road, 119077.}                                                                                                                              
\end{center}

\begin{abstract}~
In this paper we demonstrate both theoretically as well as numerically
% using real-market data  %by providing an algorithm with a corresponding empirical study using real-market data 
that neural networks can detect model-free static arbitrage opportunities whenever the market admits some. Due to the use of neural networks, our method can be applied to
financial markets with a high number of traded securities and ensures almost immediate execution of the corresponding trading strategies. To demonstrate its tractability, effectiveness, and robustness we provide examples using real financial data. 
From a technical point of view, we prove that a \textit{single} neural network can approximately solve a \textit{class} of convex semi-infinite programs, which is the key result in order to derive our theoretical results that neural networks can detect model-free static arbitrage strategies whenever the financial market admits such opportunities.
%
% allows to apply financial market which consists of a large amount of traded options.
%We present a novel deep learning methodology that enables the detection of model-free static arbitrage strategies in financial markets. 
%The practical application of our approach is theoretically guaranteed by new approximation results, enabling the identification of both model-free price bounds and the corresponding trading strategies to exploit arbitrage opportunities.
%
%Using neural networks for approximation allows for the application of our method even in financial markets with a high number of traded securities, ensuring almost immediate execution of the strategies once the network is trained. To demonstrate its effectiveness and tractability, we provide examples using real financial data.
%
\\ \\
\textbf{Keywords: }{Static Arbitrage, Model-Free Finance, Deep Learning, Convex Optimization}
\end{abstract}

\section{Introduction}

Detecting arbitrage opportunities in financial markets and efficiently implementing them numerically is an intricate and demanding task, both in theory and practice. In recent academic papers, researchers have extensively tackled this problem for various types of assets, highlighting its significance and complexity.

The authors from \cite{cui2020detecting} and \cite{cui2020arbitrage} focus their studies on the foreign exchange market, establish conditions that eliminate triangular opportunities and propose computational approaches to detect arbitrage opportunities.

\cite{soon2011currency} propose a binary integer programming model for the detection of arbitrage in currency exchange markets, while   \cite{papapantoleon2021detection} focus on arbitrage in multi-asset markets under the assumptions that the risk-neutral marginal distributions are known. %\footnote{Note that this a slighlty idealized assumption which requires that call option prices for all strikes are observable and tradable, see \cite{breeden1978prices}. }. 
Also assuming knowledge of risk-neutral marginals in multi-asset markets, \cite{tavin2015detection} provides a copula-based approach to characterize the absence of arbitrage.
\cite{cohen2020detecting} study arbitrage opportunities in markets where vanilla options are traded and propose an efficient procedure to change the option prices minimally (w.r.t.\,the $\l^1$ distance) such that the market becomes arbitrage-free. \cite{neufeld2022model} develop cutting-plane based algorithms to calculate model free upper and lower price bounds 
%based on option-implied information
 whose sub-optimality can be chosen to be arbitrarily small, and use them to detect model-free arbitrage strategies. Furthermore, by observing call option prices  \cite{biagini2022detecting} train neural networks to detect financial asset bubbles.

In this paper we study the detection of model-free static arbitrage in potentially high-dimensional financial markets,
 i.e., in markets where a large number of securities are traded.
 %, which may include assets of any type and financial derivatives written on these. 
 A trading strategy is called \textit{static} if the strategy consists of buying or selling financial derivatives as well as the corresponding underlying securities in the market  only at initial time (with corresponding bid and ask prices) and then holding the positions till maturity without any readjustment. Therefore, one says that a market admits static arbitrage if there exists a static trading strategy which provides a guaranteed risk-free profit at maturity.
We aim to detect static arbitrage opportunities in a \textit{model-free} way, i.e.\ purely based on observable market data without imposing any (probabilistic) model assumptions on the underlying financial market. 
We also refer to \cite{acciaio2016model, Burzoni,burzoni2019pointwise, burzoni2021viability, cheridito2017duality, davis2014arbitrage, fahim2016model, hobson2005static,hobson2005static2, neufeld2022deep, riedel2015financial, wang2021necessary} for more details on model-free arbitrage and its characterization. 
%Static arbitrage means that once arbitrage in the financial market is detected, a financial agent only needs to trade once at initial time according to the corresponding \emph{static arbitrage-strategy} and hold these positions until maturity to obtain a guaranteed risk-free profit.
 %To be able to apply this approach to real data, we also account for bid-ask spreads in the traded instruments.

The goal of this paper is to demonstrate both theoretically as well as numerically using real-market data  %by providing an algorithm with a corresponding empirical study using real-market data 
that neural networks can detect model-free static arbitrage whenever the market admits some. The motivation of using neural networks is their known ability to efficiently deal with high-dimensional problems in various fields. 
There are several algorithms that can detect (static) arbitrage strategies in a financial market under fixed market conditions, for example in a market with fixed options with corresponding strikes as well as fixed corresponding bid and ask prices. However
 directly applying these algorithms in real financial market scenarios to exploit arbitrage is challenging due to well-known issue that market conditions changes extremely fast and high-frequency trading often cause these opportunities to vanish rapidly.  This associated risk is commonly known as \emph{execution risk}, as discussed, e.g., in \cite{kozhan2012execution}. The speed of investment execution therefore becomes crucial in capitalizing on arbitrage opportunities. 

By training neural networks according to our algorithm purely based on observed market data, we obtain \emph{detectors} that allow, given any market conditions, to detect not only the existence of static arbitrage but also to determine a proper applicable arbitrage strategy. Our algorithm therefore provides to financial agents an instruction how to trade and to exploit the arbitrage strategy while the opportunity persists. In contrast to other numerical methods which need to be executed entirely each time the market is scanned for arbitrage or the market conditions are changing, our proposed method only needs one neural network to be trained offline. After training, the neural network is then able to detect arbitrage and can be executed extremely fast, allowing to invest in the resultant strategies in every new market situation that one faces. We refer to Section~\ref{sec_exa} for detailed description of our algorithm as well as our numerical results evaluated on real market data.

We justify the use of neural networks by proving that neural networks can detect model-free static arbitrage strategies whenever the market admits some. We refer to Theorem~\ref{cor_ftpa} and Theorem~\ref{thm:epsilon-arbitrage} for our main theoretical results regarding arbitrage detection. The main idea is to relate arbitrage with the superhedging of the zero-payoff function. We prove in Proposition~\ref{prop_arbitrage} that there exists a \textit{single} neural network that provides a corresponding $\varepsilon$-optimal superhedging strategy \textit{for any given market conditions}. In fact, we show for a certain class of convex semi-infinite programs (CSIP), which includes the superhedging problem of the zero-payoff function as special case, that a \textit{single} neural network can provide \textit{for each of  the (CSIP)} within this class a corresponding feasible and $\varepsilon$-optimal solution, see Theorem~\ref{thm_main}.

The remainder of this paper is as follows. In Section~\ref{sec_arbitrage}, we introduce the setting of the financial market as well as the corresponding (static) trading strategies, and provide our main theoretical results ensuring that model-free static arbitrage can be detected by neural networks if existent. Section~\ref{sec_exa} focuses on the presentation and numerical implementation of our neural networks based algorithm to detect static arbitrage, featuring experiments conducted on real financial data to showcase the feasibility and robustness of our method. In Section~\ref{sec_convex}, we introduce a class of convex semi-infinite programs and provide our main technical result that a single neural network can approximately solve this class of (CSIPs). Finally, all proofs are presented in Section~\ref{sec_proofs}.

\section{Detection of static Arbitrage Strategies}\label{sec_arbitrage}
In this section, we study a financial market  in which a financial agent can trade statically in various types of options and which may admit the opportunity of static arbitrage profits.
In such a setting, the natural difficulty for a trader is first to decide whether such arbitrage exists and second to identify potential strategies that exploit arbitrage profits. Our goal is to show that for each financial market in which an agent can trade statically in options, the corresponding market admits static arbitrage if and only if there exists a neural network that detects the existence of model-free static arbitrage by outputting a corresponding arbitrage strategy. 
%The motivation of considering neural networks is due to its practical capability of solving high-dimensional
%We show, by trading statically in options, how to detect trading strategies that lock in arbitrage profits by using neural networks. As neural networks scale well with dimensions, the presented approach allows in particular to take into account a high-dimensional market setting, where traditional approaches (such as linear programming) are no more applicable.

\subsection{Setting}\label{sec:setting}
In this paper, we consider a market in which a financial agent can trade statically in options. To introduce the market under consideration,
let $S=(S_1,\dots,S_d)$ denote the underlying $d\in \N$ stocks at some future time $t=1$.
We only consider values $S \in \mathcal{S} \subseteq [0,\infty)^d$ for some predefined set $\mathcal{S}$, which can be interpreted as prediction set\footnote{We refer to, e.g., \cite{bartl2020pathwise,hou2018robust, mykland2003financial,neufeldsester2021model} for further literature on prediction sets in financial markets.} where the financial agent  may allow to exclude values which she considers to
be impossible to model future stock prices $S=(S_1,\dots,S_d)$ at time $1$.
% where we assume that $S \in \mathcal{S} \subseteq [0,\infty)^d$, i.e., in particular, we assume that the stock prices are non-negative.

Let $N_{\Psi} \in \N$ denote the number of different \emph{types}\footnote{We say two options are of the same \emph{type} if the payoffs only differ  with respect to the specification of a strike. Also note that trading in the underlying securities itself can be considered as an option, e.g., a call option with strike $0$.} of traded options $\Psi_i: \mathcal{S} \times [0,\overline{K}] \rightarrow [0, \infty)$, $i=1,\dots,N_{\Psi}$, written on $S$. For each option type $i\in \{1,\dots,N_{\Psi}\}$ let
%$(n_i)_{i=1,\dots,N_{\Psi}}\subset \N$, where 
 $n_i\in \N$ %denotes the 
denote the corresponding amount of different strikes under consideration $(K_{i,j})_{j=1,\dots,n_i} \subseteq [0,\overline{K}]$, where the strikes are contained in $[0,\overline{K}]$ for some  $\overline{K}<\infty$, and denote by $N:= \sum_{i=1}^{N_\Psi}n_i$ the total number of traded options.
%%%%%%%%%%
Moreover, we denote by $\pi=(\pi_{i,j})_{i=1,\dots,N_{\Psi}, \atop j=1,\dots,n_i} =(\pi_{i,j}^+,\pi_{i,j}^-)_{i=1,\dots,N_{\Psi}, \atop j=1,\dots,n_i} \in [0,\overline{\pi}]^{2N}$ the bid and ask prices of the traded options respectively,  where we assume that all the bid and ask prices are bounded by some $\overline{\pi}>0$.
%
% and we assume that the payoff function of a traded option of type $i \in \{1,\dots,N_{\Psi}\}$ (in dependence of asset-value at time $t=1$ and of the strike) is given by $\Psi_i: \mathcal{S} \times [0,\overline{K}] \rightarrow [0, \infty)$.

The financial agent then can trade in the market  by buying and selling the options described above. More precisely, we first fix the minimal initial cash position of a trading strategy to be given by $\underline{a}\in \R$, and we assume that the maximal amount of shares of options one can buy or sell  is capped by some constant $0<\overline{H}< \infty$. This allows to consider the payoff of a static trading strategy by the function
\begin{equation}\label{eq_defn_Is_arbitrage_setting}
\begin{aligned}
\mathcal{I}_S: [0,\overline{K}]^N \times [\underline{a},\infty) \times [0, \overline{H}]^{2N} &\rightarrow \R \\
(K,a,h) &\mapsto a+\sum_{i=1}^{N_{\Psi}} \sum_{j=1}^{n_i} \left(h_{i,j}^+-h_{i,j}^-\right) \cdot \Psi_i(S,K_{i,j}),
\end{aligned}
\end{equation}
where we use the notation $h= \left(h_{i,j}^+,h_{i,j}^-\right)_{i =1,\dots,N_{\Psi}, \atop j = 1,\dots,n_i} \in [0, \overline{H}]^{2N}$ to denote long and short positions in the traded options, respectively, as well as $K=(K_{i,j})_{i=1,\dots,N_{\Psi}, \atop j=1,\dots,n_i}$ to denote all strikes.
The corresponding pricing functional is then defined by
\begin{equation}\label{eq_defn_f_arbitrage_setting}
	\begin{aligned}
		f:[0, \overline{\pi}]^{2N} \times  [\underline{a},\infty)\times [0,\overline{H}]^{2N} &\rightarrow \R\\
		\left(\pi,a,h \right) &\mapsto a+ \sum_{i=1}^{N_{\Psi}} \sum_{j=1}^{n_i} \left(h_{i,j}^+\pi_{i,j}^+-h_{i,j}^-\pi_{i,j}^- \right)
	\end{aligned}
\end{equation}
 determining the price of a corresponding trading strategy with respect to the corresponding bid and ask prices of the options.

Moreover, we define the set-valued map which maps a set of strikes $K=(K_{i,j})_{i=1,\dots,N_{\Psi}, \atop j=1,\dots,n_i}$ to the corresponding strategies leading to a greater payoff than $0$ for each possible value $S\in \mathcal{S}$ by
\begin{equation}
\label{eq_defn_Gamma_arbitrage_setting}
\Gamma: [0,\overline{K}]^N \ni K \twoheadrightarrow \Gamma(K):= \left\{ (a,h) \in  [\underline{a},\infty) \times [0,\overline{H}]^{2N} ~\middle|~ \mathcal{I}_S(K,a,h) \geq 0 \text{ for all } S \in \mathcal{S}\right\}.
\end{equation}
%We consider some bound $\overline{\pi}>0$ of options prices and write $\pi=(\pi_{i,j})_{i=1,\dots,N_{\Psi}, \atop j=1,\dots,n_i} =(\pi_{i,j}^+,\pi_{i,j}^-)_{i=1,\dots,N_{\Psi}, \atop j=1,\dots,n_i} \in [0,\overline{\pi}]^{2N}$ for the bid and ask prices of the traded options respectively. This leads to the map

In this paper, we consider the following type of model-free\footnote{It is called \textit{model-free} since no probabilistic assumptions on the financial market has been imposed} static arbitrage. We refer to \cite{burzoni2019pointwise} for several notions of model-free arbitrage.
\begin{defn}[Model-free static arbitrage] \label{def_arbitrage}
Let $(K,\pi) \in [0, \overline{K}]^{N}\times [0, \overline{\pi}]^{2N}$. We call a static trading strategy $(a,h) \in  [\underline{a},\infty) \times [0, \overline{H}]^{2N}$ a \emph{model-free static arbitrage strategy} if the following two conditions hold.
\begin{itemize}
\item[(i)] $(a,h)\in \Gamma(K)$,
\item[(ii)] $f(\pi,a,h)<0$.
\end{itemize}
Moreover, for any $\varepsilon>0$ we call a model-free static arbitrage strategy to be of \textit{magnitude $\varepsilon$} if $f(\pi,a,h)\leq -\varepsilon$. 
\end{defn}

Then, the minimal price of a trading strategy that leads to a greater price than $0$ for each possible value $S\in \mathcal{S}$, in dependence of strikes $K=(K_{i,j})_{i=1,\dots,N_{\Psi}, \atop j=1,\dots,n_i}$ and option prices $\pi =(\pi_{i,j}^+,\pi_{i,j}^-)_{i=1,\dots,N_{\Psi}, \atop j=1,\dots,n_i}$, is given by 
\begin{equation}\label{eq_defn_Psi_arbitrage_setting}
\begin{aligned}
V:[0, \overline{K}]^{N}\times [0, \overline{\pi}]^{2N} &\rightarrow \R\\
(K,\pi) &\mapsto \inf_{(a,h) \in \Gamma(K)} f(\pi,a,h).
\end{aligned}
\end{equation}

This means according to Definition~\ref{def_arbitrage} that the market with parameters $(K,\pi) \in [0, \overline{K}]^{N}\times [0, \overline{\pi}]^{2N}$ admits no model-free static arbitrage strategy if and only if $V(K,\pi) \geq 0$.
\subsection*{Neural Networks}
%To approximate the solutions of convex optimization problems we heavily employ the theory of deep neural networks. More specifically, with \emph{neural networks} we refer to fully-connected neural networks which are for 
By \textit{neural networks} with input dimension $d_{\operatorname{in}} \in \N$, output dimension $d_{\operatorname{out}} \in \N$, and number of layers $l \in \N$ we refer to functions of the form
\begin{equation}\label{eq_nn_function}
	\begin{aligned}
		\R^{d_{\operatorname{in}}} &\rightarrow \R^{d_{\operatorname{out}}}\\
		{x} &\mapsto {A_l} \circ {\varphi}_l \circ {A_{l-1}} \circ \cdots \circ {\varphi}_1 \circ {A_0}({x}),
	\end{aligned}
\end{equation}
where $({A_i})_{i=0,\dots,l}$ are affine\footnote{This means for all $i=0,\dots,l$, the function ${A_i}$ is assumed to have an affine structure of the form
	$
	{A_i}({x})={M_i} {x} + {b_i}
	$
	for some matrix ${M_i} \in \R^{ h_{i+1} \times h_{i}}$ and some vector ${b_i}\in \R^{h_{i+1}}$, where $h_0:=d_{\operatorname{in}}$ and $h_{l+1}:=d_{\operatorname{out}}$. } functions of the form 
\begin{equation}\label{eq_A_i_def}
	{A_0}: \R^{d_{{\operatorname{in}}}} \rightarrow \R^{h_1},\qquad {A_i}:\R^{h_i}\rightarrow \R^{h_{i+1}}\text{ for } i =1,\dots,l-1, \text{(if } l>1), \text{ and}\qquad {A_l} : \R^{h_l} \rightarrow \R^{d_{\operatorname{out}}},
\end{equation}
and where the function $\varphi_i$ is applied componentwise, i.e., for  $i=1,\dots,l$ we have ${\varphi}_i(x_1,\dots,x_{h_i})=\left(\varphi(x_1),\dots,\varphi(x_{h_i})\right)$.  The function $\varphi:\R \rightarrow \R$  is called \emph{activation function} and assumed to be continuous and non-polynomial.
We say a neural network is \emph{deep} if $l\geq 2$.
Here ${h}=(h_1,\dots,h_{l}) \in \N^{l}$ denotes the dimensions (the number of neurons) of the hidden layers, also called \emph{hidden dimension}.

Then, we denote by $\mathfrak{N}_{d_{\operatorname{in}},{d_{\operatorname{out}}}}^{l,{h}}$  the set of all neural networks with input dimension ${d_{\operatorname{in}}}$, output dimension ${d_{\operatorname{out}}}$, $l$ hidden layers, and hidden dimension ${h}$, whereas
the set of all neural networks from $\R^{d_{\operatorname{in}}}$ to $\R^{d_{\operatorname{out}}}$ (i.e.\ without specifying the number of hidden layers and hidden dimension) is denoted by
\[
\mathfrak{N}_{d_{\operatorname{in}},{d_{\operatorname{out}}}}:=\bigcup_{l \in \N}\bigcup_{{h} \in \N^l}\mathfrak{N}_{d_{\operatorname{in}},{d_{\operatorname{out}}}}^{l,{h}}.
\]
%
%where we denote by $\mathfrak{N}_{d_{\operatorname{in}},{d_{\operatorname{out}}}}^{l,{h}}$ the set of all neural networks with input dimension ${d_{\operatorname{in}}}$, output dimension ${d_{\operatorname{out}}}$, $l$ hidden layers, and hidden dimension ${h}$.

%The foundation for our approximation approach is 
It is well-known that the set of neural networks possess the so-called \textit{universal approximation property}, see, e.g., \cite{pinkus1999approximation}.
\begin{prop}[Universal approximation theorem]\label{lem_universal}
For any compact set $\K \subset \R^{d_{\operatorname{in}}} $ the set $\mathfrak{N}_{d_{\operatorname{in}},{d_{\operatorname{out}}}}|_{\K}$ is dense in ${C}(\K,\R^{d_{\operatorname{out}}})$ with respect to the topology of uniform convergence on $C(\K,\R^{d_{\operatorname{out}}})$.
\end{prop}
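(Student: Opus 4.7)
The plan is to reduce this to the classical theorem of Leshno--Lin--Pinkus--Schocken / Pinkus, and then indicate how that result is established. Since this is a well-known fact, I would not reprove it from scratch but rather structure the argument so that the reader sees where the hypotheses on $\varphi$ (continuous, non-polynomial) enter.

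First I would reduce to the scalar output case $d_{\operatorname{out}}=1$. Given any $F=(F_1,\dots,F_{d_{\operatorname{out}}})\in C(\K,\R^{d_{\operatorname{out}}})$ and $\varepsilon>0$, it suffices to find for each coordinate $k$ a scalar network $\Phi_k\in \mathfrak{N}_{d_{\operatorname{in}},1}$ with $\sup_{x\in\K}|F_k(x)-\Phi_k(x)|<\varepsilon/\sqrt{d_{\operatorname{out}}}$. Stacking them into a single network in $\mathfrak{N}_{d_{\operatorname{in}},d_{\operatorname{out}}}$ (concatenate the weight matrices in block-diagonal fashion and concatenate the bias vectors) produces an approximator for $F$ in Euclidean norm uniformly on $\K$, which in particular gives uniform approximation in any equivalent norm on $\R^{d_{\operatorname{out}}}$. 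Thus the multi-output case reduces to the scalar case, and the scalar density statement with shallow ($l=1$) networks already implies the statement in $\mathfrak{N}_{d_{\operatorname{in}},1}$.

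For the scalar case, the strategy is to exhibit, for any multivariate polynomial $p\colon\R^{d_{\operatorname{in}}}\to\R$, a sequence of single-hidden-layer networks with activation $\varphi$ that converges to $p$ uniformly on $\K$; once this is proved, Stone--Weierstrass (applied on the compact set $\K$) yields the result, since polynomials are uniformly dense in $C(\K,\R)$. To approximate polynomials, one first treats the one-dimensional case: it is enough to show that $\operatorname{span}\{\varphi(\lambda\cdot+b):\lambda,b\in\R\}$ is dense in $C([-R,R],\R)$ for every $R>0$. The standard route is to convolve $\varphi$ with a smooth compactly supported mollifier $\rho_\delta$; the convolution $\varphi*\rho_\delta$ is smooth and still lies in the closure of the span of translates/dilates of $\varphi$, so its derivatives of every order are in that closure as well. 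Here the assumption that $\varphi$ is \emph{non-polynomial} is decisive: if every derivative of $\varphi*\rho_\delta$ at the origin vanished for every $\delta$, one could conclude $\varphi$ itself is polynomial, contradicting the hypothesis. Therefore for each $n$ one obtains some $\varphi*\rho_\delta$ whose $n$-th derivative at $0$ is nonzero, and a finite difference of dilates of $\varphi*\rho_\delta$ approximates $x\mapsto x^n$ uniformly on $[-R,R]$. Since finite differences of translates/dilates of $\varphi*\rho_\delta$ lie in the uniform closure of the span of translates/dilates of $\varphi$, every monomial $x^n$ lies in that closure, hence so does every polynomial.

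Passing from the one-dimensional to the $d_{\operatorname{in}}$-dimensional case is done via \emph{ridge functions}: any multivariate polynomial of degree $\le n$ can be written as a finite sum $p(x)=\sum_k q_k(\langle a_k,x\rangle)$ for univariate polynomials $q_k$ and directions $a_k\in\R^{d_{\operatorname{in}}}$ (a standard linear-algebra fact about polarization). Approximating each univariate $q_k$ by networks with activation $\varphi$ on the compact interval $\{\langle a_k,x\rangle : x\in\K\}$ and composing with the linear map $x\mapsto \langle a_k,x\rangle$ yields a shallow network in $\mathfrak{N}_{d_{\operatorname{in}},1}$ approximating $p$ uniformly on $\K$. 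Combining with the first paragraph finishes the proof.

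The only genuinely nontrivial step is the one where non-polynomiality of $\varphi$ is used to extract all monomials from the span of translates/dilates; all other steps are routine (mollification, Stone--Weierstrass, block-diagonal stacking, ridge decomposition of polynomials). I would therefore expect the write-up to either invoke Pinkus (1999) directly for the scalar shallow-network density and spend its time on the reduction, or to reproduce the mollification--differentiation trick in detail as the one substantive ingredient.
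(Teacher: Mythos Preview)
Your outline is a correct sketch of the classical Leshno--Lin--Pinkus--Schocken argument and would certainly yield a valid proof. However, the paper does not prove this proposition at all: it simply states the result and cites \cite{pinkus1999approximation} as a reference (see the sentence immediately preceding the proposition). So rather than taking a different route, you are supplying a proof where the paper supplies none; your reduction to $d_{\operatorname{out}}=1$ via block-diagonal stacking, followed by the mollification--differentiation trick to extract monomials and the ridge-function decomposition, is exactly the standard argument behind the cited reference.
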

\subsection{Main results}
To formulate our main result we first impose the following mild assumptions.
\begin{asu}\label{asu_arbitrage}~
\begin{itemize}
\item[(i)] There exists some $L_{\Psi}>0$ such that for all $S \in \mathcal{S}$ and for all $i=1,\dots, N_{\Psi}$ the map $[0,\overline{K}] \ni K_{i,j} \mapsto \Psi_i(S,K_{i,j})$ is $L_{\Psi}$-Lipschitz.
\item[(ii)] There exists some by $C_\Psi>0$ such that  the map $\Psi_i$ is bounded by $C_{\Psi}$ on $\mathcal{S} \times [0,\overline{K}]$ for all $i=1,\dots, N_{\Psi}$.
\end{itemize}
\end{asu}

\begin{rem}
	First, note that we do not impose any topological or geometric conditions on the prediction set $\mathcal{S} \subset [0,\infty)^d$.
However, a sufficient criterion for Assumption~\ref{asu_arbitrage}~(ii) to hold would be that, e.g.,  $\mathcal{S} \subset [0,\infty)^d$ is bounded  and that $[0,\infty)^d \times [0,\overline{K}] \ni (S,K) \mapsto \Psi_i(S,K)$ is continuous for each $i=1,\dots,N_{\Psi}$. Moreover, note that Assumption~\ref{asu_arbitrage}~(i) is satisfied for example for any payoff function which is continuous and piece-wise affine (CPWA), which includes most relevant payoff functions in finance. We refer to \cite{neufeld2022model,li2023quantum} for a detailed list of examples of (CPWA) payoff functions.
\end{rem}
In our first result, we conclude that the financial market described in Section~\ref{sec:setting} admits model-free static arbitrage if and only if there exists a neural network that detects the existence of model-free static arbitrage by outputting a corresponding arbitrage strategy.
\begin{thm}[Neural networks can detect static arbitrage]\label{cor_ftpa}
	Let Assumption~\ref{asu_arbitrage} hold true, and let $(K,\pi)\in [0, \overline{K}]^{N} \times [0, \overline{\pi}]^{2N}$.
	Then, there exists model-free static arbitrage if and only if there exists a neural network $ \NN \in  \mathfrak{N}_{3N,1+2N}$ with 
	\begin{itemize}
		\item[(i)] $\NN(K,\pi):=\left(\NN_a(K,\pi),\NN_h(K,\pi)\right) \in \Gamma(K)$,
		\item[(ii)] $f\left(\pi,\NN_a(K,\pi),\NN_h(K,\pi)\right)<0$.
	\end{itemize}
\end{thm}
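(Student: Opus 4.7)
The plan is to prove both implications separately; unlike the uniform statements in Section~\ref{sec_convex} and Proposition~\ref{prop_arbitrage}, here $(K,\pi)$ is fixed, so each direction is essentially immediate from the definitions, the only non-trivial ingredient being that constants lie in the class of neural networks.

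For the direction ($\Leftarrow$), I would argue as follows. Suppose $\NN \in \mathfrak{N}_{3N,1+2N}$ satisfies (i) and (ii). Setting $(a,h):=\NN(K,\pi)=(\NN_a(K,\pi),\NN_h(K,\pi))$, condition (i) gives $(a,h)\in\Gamma(K)$, i.e.\ $\mathcal{I}_S(K,a,h)\geq 0$ for every $S\in\mathcal{S}$, and condition (ii) gives $f(\pi,a,h)<0$. By Definition~\ref{def_arbitrage}, $(a,h)$ is a model-free static arbitrage strategy, so existence follows.

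For the direction ($\Rightarrow$), the idea is to take an arbitrage strategy $(a^*,h^*)\in[\underline{a},\infty)\times[0,\overline{H}]^{2N}$ that exists by hypothesis at the market parameters $(K,\pi)$, and realize the constant map $x\mapsto(a^*,h^*)$ as an element of $\mathfrak{N}_{3N,1+2N}$. Concretely, I would take one hidden layer of width $h_1=1$, set $A_0(x):=M_0 x+b_0$ with $M_0=0\in\R^{1\times 3N}$ and $b_0=0\in\R$, so that $\varphi_1(A_0(x))=\varphi(0)$ for every input $x$, and then set $A_1(y):=M_1 y+b_1$ with $M_1=0\in\R^{(1+2N)\times 1}$ and $b_1=(a^*,h^*)^\top\in\R^{1+2N}$. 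The resulting $\NN\in\mathfrak{N}_{3N,1+2N}$ is identically $(a^*,h^*)$, hence at input $(K,\pi)$ it satisfies $\NN(K,\pi)=(a^*,h^*)\in\Gamma(K)$ and $f(\pi,\NN_a(K,\pi),\NN_h(K,\pi))=f(\pi,a^*,h^*)<0$, which are exactly (i) and (ii).

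There is no real obstacle; the only step that uses the structure of the neural-network class is the observation that constant functions are representable, which is immediate from the freedom to choose the matrices $M_i$ and bias vectors $b_i$ in \eqref{eq_nn_function}--\eqref{eq_A_i_def} (and is in any case a trivial consequence of the universal approximation property, Proposition~\ref{lem_universal}). The genuine content of the paper, namely that a \emph{single} neural network can detect arbitrage uniformly over all admissible market parameters $(K,\pi)$, is not addressed by this theorem but by Theorem~\ref{thm:epsilon-arbitrage} and the underlying Proposition~\ref{prop_arbitrage} on approximate solutions to semi-infinite programs.
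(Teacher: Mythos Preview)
Your proof is correct. The ($\Leftarrow$) direction matches the paper's argument exactly. For ($\Rightarrow$), however, you take a more elementary route: the paper invokes Proposition~\ref{prop_arbitrage} (which in turn rests on the full CSIP machinery of Theorem~\ref{thm_main}) with some $0<\varepsilon<-V(K,\pi)$ to obtain a feasible $\varepsilon$-optimal network, whereas you simply realize a given arbitrage strategy $(a^*,h^*)$ as a constant neural network. This is perfectly valid---indeed it does not even use Assumption~\ref{asu_arbitrage}. The paper's detour through Proposition~\ref{prop_arbitrage} is presumably expository, keeping the argument aligned with Theorem~\ref{thm:epsilon-arbitrage}, where, as you rightly observe, uniformity in $(K,\pi)$ is essential and the constant-function shortcut is no longer available.
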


In our second result, we show that for any given $\varepsilon>0$ and $0< \delta< \varepsilon$ there exists a \textit{single} neural network such that for any given strikes $K=(K_{i,j})_{i=1,\dots,N_{\Psi}, \atop j=1,\dots,n_i}$ and option prices $\pi =(\pi_{i,j}^+,\pi_{i,j}^-)_{i=1,\dots,N_{\Psi}, \atop j=1,\dots,n_i}$  the neural network can detect model-free static arbitrage of magnitude $\delta$ if the financial market with corresponding market conditions $(K,\pi)$ admits static arbitrage of magnitude $\varepsilon$. From a practical point of view, this is crucial, since it allows the financial trader to \textit{only train one single neural network} which can then, once trained, instantaneously detect corresponding static arbitrage opportunities if the current market conditions $(K,\pi)$ admit such opportunities. On the other hand, a trader applying the trained neural network to a financial market which admits no static arbitrage opportunities pays at most $\varepsilon-\delta$ for the trading strategy, i.e., if $\varepsilon \approx \delta$, the risk of paying for trading strategies which are no static arbitrage strategies can be reduced to an arbitrarily small amount.

\begin{thm}[A single neural network can detect static arbitrage of magnitude $\varepsilon$]\label{thm:epsilon-arbitrage}
	Let $\varepsilon>0$ and $0<\delta<\varepsilon$. Then, there exists a neural $ \NN \in  \mathfrak{N}_{3N,1+2N}$ such that for every  $(K,\pi)\in [0, \overline{K}]^{N} \times [0, \overline{\pi}]^{2N}$ the following holds.
\begin{itemize}
\item[(i)]	If the financial market with respect to $(K,\pi)$ admits  model-free static arbitrage of magnitude $\varepsilon$, then the neural network outputs a trading strategy $\left(\NN_a(K,\pi),\NN_h(K,\pi)\right)$ which is a model-free static arbitrage of magnitude $\delta$.
\item[(ii)]	 If the financial market with respect to $(K,\pi)$ admits no model-free static arbitrage, then the neural network outputs a trading strategy $\left(\NN_a(K,\pi),\NN_h(K,\pi)\right) \in \Gamma(K)$ which has a price of at most $\varepsilon-\delta$.
\end{itemize}
\end{thm}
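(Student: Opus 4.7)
The plan is to derive Theorem~\ref{thm:epsilon-arbitrage} as an immediate consequence of Proposition~\ref{prop_arbitrage} (which itself specializes the general result Theorem~\ref{thm_main} to the superhedging problem associated with the zero payoff). That proposition furnishes, for any prescribed accuracy $\eta > 0$, a \emph{single} neural network $\NN \in \mathfrak{N}_{3N, 1+2N}$ such that, uniformly over $(K,\pi) \in [0,\overline{K}]^N \times [0,\overline{\pi}]^{2N}$, the output $\left(\NN_a(K,\pi), \NN_h(K,\pi)\right)$ lies in $\Gamma(K)$ and its cost exceeds the superhedging value $V(K,\pi)$ from \eqref{eq_defn_Psi_arbitrage_setting} by at most $\eta$.

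I would apply this with $\eta := \varepsilon - \delta > 0$ and let $\NN$ be the resulting network. Feasibility $\NN(K,\pi) \in \Gamma(K)$ holds by construction. For (i), the hypothesis that the market admits model-free static arbitrage of magnitude $\varepsilon$ is, through Definition~\ref{def_arbitrage} and \eqref{eq_defn_Psi_arbitrage_setting}, equivalent to $V(K,\pi) \le -\varepsilon$, so
\[
f\bigl(\pi, \NN_a(K,\pi), \NN_h(K,\pi)\bigr) \;\le\; V(K,\pi) + \eta \;\le\; -\varepsilon + (\varepsilon - \delta) \;=\; -\delta,
\]
which together with feasibility exhibits $\NN(K,\pi)$ as a static arbitrage of magnitude $\delta$. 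For (ii), in the no-arbitrage case one has $V(K,\pi) \ge 0$; observing that the null strategy $(a,h) = (0,0)$ is always feasible, so that $V(K,\pi) \le f(\pi,0,0) = 0$, we conclude $V(K,\pi) = 0$ and hence
\[
f\bigl(\pi, \NN_a(K,\pi), \NN_h(K,\pi)\bigr) \;\le\; 0 + (\varepsilon - \delta) \;=\; \varepsilon - \delta,
\]
as desired.

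All genuine difficulty is thereby absorbed into Proposition~\ref{prop_arbitrage}, whose proof must verify that the parametric superhedging problem $(K,\pi)\mapsto V(K,\pi)$ fits the convex semi-infinite programming framework of Section~\ref{sec_convex} and that Assumption~\ref{asu_arbitrage} delivers the continuity and compactness needed for the uniform approximation provided by Theorem~\ref{thm_main}. Conditional on that result, Theorem~\ref{thm:epsilon-arbitrage} collapses to the single choice $\eta = \varepsilon - \delta$ above, the only bookkeeping being the feasibility of the null strategy so that $V$ is pinned to exactly $0$ in the no-arbitrage regime; after that both claims are purely algebraic.
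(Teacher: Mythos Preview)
Your proposal is correct and follows the paper's proof essentially verbatim: apply Proposition~\ref{prop_arbitrage} with accuracy $\varepsilon-\delta$, use $V(K,\pi)\le -\varepsilon$ in case~(i) and $V(K,\pi)=0$ in case~(ii). Your added justification that the null strategy $(0,0)$ is feasible (so $V\le 0$, pinning $V=0$ in the no-arbitrage regime) is a detail the paper's proof leaves implicit; note it tacitly uses $\underline{a}\le 0$, which the paper also assumes without comment.
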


The main idea to derive Theorem~\ref{cor_ftpa} and Theorem~\ref{thm:epsilon-arbitrage} relies on the relation between arbitrage and superhedging of the $0$-payoff function. The following result establishes that for any prescribed $\varepsilon>0$ there exists a \textit{single} neural network such that for any given strikes $K=(K_{i,j})_{i=1,\dots,N_{\Psi}, \atop j=1,\dots,n_i}$ and option prices $\pi =(\pi_{i,j}^+,\pi_{i,j}^-)_{i=1,\dots,N_{\Psi}, \atop j=1,\dots,n_i}$ defining the  market, the neural network produces a static trading strategy which superhedges the $0$-payoff for all possible values $S \in \mathcal{S}$ whose price is $\varepsilon$-optimal. 
\begin{prop}[Approximating $V$ with neural networks]\label{prop_arbitrage}
Let Assumption~\ref{asu_arbitrage} hold true. Then for all $\varepsilon>0$ there exists a neural network  $\NN \in  \mathfrak{N}_{3N,1+2N} $ such that 
\begin{itemize}
\item[(i)] $\NN(K,\pi):=\left(\NN_a(K,\pi),\NN_h(K,\pi)\right) \in \Gamma(K)$ for all $(K,\pi) \in [0, \overline{K}]^{N} \times [0, \overline{\pi}]^{2N}$,
\item[(ii)] $f\left(\pi,\NN_a(K,\pi),\NN_h(K,\pi)\right)-V(K,\pi) \leq \varepsilon$ for all $(K,\pi) \in [0, \overline{K}]^{N} \times [0, \overline{\pi}]^{2N}$.
\end{itemize}
\end{prop}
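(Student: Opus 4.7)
The plan is to recognize the parametric superhedging problem defining $V(K,\pi)$ as an instance of the convex semi-infinite program class for which Theorem~\ref{thm_main} provides a single neural network approximator, and to reduce Proposition~\ref{prop_arbitrage} to that result.

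First I would set up the correspondence explicitly: the parameter is $(K,\pi) \in [0,\overline{K}]^N \times [0,\overline{\pi}]^{2N}$, which is compact; the decision variable is $(a,h)$; the objective $(a,h) \mapsto f(\pi,a,h)$ is affine and hence convex; and for each $S \in \mathcal{S}$ the constraint $-\mathcal{I}_S(K,a,h) \leq 0$ is affine in $(a,h)$, so $\Gamma(K)$ is convex. Assumption~\ref{asu_arbitrage}(i) provides Lipschitz continuity of $(K,a,h)\mapsto \mathcal{I}_S(K,a,h)$ uniformly in $S$, while Assumption~\ref{asu_arbitrage}(ii) provides uniform boundedness $\Psi_i \leq C_\Psi$; these are exactly the continuity/boundedness ingredients that Theorem~\ref{thm_main} is designed to handle.

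Second I would compactify the decision space. The trivial strategy $(\underline{a}^+,0)$ lies in $\Gamma(K)$ with price $\underline{a}^+$, so $V(K,\pi) \leq \underline{a}^+$ for every $(K,\pi)$, and any $\varepsilon$-optimal strategy must then satisfy $a \leq \underline{a}^+ + \varepsilon + 2N\overline{H}\,\overline{\pi}=:\overline{a}$. Restricting the search to $[\underline{a}, \overline{a}] \times [0, \overline{H}]^{2N}$ therefore leaves $V$ unchanged, and the reformulated problem now has both a compact parameter space and a compact decision space. A Slater-type interior point is also readily available (e.g.\ $(a_0,0)$ with $a_0 := 1+\underline{a}^+$, for which $\mathcal{I}_S(K,a_0,0) = a_0 \geq 1 > 0$ uniformly in $S$ and $K$), supplying the constraint qualification that such approximation theorems typically require.

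With the setting cast in this form, Theorem~\ref{thm_main} directly produces a single neural network $\NN \in \mathfrak{N}_{3N, 1+2N}$ that, for every $(K,\pi)$, outputs a feasible strategy $\NN(K,\pi) \in \Gamma(K)$ whose price exceeds $V(K,\pi)$ by at most $\varepsilon$, yielding both (i) and (ii). The main obstacle will be matching the precise hypotheses of Theorem~\ref{thm_main}: since $\mathcal{S}$ is not assumed compact, the reduction will need to rely on the uniform bound $C_\Psi$ from Assumption~\ref{asu_arbitrage}(ii) rather than on compactness of the semi-infinite index set, and some care is needed to reconcile an affine (as opposed to strictly convex) objective and a merely Lipschitz (rather than smooth) parameter dependence with the exact constraint-qualification form that the theorem requires.
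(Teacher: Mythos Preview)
Your approach is essentially the paper's: identify $x\leftarrow(K,\pi)$, $y\leftarrow h$, verify the hypotheses of Theorem~\ref{thm_main}, and apply it. The paper's proof is exactly this verification, computing the constants $\mathcal{L}_{a,f}=\mathcal{L}_{a,\mathcal{I}}=1$, $L_f=\max\{1,\overline{\pi}\}\sqrt{1+2N}$, $L_{\mathcal{I}}=\max\{1,2\overline{H}L_\Psi,C_\Psi\}\sqrt{3N+1}$, and checking Assumption~\ref{asu_KxKy} with $C_{y,\delta}=[\delta,\overline{H}-\delta]^{2N}$, $L_r=\sqrt{2N}$.

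Your last paragraph's worries are unfounded and your middle paragraph's extra work is unnecessary. Theorem~\ref{thm_main} does not assume $\mathcal{S}$ compact, does not require strict convexity, and asks only for Lipschitz dependence---so the reconciliation you anticipate is already built in. Likewise, you need not compactify the $a$-variable or exhibit a Slater point by hand: the framework of Section~\ref{sec_convex} handles both internally via the strict monotonicity of $f$ and $\mathcal{I}_s$ in $a$ (Assumptions~\ref{asu_f}(iv),(v) and~\ref{asu_IS}(iii),(iv), all trivially satisfied here with slope~$1$), which yields the bound $\overline{a}^{\mathrm{UB}}$ of Remark~\ref{rem_assumptions}(i). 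What you should instead make explicit are precisely those monotonicity-in-$a$ conditions and the box-geometry Assumption~\ref{asu_KxKy} on $\K_y=[0,\overline{H}]^{2N}$, since these are the actual hypotheses Theorem~\ref{thm_main} consumes.
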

In fact, we will use Proposition~\ref{prop_arbitrage} to prove our main results Theorem~\ref{cor_ftpa} and Theorem~\ref{thm:epsilon-arbitrage} on detecting static arbitrage strategies. To prove Proposition~\ref{prop_arbitrage}, we interpret \eqref{eq_defn_Psi_arbitrage_setting} as a class of linear semi-infinite optimization problem (LSIP), where each $(K,\pi) \in [0, \overline{K}]^{N} \times [0, \overline{\pi}]^{2N}$ determines a single (LSIP). In Section~3, we introduce a (much more) general class of convex semi-infinite optimization problem (CSIP) which covers \eqref{eq_defn_Psi_arbitrage_setting} as special case. Then we show that a single neural network can approximately solve all (CSIP) of this class simultaneously. We refer to Theorem~\ref{thm_main} for the precise statement.   

%we introduce in Section~\ref{sec_convex} a general class of con 
%The above results, in particular, imply that the market admits static arbitrage if and only if this arbitrage can be detected by some neural network. Hence, Proposition~\ref{prop_arbitrage} allows to derive the following immediate important consequence.
%
%As a consequence of Proposition~\ref{prop_arbitrage}, we can 
%
%
\noindent
The proofs of all our main results  are provided in Section~\ref{sec_proofs}.

\section{The Numerics of Static Arbitrage Detection in Financial Markets}\label{sec_exa}

 The results from Section~\ref{sec_arbitrage} prove, with non-constructive arguments, the existence of neural networks that can detect model-free arbitrage strategies. These results therefore immediately raise the question how to construct neural networks that are capable to learn these strategies. 
%\subsection{A Supervised Learning Approach} 
To this end, we present with Algorithm~\ref{algo_3} an approach that combines a supervised learning approach in the spirit of \cite{neufeld2022deep} with an unsupervised learning approach as presented for example in \cite{auslender2009penalty}, \cite{eckstein2021computation}, and \cite{neufeld2022detecting}.

Algorithm~\ref{algo_3} uses the fact that in many situations there exists an applicable algorithm to compute model-free price bounds and corresponding trading strategies that approximate these bounds arbitrarily well. We exploit this fact by training a neural network offline to approximate the outcomes of such algorithms. To compute the strategies that approximately attain these bounds, we suggest employing the algorithm presented in \cite{neufeld2022model}.
The motivation of our methodology is the following. 
%Our algorithm allows to train a single neural network offline using, e.g. \cite{neufeld2022model}. 
While the offline training of the neural network might take some time, once trained, the neural network is able to detect \textit{immediately} static arbitrage and the corresponding trading strategies in the market, provided it exists. This is crucial as stock prices and corresponding option prices move quickly in real financial markets and therefore having an algorithm which can adjust fast to new market parameters is desired. \newpage

\begin{algorithm}[h!]\label{algo_3}
\SetAlgoLined
\SetKwInOut{Input}{Input}
\SetKwInOut{Output}{Output}
\Input{Number of iterations $N_{\operatorname{iter}}$, Hyperparameters of the neural network, Number of options $N\in \N$,   Option payoffs $(\Psi_i)_i$, Bounds $\overline{K}$, $\overline{\pi}$, $\overline{H}>0$, $\underline{a}\in \R$; Penalization parameter $\gamma >0$, Batch Size $B$ of samples, Batch Size $S_B$ for outcomes of underlying assets ;}

Initialize a neural network $\NN=(\NN_a,\NN_h): \R^{3N} \rightarrow \R^{1+2N}$  with a bounded output layer  such that $\NN$ attains values\footnotemark \ in $[\underline{a},\infty] \times [0, \overline{H}]^{2N} \subset \R^{1+2N}$;\\
\For {$i=1,\dots,N_{\operatorname{iter}}$} {
\For {$b=1,\dots,B$} {
Sample $X_{i,b}: = (K_{i,b},\pi_{i,b})\in [0,\overline{K}]^N\times [0,\overline{\pi}]^{2N}$;\\
Compute $(a_{i,b},h_{i,b}) \in \Gamma(K_i)$ such that 
$Y_{i,b} := f(\pi_{i,b},a_{i,b},h_{i,b}) \approx V(\pi_{i,b},K_{i,b})$, e.g., according to \cite{neufeld2022model};\\
Set $\widetilde{Y}_{i,b} := \begin{cases}
-1 &\text{ if } Y_{i,b} < 0,\\
0 &\text{ if }  Y_{i,b} \geq 0.
\end{cases}$;\\
\For {$j=1, \dots , S_B$}{
Sample $S_{i,b,j} \in \mathcal{S}$;\\
}
}
}

\For {$i=1,\dots,N_{\operatorname{iter}}$} {

Minimize 
\begin{align*}
\sum_{b=1}^B  \bigg \{ &f \left(\pi_{i,b},\NN_a(K_{i,b},\pi_{i,b}),\NN_h(K_{i,b},\pi_{i,b})\right)\\
&+\gamma \cdot   \frac{1}{S_B} \sum_{j=1}^{S_B} \left(\left(-\mathcal{I}_{S_{i,b,j} }\left(K_{i,b},\NN_a(K_{i,b},\pi_{i,b}),\NN_h(K_{i,b},\pi_{i,b}\right)\right)^+\right)^2\\
&+ \gamma \cdot  \bigg(-(\widetilde{Y}_{i,b}+0.5)\cdot f \left(\pi_{i,b},\NN_a(K_{i,b},\pi_{i,b}),\NN_h(K_{i,b},\pi_{i,b})\right)\bigg)^+\bigg\}
\end{align*}
w.r.t.\,the parameters of  $\NN_a$ and $\NN_h$;
}
\Output{A trained neural network $\NN$;}
\caption{Training of an arbitrage-detecting neural network.}
\end{algorithm}
 \footnotetext{This can be realized, e.g., by $\operatorname{tanh}$ and $\operatorname{sigmoid}$ activation functions multiplied with the corresponding bounds.}
Algorithm~\ref{algo_3} is designed to minimize the price function $f \left(\pi_{i,b},\NN_a(K_{i,b},\pi_{i,b}),\NN_h(K_{i,b},\pi_{i,b})\right)$ by incorporating two specific penalization terms. These terms are carefully crafted to facilitate the learning of the key characteristics associated with model-free arbitrage strategies.

 The first penalization term \footnote{We denote by $x^+ = \max\{x,0\}$ the positive part of a real number $x \in \R$.} $\gamma \cdot   \frac{1}{S_B} \sum_{j=1}^{S_B} \left(\left(-\mathcal{I}_{S_{i,b,j} }\left(K_{i,b},\NN_a(K_{i,b},\pi_{i,b}),\NN_h(K_{i,b},\pi_{i,b}\right)\right)^+\right)^2$ incentivizes the feasibility (see \eqref{eq_defn_Gamma_arbitrage_setting}) of learned strategies by penalizing negative payoffs in proportion to the degree of violation of the positivity constraint. This encourages the strategies to have positive payoffs.

 The second penalization term 
 \begin{equation}\label{eq_penal_2}
\gamma \cdot  \bigg(-(\widetilde{Y}_{i,b}+0.5)\cdot f \left(\pi_{i,b},\NN_a(K_{i,b},\pi_{i,b}),\NN_h(K_{i,b},\pi_{i,b})\right)\bigg)^+
 \end{equation}
 vanishes if and only if the price $ f \left(\pi_{i,b},\NN_a(K_{i,b},\pi_{i,b}),\NN_h(K_{i,b},\pi_{i,b})\right)$ of the strategy expressed by the neural network and the pre-computed price $\widetilde{Y}_{i,b}$ are either both non-negative, or both negative.
 Since the  pre-computed price $\widetilde{Y}_{i,b}$ is negative if and only if the market (under the current market parameters $(K_{i,b},\pi_{i,b})$) admits some static arbitrage,  
  the second penalization term vanishes if and only if the trading strategy expressed by the neural network correctly identifies if the markets admits static arbitrage, or not.
 % Indeed, if $\operatorname{sgn} \left(f \left(\pi_{i,b},\NN_a(K_{i,b},\pi_{i,b}),\NN_h(K_{i,b},\pi_{i,b})\right)\right) = \widetilde{Y}_{i,b}\in \{-1,0\}$, then $-(\widetilde{Y}_{i,b}+0.5)\cdot f \left(\pi_{i,b},\NN_a(K_{i,b},\pi_{i,b}),\NN_h(K_{i,b},\pi_{i,b})\right) \leq 0$, and no penalization is enforced. On the contrary, if $1=\operatorname{sgn} \left(f \left(\pi_{i,b},\NN_a(K_{i,b},\pi_{i,b}),\NN_h(K_{i,b},\pi_{i,b})\right)\right) \neq \widetilde{Y}_{i,b}=-1$ then the penalization term in \eqref{eq_penal_2} is non-zero and proportional to the distance of the price of the strategy from $0$. 

It is worth mentioning that the design of the penalization terms does not guarantee the feasibility of strategies in the sense of \eqref{eq_defn_Gamma_arbitrage_setting} or the correct sign of prices. However, due to the penalty imposed on constraint violations, as demonstrated in Example~\ref{sec_exa_1}, in practice, violations happen frequently but are typically only marginal in magnitude.

\begin{rem}
Algorithm~\ref{algo_3} is designed to fulfill two tasks simultaneously: first, to detect whether there  exists arbitrage in the market (through \eqref{eq_penal_2}) and second how to exploit arbitrage if existent, while the focus through the design of the objective function lies on maximizing the profit if arbitrage exists. For the pure classification task of deciding whether arbitrage exists (without learning the associated arbitrage strategy), there exist better suited classification algorithms such as Logistic Regression (\cite{kleinbaum2002logistic}), random forests (\cite{ho1995random}) or XGBoost (\cite{chen2016xgboost}). We leave the exploration of optimizing this related but still different task for future research.
\end{rem}

\subsection{Application to real financial data} 

In the following we apply Algorithm~\ref{algo_3} to real financial data in order to detect model-free static arbitrage in the trading of financial derivatives. For convenience of the reader, we provide under \href{https://github.com/juliansester/Deep-Arbitrage}{https://github.com/juliansester/Deep-Arbitrage} the used Python-code.

\subsubsection{Training with data of the S\&P 500 }\label{sec_exa_1}
%We consider a similar setting as discussed in \cite[Experiment 3]{neufeld2022model} which in turn is inspired by similar experiments provided in \cite[Section 4]{tavin2015detection} and \cite[Section 5.2]{papapantoleon2021detection}.

We consider trading in a financial market that consists of $d=5$ assets and corresponding $10$ vanilla call options (i.e.\,$10$ different strikes) written on each of the assets. This means we consider $N_\Psi = 5$ different types of options with
$n_i = 11$ for $i=1,\dots,5$ referring to the number of call options plus the underlying assets (which can be considered as a call option with strike $0$) so that in total $N = \sum_{i=1}^{N_{\Psi}} n_i = 55$ different securities are considered. 

To create a training set we consider for each of the $500$ constituents of the $S \& P~500$ the $10$ most liquidly traded\footnote{"Most liquidly traded" refers to the strikes with the highest trading volume.} call options with maturity $T = 19$ May $2023$. The data was downloaded on $25$ April $2023$ via \emph{Yahoo Finance}.

We then use this data to create $50~000$ samples by combining the call options of $5$ randomly chosen constituents in each sample. The spot values of the underlying assets are scaled to $1$, therefore the strikes and corresponding prices are included as percentage values w.r.t.\,the spot value of the underlying asset. We assume $\mathcal{S}= [0,2]^5$, i.e, we assume that the underlying assets at maturity only attain values between $0\%$ and $200\%$ of its current spot value. This assumption can be regarded as a restriction imposed on the space of possible outcomes to a prediction set as mentioned in the beginning of Section~\ref{sec:setting}.

 Relying on these samples, we compute, using the LSIP algorithm from \cite{neufeld2022model}, minimal super-replication strategies of the $0$-payoff for each of the $50~000$ samples.

Of these $50~000$ samples, we regard $5000$ samples as a test set on which the neural network is not trained.

To demonstrate the performance of our approach, we apply Algorithm~\ref{algo_3} with $N_{\operatorname{iter}} = 20~000$ iterations, a penalization parameter\footnote{Following the empirical experiments from  \cite{eckstein2021robust} and \cite{eckstein2021computation}, in the implementation, we let $\gamma$ increase with the number of iterations so that in the first iteration $\gamma$ equal $1$, and after $20~000$ iterations $\gamma$ is $10~000$.} $\gamma = 10~000$, and batch sizes $S_B = 32$ and $B=512$ to train a neural network $\NN$ with $1024$ neurons and $5$ hidden layers and with a \emph{ReLU} activation function in each of the hidden layers. The used learning rate for training with the \emph{Adam} optimizer (\cite{kingma2014adam}) is $0.0001$.

To train the neural network, we assume $\underline{a}= -1$, $\overline{H}=1$, i.e., the maximal investment is $1$ in each position\footnote{Note that in practice these bounds impose not a severe restriction as the resultant strategies can be scaled arbitrarily large if desired.}.

The training set of $45~000$ samples contains $34~146$ cases in which the market admits model-free arbitrage while the test set contains $3 787$ cases of  model-free arbitrage.

After training on the $45~000$ samples, the neural network assigns to  $41~768 $ out of $45~000$ the correct sign of the price of the strategy learned by the neural network, i.e., in $92.81\%$ of cases on the training set, the neural network can correctly decide whether the market admits arbitrage or not. On the test set we have $4460  $ out of $5000$ correct identifications which corresponds to 
$89.20  \%$. 
Compare also Figure~\ref{fig_training_progress} where we depict the loss function as well as the training and test set accuracy in dependence of the number of trained epochs.
\begin{figure}
\includegraphics[scale=0.55]{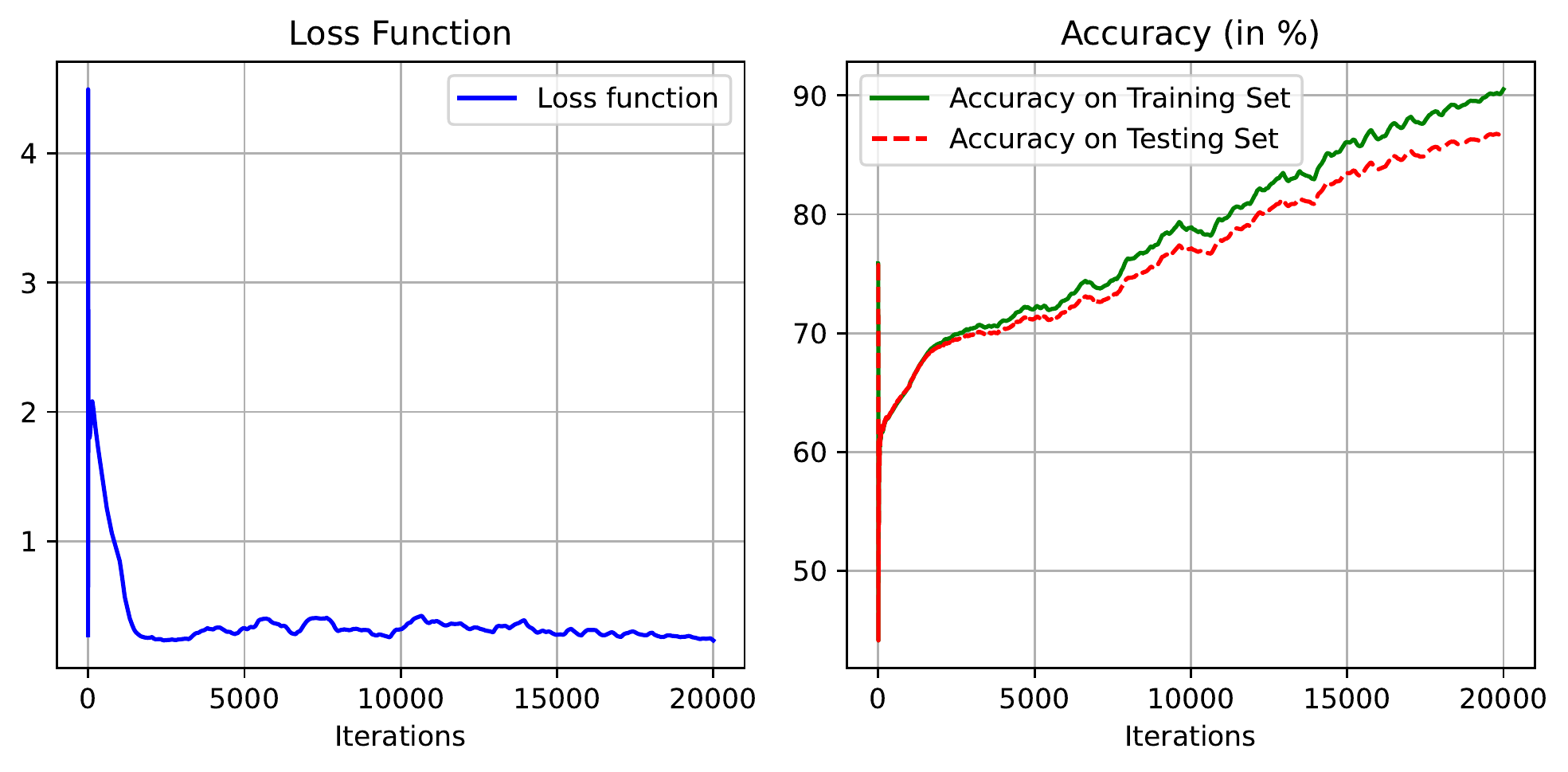}
\caption{The loss function as well as the training and test set accuracy in  dependenceof the number of trained epochs.}\label{fig_training_progress}
\end{figure}

However, it is important to emphasize that wrong identifications of the sign of the resultant strategy does not mean that the resultant strategy incur huge losses, as the magnitude of the predicted prices turns out to be on a small scale for the majority strategies with wrongly predicted sign. To showcase this, we evaluate the net profit $\mathcal{I}_{S_{i,j}}(K_i,a_i,h_i)-f(\pi_i,a_i,h_i)$ for $i = 1,\dots,5000$, $j =1,\dots, 200$, i.e., each of the $5000$ samples of the test set is evaluated on $200$ realizations of $S \in \mathcal{S}$ that are denoted by $S_{i,j}$ (uniformly sampled from $[0,2]^5$) and we show the results in Table~\ref{tbl_profit_test}. The results verify that on the test set the net profit is in the vast majority of the $1~000~000$ evaluated cases positive, compare also the histogram provided in Figure~\ref{fig_hist_test}. Moreover, the results support our claim that a wrong identification of arbitrage does not necessarily lead to a huge loss, as the net profits conditional on a wrong identification of existence of arbitrage turn out to be positive in most of the cases with a similar net profit distribution as in the unconditional case (right column of Table~\ref{tbl_profit_test} and right panel of Figure~\ref{fig_hist_test}).

\begin{table}[h!]
\begin{tabular}{lcc} \hline
       \toprule
       &\textbf{All samples} & \textbf{Conditional on wrong identification} \\
       \midrule
count 	&1~000~000  & 108~000\\
mean 	&0.477996& 0.468450\\
std 	&0.322428& 0.323814\\
min 	&-0.179760& -0.101957\\
25\% 	&0.215138& 0.202838\\
50\% 	&0.411158& 0.398024\\
75\% 	&0.705576& 0.700997\\
max 	&3.086741& 2.980903\\\hline 
\end{tabular}
\caption{Left column: The table shows the summary statistics of the net profit $\mathcal{I}_{S_{i,j}}(K_i,a_i,h_i)-f(\pi_i,a_i,h_i)$ for $i = 1,\dots,5000$, $j =1,\dots, 200$, i.e., each of the $5000$ samples of the test set is evaluated on $200$ realizations of $S \in \mathcal{S}$ leading to a total number of $1~000~000$ profits of the strategy trained as described in Section~\ref{sec_exa_1}. Right column: We depict the profit conditional on a wrong identification of existence of arbitrage opportunities.} \label{tbl_profit_test}
\end{table}

\begin{figure}[h!]
\begin{center}
\includegraphics[scale=0.4]{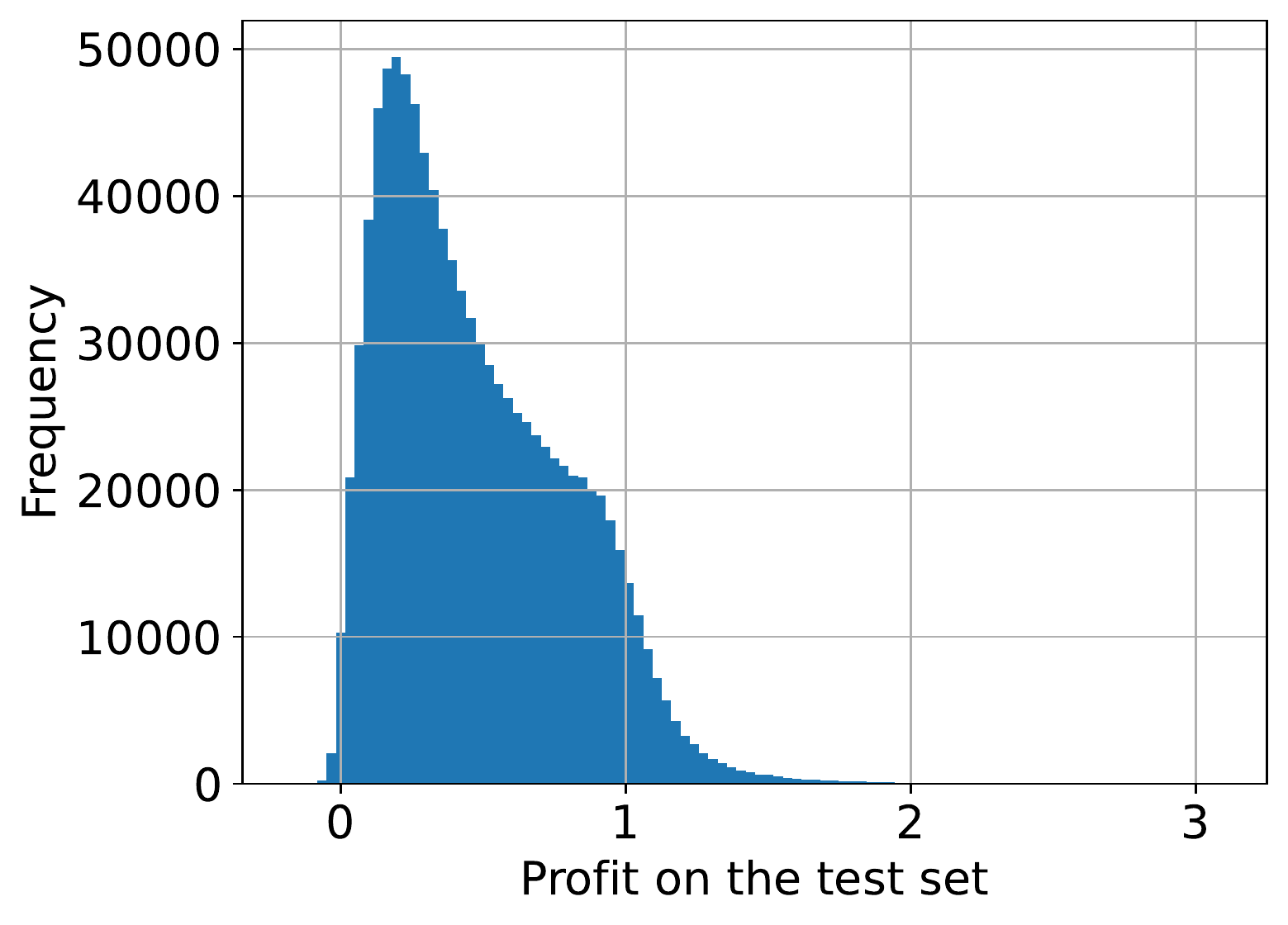}
\includegraphics[scale=0.4]{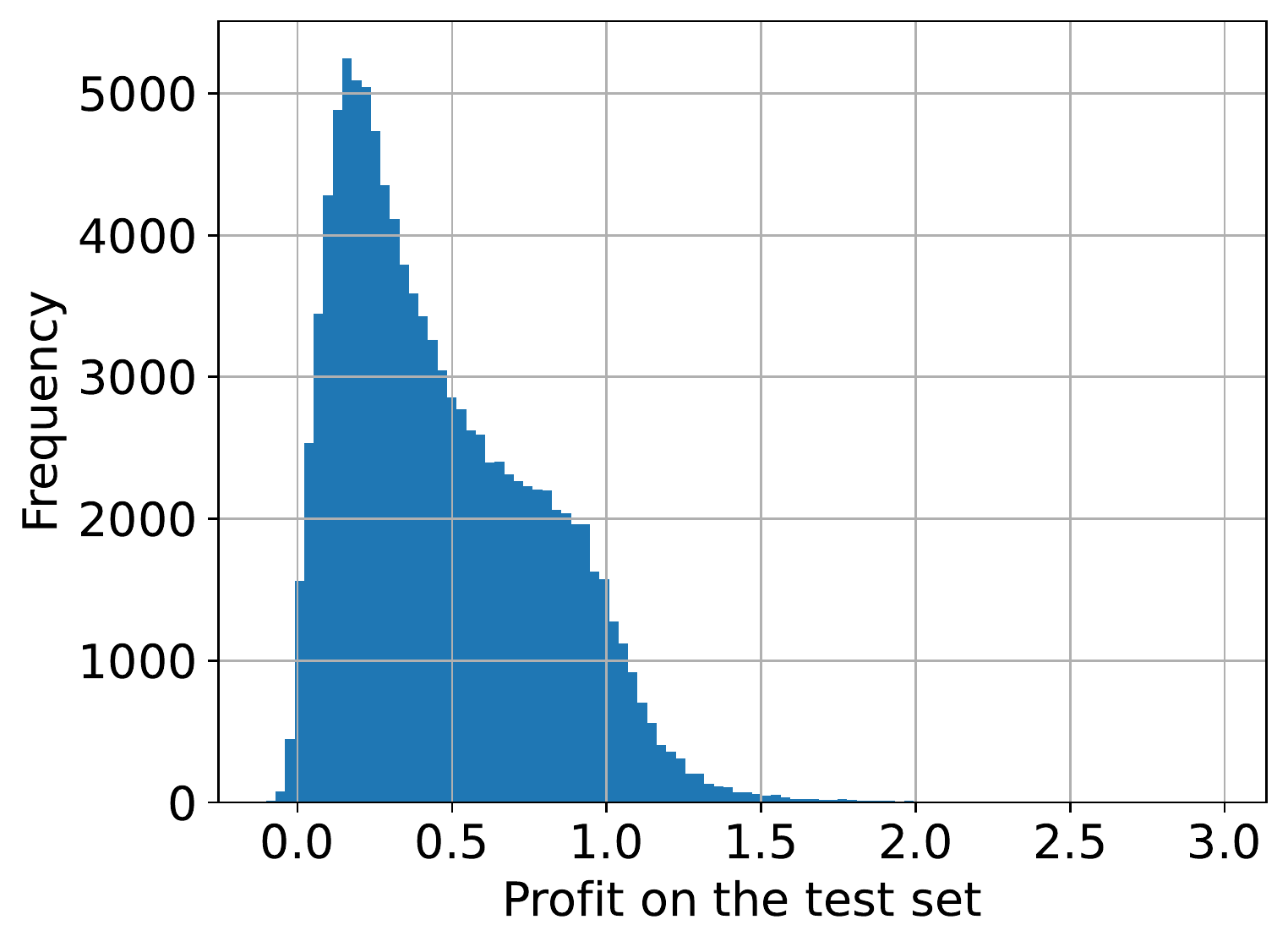}
\end{center}
\caption{Left: The histogram shows the distribution of the net profit $\mathcal{I}_{S_{i,j}}(K_i,a_i,h_i)-f(\pi_i,a_i,h_i)$ for $i = 1,\dots,5000$, $j =1,\dots, 200$ of the strategy trained as described in Section~\ref{sec_exa_1}. Right: The histogram shows the distribution of the net profit conditional on \emph{wrong identification} of arbitrage.}\label{fig_hist_test}
\end{figure}

\subsubsection{Stability of the results and choice of hyperparameters}~\\
We first remark that the results reported in Section~\ref{sec_exa_1} do not crucially depend on the ratio of the scenarios in which arbitrage can be observed (34~146 out of 45~000 samples). To show this, we first reduce the training set to a balanced training set containing $21~708$ samples of which $50\%$ constitute scenarios that allow for arbitrage. The results in Table~\ref{tbl_profit_balanced} and Figure~\ref{fig_hist_balanced} show that a strategy trained on this balanced training set according to Algorithm~\ref{algo_3} with the same hyperparameter as in Section~\ref{sec_exa_1}  performs even slightly better when being tested on the same (unbalanced) test set consisting of $5~000$ samples.

\begin{table}[h!]
\begin{tabular}{lc} \hline
 \\
count 	&1~000~000 \\
mean 	&0.694083\\
std 	&0.406774 \\
min 	&-0.136777 \\
25\% 	&0.390928 \\
50\% 	&0.653232 \\
75\% 	&0.936606\\
max 	&4.778378
\\\hline 
\end{tabular}
\caption{The table shows the summary statistics of the net profit $\mathcal{I}_{S_{i,j}}(K_i,a_i,h_i)-f(\pi_i,a_i,h_i)$ for $i = 1,\dots,5000$, $j =1,\dots, 200$, i.e., each of the $5000$ samples of the test set is evaluated on $200$ realizations of $S \in \mathcal{S}$ leading to a total number of $1~000~000$ profits of the strategy trained as described in Section~\ref{sec_exa_1} but on a reduced and balanced training set where 50 \% of the sampled constitute arbitrage situations.} \label{tbl_profit_balanced}
\end{table}

\begin{figure}[h!]
\begin{center}
\includegraphics[scale=0.4]{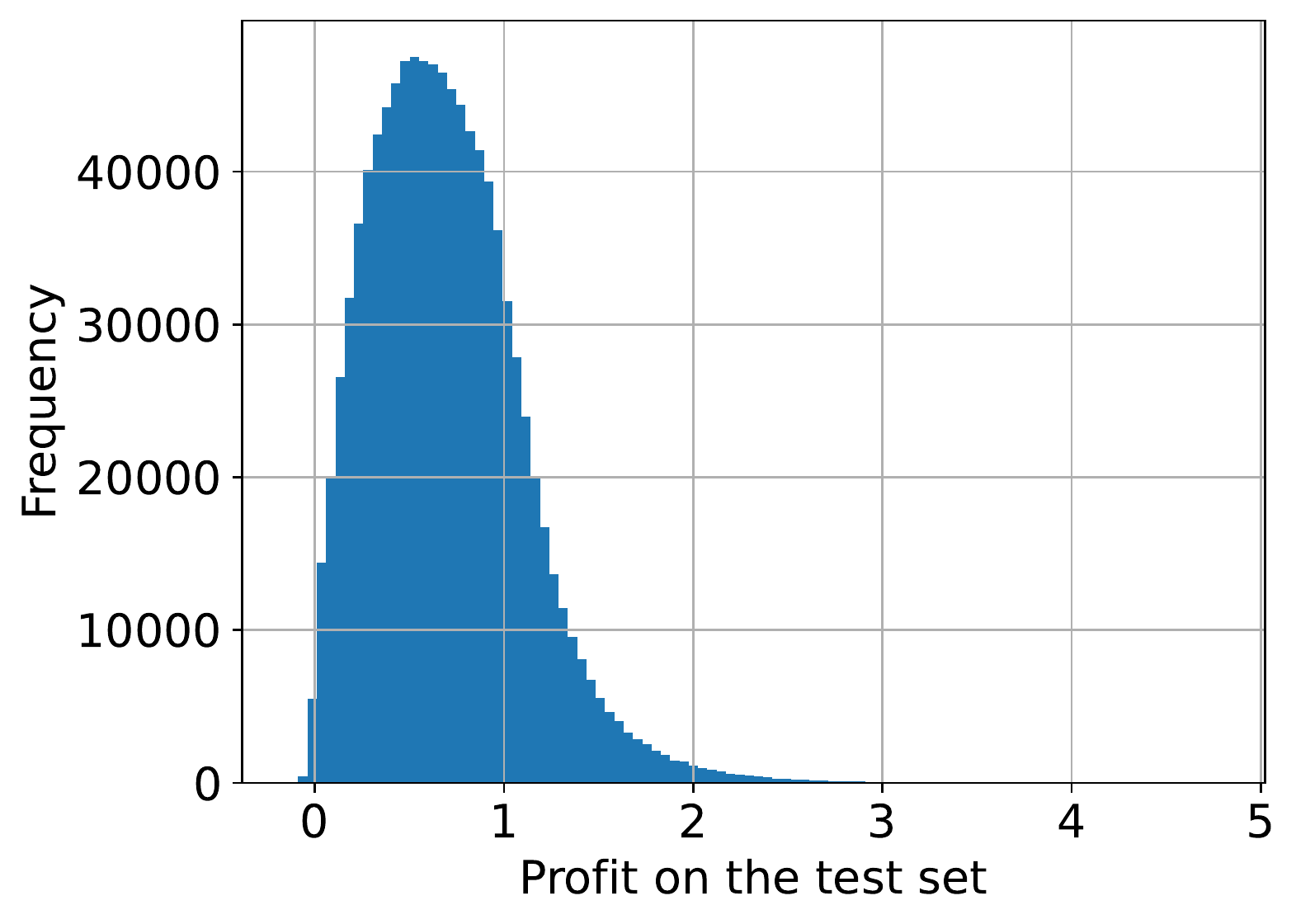}
\end{center}
\caption{The histogram shows the distribution of the net profit $\mathcal{I}_{S_{i,j}}(K_i,a_i,h_i)-f(\pi_i,a_i,h_i)$ for $i = 1,\dots,5000$, $j =1,\dots, 200$ of the strategy trained as described in Section~\ref{sec_exa_1} but on a reduced and balanced training set where 50 \% of the sampled constitute arbitrage situations.}\label{fig_hist_balanced}
\end{figure}

Next, we study the consequences of varying the used hyperparameters and we report in Table~\ref{tab:performance_metrics} the share of correct predictions, precision, recall, and F1 score for different configurations evaluated with respect to arbitrage detection on the test set. Moreover, Table~\ref{tab:summary_statistics_hyperparameters}  shows the performance of the profit of the trained strategies on the test set in dependence of different choices of hyperparameters. Taken together, the results reveal that the results do not vary significantly across different hyperparameter-configurations while a learning rate of $0.0001$, a depth of $5$ for the neural network, and no regularization seem to be close-to-optimal hyperparameter choices. Note also that the choice of a smaller learning rate of $0.001$ leads to a larger mean profit which however comes at the cost of a much higher standard deviation indicating a suboptimal solution. Moreover, increasing the depth of the neural network to $10$ prevents the neural network from learning a similar profitable strategy.

\begin{table}[h!]
    \centering
    \resizebox{\linewidth}{!}{ 
    \begin{tabular}{lcccc}
        \toprule
        \textbf{Learning Rate, Depth, Regularization} & \textbf{Correct Predictions} & \textbf{Precision} & \textbf{Recall} & \textbf{$F_1$ Score} \\
        \midrule
        0.0001,2,none & 0.8216 & 0.884871 & 0.878796 & 0.881823 \\
        0.0001,3,none & 0.8228 & 0.866195 & 0.905994 & 0.885648 \\
        0.0001,4,none & 0.8606 & 0.920065 & 0.893583 & 0.906631 \\
        0.0001,5,none & 0.8920 & 0.955911 & 0.898865 & 0.926511 \\
        0.0001,10,none & 0.2510 & 0.937500 & 0.011883 & 0.023468 \\
        0.001,5,none & 0.7186 & 0.811518 & 0.818590 & 0.815039 \\
        1e-05,5,none & 0.6936 & 0.866667 & 0.703723 & 0.776741 \\
        0.0001,5,l1 & 0.8936 & 0.959097 & 0.897808 & 0.927441 \\
        0.0001,5,l2 & 0.8186 & 0.815928 & 0.982044 & 0.891312 \\
        \bottomrule
    \end{tabular}}
    \caption{Performance metrics for different model parameters, evaluated on the test set.}
    \label{tab:performance_metrics}
\end{table}

\begin{table}[h!]
    \centering
    \resizebox{\linewidth}{!}{
    \begin{tabular}{lccccccccc}
        \toprule
        \textbf{Parameters} & \textbf{Count} & \textbf{Mean} & \textbf{Std} & \textbf{Min} & \textbf{25\%} & \textbf{50\%} & \textbf{75\%} & \textbf{Max} \\
        \midrule
        0.0001,2,none & 1000000 & 0.439912 & 0.297357 & -0.267825 & 0.203189 & 0.367542 & 0.644060 & 3.132133 \\
        0.0001,3,none & 1000000 & 0.436585 & 0.302734 & -0.209222 & 0.193420 & 0.360132 & 0.648741 & 3.113397 \\
        0.0001,4,none & 1000000 & 0.466350 & 0.310355 & -0.132638 & 0.217923 & 0.397229 & 0.681759 & 3.597595 \\
        0.0001,5,none & 1000000 & 0.477481 & 0.321754 & -0.157909 & 0.215212 & 0.410469 & 0.704891 & 3.220638 \\
        0.0001,10,none & 1000000 & 0.034884 & 0.025370 & -0.008109 & 0.015160 & 0.031994 & 0.050856 & 0.182104 \\
        0.001,5,none & 1000000 & 1.363648 & 0.837385 & -0.716115 & 0.733738 & 1.260230 & 1.896849 & 5.774780 \\
        1e-05,5,none & 1000000 & 0.406486 & 0.181477 & -0.189432 & 0.274503 & 0.392536 & 0.526439 & 1.299631 \\
        0.0001,5,l1 & 1000000 & 0.464646 & 0.317831 & -0.158561 & 0.204477 & 0.395615 & 0.690994 & 3.072829 \\
        0.0001,5,l2 & 1000000 & 0.468799 & 0.320785 & -0.174531 & 0.206054 & 0.399411 & 0.698279 & 3.331130 \\
        \bottomrule
    \end{tabular}
    \caption{Summary statistics for the net profit for strategies being trained using different hyperparameter combinations consisting of learning rate, depth of the neural network, and the type of regularization (None, $L_1$ or $L_2$), evaluated on the test set.}}
    \label{tab:summary_statistics_hyperparameters}
\end{table}

It is also noteworthy that the results are relatively robust with respect to the size of the training set on which the agent is trained. Indeed, in Table~\ref{tab:summary_statistics_samples} we report the profits on the test set for agents that have been trained on training sets of varying size. The results show that for all considered sample sizes between $5000$ and $45000$ the resultant profit on the test set is on a similar scale.
\begin{table}[h!]
    \centering
    \resizebox{\linewidth}{!}{
    \begin{tabular}{lccccccccc}
        \toprule
        \textbf{Training set size} & \textbf{5000} & \textbf{10000} & \textbf{15000} & \textbf{20000} & \textbf{25000} & \textbf{30000} & \textbf{35000} & \textbf{40000} & \textbf{45000} \\
        \midrule
        Count & 1000000 & 1000000 & 1000000 & 1000000 & 1000000 & 1000000 & 1000000 & 1000000 & 1000000 \\
        Mean & 0.546421 & 0.449859 & 0.459965 & 0.458827 & 0.466374 & 0.568733 & 0.538420 & 0.468137 & 0.477996 \\
        Std & 0.327194 & 0.304473 & 0.315945 & 0.309333 & 0.313317 & 0.247745 & 0.238015 & 0.321906 & 0.322428 \\
        Min & -0.175621 & -0.177209 & -0.214879 & -0.177739 & -0.188140 & -0.133989 & -0.112773 & -0.215613 & -0.179760 \\
        25\% & 0.292485 & 0.198430 & 0.201478 & 0.205287 & 0.210050 & 0.407968 & 0.386002 & 0.209109 & 0.215138 \\
        50\% & 0.493732 & 0.381197 & 0.389149 & 0.391185 & 0.398492 & 0.545277 & 0.514971 & 0.396287 & 0.411158 \\
        75\% & 0.763128 & 0.672535 & 0.686655 & 0.681570 & 0.691300 & 0.694669 & 0.655317 & 0.691226 & 0.705576 \\
        Max & 3.851340 & 2.628088 & 3.276558 & 3.169339 & 3.166770 & 4.146131 & 3.460321 & 4.275422 & 3.086741 \\
        \bottomrule
    \end{tabular}}
    \caption{Summary statistics for different sample sizes}
    \label{tab:summary_statistics_samples}
\end{table}
 
Moreover, even though the optimization procedure is random due to the use of random initialization of weights and the use of the Adam optimizer, the resultant profits on the test turn out to be on the same scale when being trained several times with the same hyperparameter configuration, compare also Table~\ref{tab:summary_statistics_parameters_several_runs} where we show the net profit for the optimization procedure as specified in Section~\ref{sec_exa_1}  trained $10$ times independently. 
\begin{table}[h!]
    \centering
    \resizebox{\linewidth}{!}{
    \begin{tabular}{lcccccccccc}
        \toprule
        Count & 1000000 & 1000000 & 1000000 & 1000000 & 1000000 & 1000000 & 1000000 & 1000000 & 1000000 & 1000000 \\
        Mean & 0.452637 & 0.472013 & 0.442239 & 0.474521 & 0.485375 & 0.449283 & 0.437489 & 0.464786 & 0.467977 & 0.461433 \\
        Std & 0.309939 & 0.323847 & 0.306825 & 0.312429 & 0.327781 & 0.305616 & 0.311088 & 0.322367 & 0.309843 & 0.312393 \\
        Min & -0.215139 & -0.163364 & -0.163844 & -0.241484 & -0.172754 & -0.196785 & -0.128142 & -0.127955 & -0.162520 & -0.159971 \\
        25\% & 0.199371 & 0.211284 & 0.190744 & 0.227038 & 0.227661 & 0.199840 & 0.188147 & 0.210150 & 0.217428 & 0.205545 \\
        50\% & 0.380630 & 0.399340 & 0.369056 & 0.405421 & 0.413370 & 0.378361 & 0.354790 & 0.389295 & 0.399316 & 0.392809 \\
        75\% & 0.675014 & 0.697409 & 0.663718 & 0.688814 & 0.705762 & 0.668901 & 0.653276 & 0.680747 & 0.687475 & 0.685806 \\
        Max & 3.252239 & 3.669023 & 2.746479 & 5.333844 & 4.395148 & 2.827598 & 3.379474 & 4.312405 & 3.028918 & 2.850765 \\
        \bottomrule
    \end{tabular}}
    \caption{Summary statistics of the net profit on the test set for several strategies trained with the same hyperparameters.}
    \label{tab:summary_statistics_parameters_several_runs}
\end{table}

\subsubsection{Backtesting with historical option prices }\label{sec_exa_2}
We backtest the strategy trained in Section~\ref{sec_exa_1} on the stocks of \emph{Apple}, \emph{Alphabet}, \emph{Microsoft}, \emph{Google}, and \emph{Meta}. To this end, we consider for each of the companies call options with maturity $24$ March $2023$ for ten different strikes.

The bid and ask prices of these call options and the underlying securities were observed on $33$ trading days ranging from $2$ February $2023$ until $22$ March $2023$.

We apply the strategy trained in Section~\ref{sec_exa_1} to the prices observed on each of the $33$ trading days and evaluate it on the realized values of the $5$ underlying securities at maturity. In Table~\ref{tbl_real_unscaled} and Figure~\ref{fig_real_unscaled} we summarize the net profits of the $33$ strategies. Note that to apply the trained neural network from Section~\ref{sec_exa_1}, we first scale all the financial instruments such that the spot values of the underlying securities equal 1, as described in Section~\ref{sec_exa_1}. Then, after applying the strategies to the scaled inputs, we rescale the values of the involved quantities back to unnormalized values, and we  report in Table~\ref{tbl_real_unscaled} and Figure~\ref{fig_real_unscaled} the net profits for both cases: after rescaling the values of the underlying securities, options, and strikes to unnormalized values, as well as without scaling back.
The results of the backtesting study reveal that even though the neural network from Section~\ref{sec_exa_1} was trained on data extracted at a different day ($25$ April $2023$) involving call options with a different maturity written on other assets, the resultant strategy still allows to trade profitably in the majority of cases, showcasing the robustness of our algorithm.

\begin{table}[h!]
\begin{tabular}{lcc} 
 & \textbf{unscaled} &\textbf{scaled} \\  \hline
count 	&33 &33 \\
mean 	&5.855233 &0.063460\\
std 	&9.706270 &0.089347\\
min 	&-4.225227 &-0.017063\\
25\% 	&-1.475227 &-0.007025\\
50\% 	&1.535172 &0.021668\\
75\% 	&13.032318 &0.093558\\
max 	&32.687988 &0.318201\\\hline 
\end{tabular}
\caption{In the setting of Section~\ref{sec_exa_2}, the table shows the summary statistics of the net profit $\mathcal{I}_{S_T}(K_i,a_i,h_i)-f(\pi_i,a_i,h_i)$ for $i = 1,\dots,33$, where here $S_T\in \R^5$ refers to the observed realization of the  $5$ underlying securities at maturity $T=$ $24$ March $2023$. To apply the trained neural network we first scale the values such that the spot prices of the underlying assets equal 1. The left column shows the values after scaling the values back, whereas the right column shows the statistics directly after applying the neural network to the scaled data.} \label{tbl_real_unscaled}
\end{table}

\begin{figure}[h!]
\begin{center}
\includegraphics[scale=0.4]{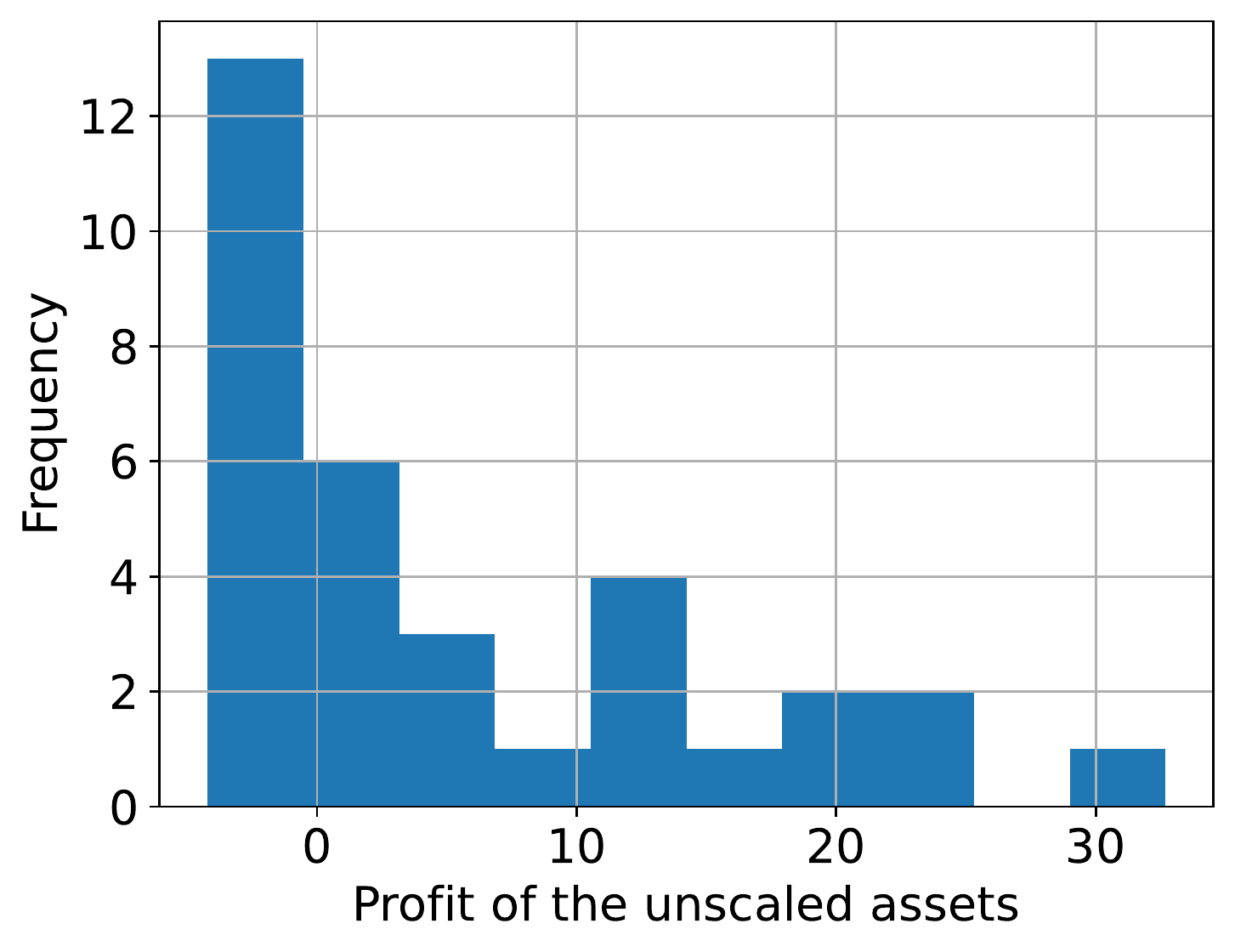}
\includegraphics[scale=0.4]{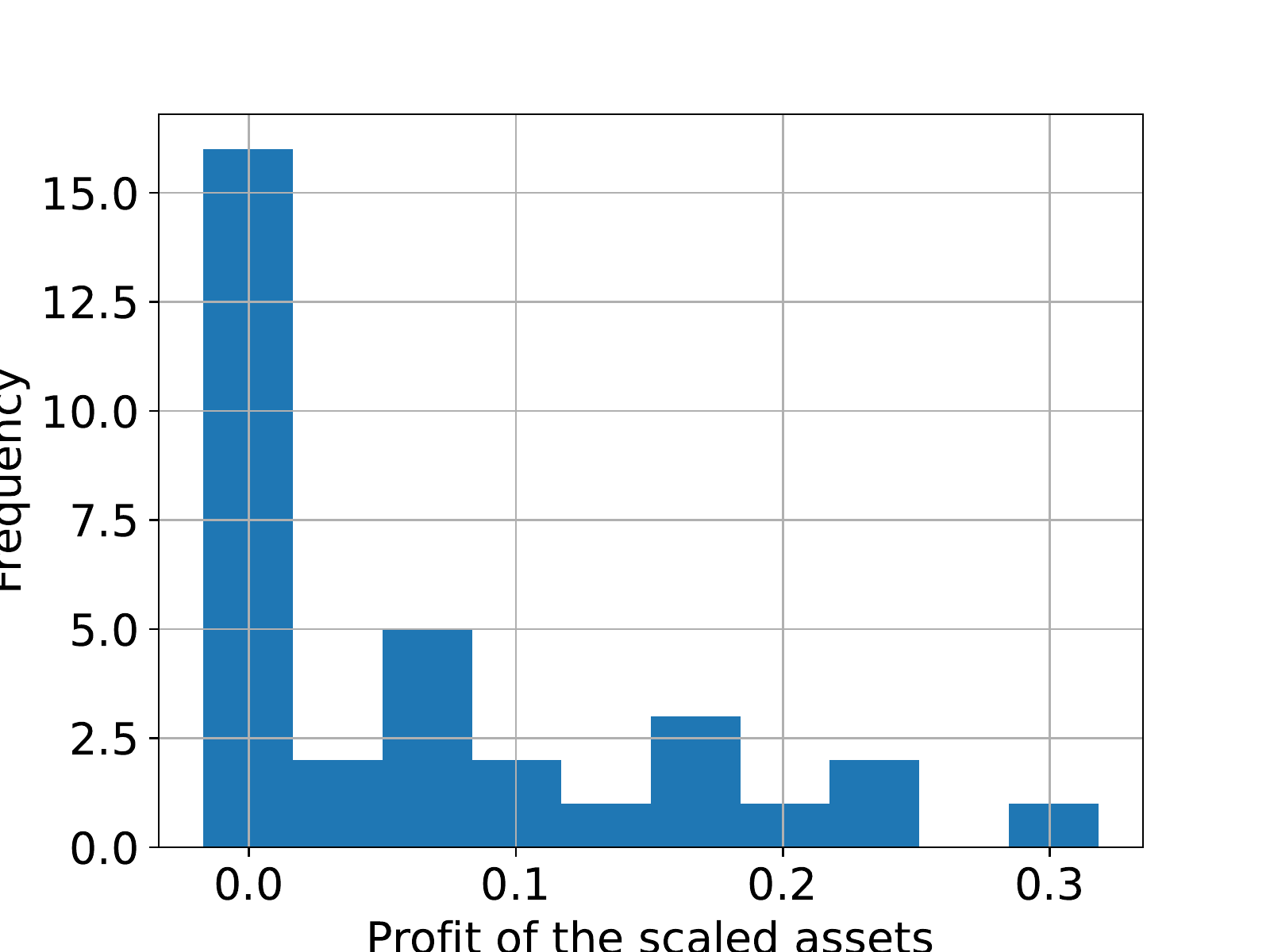}
\end{center}
\caption{In the setting of Section~\ref{sec_exa_2}, the histogram depicts the net profits $\mathcal{I}_{S_T}(K_i,a_i,h_i)-f(\pi_i,a_i,h_i)$ for $i = 1,\dots,33$, where here $S_T\in \R^5$ refers to the observed realization of the $5$ underlying securities at maturity $T=$ $24$ March $2023$.}\label{fig_real_unscaled}
\end{figure}

\newpage
\section{Approximation of optimal solutions of general convex semi-infinite programs by neural networks}\label{sec_convex}
In this section we show for a certain class of convex semi-infinite optimization problems (CSIP) that each of them can be approximately solved by a \textit{single} neural network. More precisely, for every prescribed accuracy $\varepsilon>0$ we show that there exists a \textit{single} neural network which outputs a \textit{feasible} solution which is $\varepsilon$-optimal.
This class of convex semi-infinite problems covers the setting of static arbitrage detection introduced in Section~\ref{sec_arbitrage} as special case. We leave further applications for future research.
%We show in a general setting how convex optimization problems with constraints can be solved by using neural networks. In particular, the setting introduced in Section~\ref{sec_arbitrage} can be considered as a special case of the more general setting that is presented in this section.
%
%As indicated in Section~\ref{sec_applications}, the setting presented in this section allows for a multitude of additional applications.
\subsection{Setting}

Let $\underline{a}\in \R$, let $\K_x \subset \R^{n_x}$ be compact for some $n_x \in \N$, and let $\K_y \subset \R^{n_y}$ be compact and convex for some $n_y \in \N$.

We consider some function 
\[
f: \K_x \times [\underline{a},\infty) \times \K_y \ni (x,a,y) \mapsto f(x,a,y)\in \R,
\]
which we aim to minimize under suitable constraints. To define these constraints we consider  some (possibly uncountable infinite) index set $\mathcal{S}$  as well as for all $s \in \mathcal{S}$ a function 
\[
\K_x \times [\underline{a},\infty) \times \K_y \ni (x,a,y) \mapsto \mathcal{I}_s(x,a,y) \in \R.
\]
Further, let $\K_x \ni x \twoheadrightarrow \Gamma(x)\subseteq [\underline{a},\infty) \times \K_y$ be the correspondence defined by
\[
\Gamma(x):= \left\{ (a,y) \in [\underline{a},\infty) \times \K_y ~\middle|~ -\mathcal{I}_s(x,a,y) \leq 0 \text{ for all } s \in \mathcal{S}\right\},
\]
that defines the set of \emph{feasible} elements from $[\underline{a},\infty) \times \K_y$. To define our optimization problem, we now consider the function $\K_x \ni x \mapsto V(x) \in \R$ defined by
\begin{equation}\label{eq:CSIP}
V(x):= \inf_{(a,y) \in \Gamma(x)} f(x,a,y).
\end{equation}
We impose the following assumptions on the above defined quantities.
\begin{asu}[Assumptions on $f$]\label{asu_f}~
\begin{itemize}
\item[(i)] There exists some $L_f\geq 1$ such that the function $[\underline{a},\infty) \times \K_y \ni (a,y) \mapsto f(x,a,y)$ is $L_f$-Lipschitz continuous for all $x \in \K_x$.
\item[(ii)] The function $ \K_x \times [\underline{a},\infty) \times \K_y \ni(x,a,y) \mapsto f(x,a,y)$ is continuous.
\item[(iii)]  The function $[\underline{a},\infty)\times \K_y \ni (a,y) \mapsto f(x,a,y)$ is convex for all $x \in \K_x$.
\item[(iv)] The function $[\underline{a},\infty)\ni a \mapsto f(x,a,y)$ is increasing for all $x \in \K_x$, $y\in \K_y$.
\item[(v)] We have that 
\[
\mathcal{L}_{a,f}:= \inf_{x \in \K_x, y \in \K_y \atop a_1,a_2 \in [\underline{a},\infty), a_1 \neq a_2} \frac{|f(x,a_1,y)-f(x,a_2,y)|}{|a_1-a_2|}>0.
\]
\end{itemize}
\end{asu}
\begin{asu}[Assumptions on $\mathcal{I}_s$]\label{asu_IS}~
\begin{itemize}
\item[(i)] There exists some ${L}_{\mathcal{I}} \geq 1$ such that $\K_x \times [\underline{a},\infty) \times \K_y \ni (x,a,y) \mapsto \mathcal{I}_{s}(x,a,y)$ is ${L}_{\mathcal{I}}$-Lipschitz continuous for all $s \in \mathcal{S}$.
\item[(ii)]The function $[\underline{a},\infty) \times \K_y  \ni (a,y) \mapsto \mathcal{I}_s(x,a,y)$ is concave for all $x \in \K_x, s \in \mathcal{S}$.
\item[(iii)] The function $[\underline{a},\infty) \ni a \mapsto \mathcal{I}_s(x,a,y)$ is increasing for all $x \in \K_x, y \in \K_y, s \in \mathcal{S}$.
\item[(iv)] We have that 
\[
\mathcal{L}_{a,\mathcal{I}}:=\inf_{s \in \mathcal{S}} \inf_{x \in \K_x, \atop  y \in \K_y} \inf_{a_1 \neq a_2, \atop a_1,a_2 \in [\underline{a},\infty)} \frac{\left|\mathcal{I}_{s}(x,a_1,y)-\mathcal{I}_{s}(x,a_2,y)\right|}{|a_1-a_2|} >0.
\]
\item[(v)] We have that  $$\inf_{s \in \mathcal{S}, \atop x \in \K_x, y \in \K_y} \mathcal{I}_s(x,\underline{a},y)>-\infty.$$

\end{itemize}
\end{asu}
\begin{asu}[Assumptions on $\K_y$]\label{asu_KxKy}~
There exists $0 < r<1$ and $ L_r \geq 1$ such that for all $0 < \delta< r$ there exists some closed and convex  set $C_{y,\delta} \subset \K_y$ such that for all $y' \in C_{y,\delta} $, $y \in \R^{n_y}$ we have 
\[
\|y'-y\| \leq \delta \Rightarrow y\in  \K_y,
\]
and 
\[
\max_{y \in \K_y} \min_{y' \in C_{y,\delta}} \left\{\|y-y'\|\right\} \leq L_r \delta.
\]
\end{asu}

\begin{rem}[On the assumptions]\label{rem_assumptions}~
\begin{itemize}

\item[(i)]
Let 
\begin{equation}\label{eq_defn_a_UB}
\overline{a}^{\operatorname{UB}}:=\underline{a}+\frac{1}{\mathcal{L}_{a,\mathcal{I}}}\left|\inf_{s \in \mathcal{S}, \atop x \in \K_x, y \in \K_y} \mathcal{I}_s(x,\underline{a},y)\right|\in [\underline{a},\infty)
\end{equation}
Then, we have $\mathcal{I}_s(x,\overline{a}^{\operatorname{UB}},y) \geq 0$ for all $x \in \K_x$, $y \in \K_y$, $s \in \mathcal{S}$. In particular, $\Gamma(x) \neq \emptyset$ for all $x \in \K_x$. Indeed, by using the definition of $\overline{a}^{\operatorname{UB}}$ and $ \mathcal{L}_{a,\mathcal{I}} $ together with Assumption~\ref{asu_IS}~(v) we have for all $x\in \K_x$, $y \in \K_y$, $s \in \mathcal{S}$ that
\begin{align*}
\mathcal{I}
_s(x,\overline{a}^{\operatorname{UB}},y)&=\mathcal{I}
_s(x,\overline{a}^{\operatorname{UB}},y)-\mathcal{I}
_s(x,\underline{a},y)+\mathcal{I}
_s(x,\underline{a},y)\\
&\geq \mathcal{L}_{a,\mathcal{I}} \cdot \overline{a}^{\operatorname{UB}}-\mathcal{L}_{a,\mathcal{I}}  \cdot \underline{a} + \inf_{s \in \mathcal{S}, \atop x \in \K_x, y \in \K_y}\mathcal{I}
_s(x,\underline{a},y) \\
&= \mathcal{L}_{a,\mathcal{I}}  \cdot \left( \underline{a}+\frac{1}{\mathcal{L}_{a,\mathcal{I}}} \cdot \left|\inf_{s \in \mathcal{S}, \atop x \in \K_x, y \in \K_y}\mathcal{I}
_s(x,\underline{a},y)\right|\right)-\mathcal{L}_{a,\mathcal{I}}  \cdot \underline{a} + \inf_{s \in \mathcal{S}, \atop x \in \K_x, y \in \K_y}\mathcal{I}
_s(x,\underline{a},y) \geq  0.
\end{align*}
\item[(ii)]Assumption~\ref{asu_f}~(ii) and ~(iv), and the assumption that  $\K_x$ and $\K_y$ are compact ensure together with Remark~\ref{rem_assumptions}~(i) that $V(x) \in \R$ for all $x \in \K_x$. Indeed, for any $x \in \K_x$ and $(a,y) \in \Gamma(x)$, we have $f(x,a,y) \geq f(x,\underline{a},y)\geq \inf_{x\in \K_x, y \in \K_y}f(x,\underline{a},y)>-\infty$.
\item[(iii)] Assumption~\ref{asu_f}~(iv) and ~(v) ensure that the function $f$ is strictly increasing in $a\in[\underline{a},\infty)$ uniformly in $x\in \K_x$, $y\in\K_y$. Analogously, Assumption~\ref{asu_IS}~(iii) and ~(iv) ensure that the function $\mathcal{I}$ is strictly increasing in $a\in[\underline{a},\infty)$ uniformly in $x\in \K_x$, $y\in\K_y$, $s\in \mathcal{S}$.
\item[(iv)] Note that Assumption~\ref{asu_KxKy} roughly speaking means that the geometry of $\K_y\subseteq \R^{n_y}$ is similar to a box. Indeed, if $\K_y=\times_{i=1}^{n_y} [l_i,u_i]$ for some $-\infty<\l_i<u_i<\infty$, $i=1,\dots,n_y$, then one can choose $0<r<\min_{i} (\frac{u_i-l_i}{2})>0$, $C_{y,\delta}:=\times_{i=1}^{n_y} [l_i+\delta,u_i-\delta]\subseteq \K_y$, and $L_r:=\sqrt{n_y}$. 
\end{itemize}
\end{rem}
Our main result of this section establishes the existence of a \textit{single} neural network such that for any input $x\in \K_x$ defining the (CSIP) in \eqref{eq:CSIP} the neural network outputs a \textit{feasible} solution which is \textit{$\varepsilon$-optimal}.
% for the convex semi-infinite program defined in \eqref{eq:CSIP}.
%\subsection{Main result}
%Our main result now establishes the existence of a neural network approximating the solution $V(x)$. In particular, it is noteworthy, that the neural network itself is contained in the class of feasible elements and is not only approximating feasible solutions.

\begin{thm}[Single neural network provides corresponding feasible $\varepsilon$-optimizer for class of~(CSIP)]\label{thm_main}
Let Assumptions~\ref{asu_f}, \ref{asu_IS}, and \ref{asu_KxKy} hold true. Then, for all $\varepsilon>0$ there exists a neural network $\NN \in  \mathfrak{N}_{n_x,1+n_y}$ such that 
\begin{itemize}
\item[(i)] $\NN(x) := \left(\NN_a(x), \NN_y(x)\right) \in \Gamma(x)$ for all $x \in \K_x$,
\item[(ii)] $f\left(x, \NN_a(x), \NN_y(x)\right) - V(x) \leq \varepsilon$ for all $x \in \K_x$.
\end{itemize}
\end{thm}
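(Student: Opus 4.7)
The plan is to construct a continuous, strictly feasible, $\varepsilon/2$-suboptimal selector $\Phi:\K_x\to[\underline{a},\infty)\times \K_y$ and then invoke Proposition~\ref{lem_universal} to replace it by a neural network $\NN$ whose output remains feasible and $\varepsilon$-suboptimal. Feasibility will transfer from $\Phi$ to $\NN$ via the Lipschitz constant $L_{\mathcal{I}}$ and a strict feasibility margin built into $\Phi$; the extra objective gap is controlled by the Lipschitz constant $L_f$ of $f$ in $(a,y)$.

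As preparation, I first reduce to a compact range for $a$. Since $(\overline{a}^{\operatorname{UB}},y)\in\Gamma(x)$ for every $y\in\K_y$ by Remark~\ref{rem_assumptions}(i), the function $V$ is bounded above on $\K_x$; combined with the strict monotonicity $\mathcal{L}_{a,f}>0$, this forces every $(\varepsilon/6)$-optimizer to lie in $\K_a\times\K_y$ for a computable box $\K_a:=[\underline{a},\overline{a}]$. I also record that $V$ is continuous: for $(a,y)\in\Gamma(x)$, Assumption~\ref{asu_IS}(i), (iii), (iv) imply that the shifted point $(a+L_{\mathcal{I}}\|x-x'\|/\mathcal{L}_{a,\mathcal{I}},y)$ lies in $\Gamma(x')$, and Assumption~\ref{asu_f}(i)--(ii) together with uniform continuity of $f$ on the compact $\K_x\times\K_a\times\K_y$ then yield a uniform modulus of continuity for $V$.

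Next I build $\Phi$. Fix small parameters $\delta,\eta>0$, to be tuned later. Using compactness, choose a finite $\delta$-net $\{x_1,\dots,x_m\}\subset\K_x$. At each $x_i$, pick $(\hat a_i,\hat y_i)\in\Gamma(x_i)\cap(\K_a\times C_{y,\eta})$ with $f(x_i,\hat a_i,\hat y_i)\leq V(x_i)+\varepsilon/6$: any near-optimal $(a^*,y^*)$ can be projected onto $C_{y,\eta}$ at distance $\leq L_r\eta$ by Assumption~\ref{asu_KxKy}, and the resulting $y$-perturbation can be absorbed by lifting $a$ by $L_{\mathcal{I}} L_r\eta/\mathcal{L}_{a,\mathcal{I}}$, costing only $O(\eta)$ in $f$-value. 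Attach a continuous partition of unity $\{\phi_i\}$ subordinate to the $\delta$-balls around the $x_i$ and put
\[
\Phi_0(x):=\sum_{i=1}^m\phi_i(x)(\hat a_i,\hat y_i).
\]
Convexity of $\K_a\times C_{y,\eta}$ keeps $\Phi_0$ inside it. Concavity of $\mathcal{I}_s$ in $(a,y)$ (Assumption~\ref{asu_IS}(ii)) together with its $L_{\mathcal{I}}$-Lipschitz dependence on $x$ gives
\[
\mathcal{I}_s(x,\Phi_0(x))\geq\sum_i\phi_i(x)\mathcal{I}_s(x,\hat a_i,\hat y_i)\geq -L_{\mathcal{I}}\delta\qquad(\forall s,x).
\]
Setting $\Phi(x):=(\Phi_{0,a}(x)+c,\Phi_{0,y}(x))$ with $c:=(L_{\mathcal{I}}\delta+\eta)/\mathcal{L}_{a,\mathcal{I}}$ then yields $\mathcal{I}_s(x,\Phi(x))\geq \eta$ for all $s,x$, i.e.\ a uniform strict feasibility margin $\eta$. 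Using convexity and Lipschitzness of $f$ in $(a,y)$, uniform continuity of $f$ in $x$, and continuity of $V$, a short calculation gives $f(x,\Phi(x))-V(x)\leq\varepsilon/2$ provided $\delta,\eta$ are small enough.

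Finally, by Proposition~\ref{lem_universal}, for any prescribed $\rho>0$ there exists $\NN\in\mathfrak{N}_{n_x,1+n_y}$ with $\sup_{x\in\K_x}\|\NN(x)-\Phi(x)\|\leq\rho$. Choosing $\rho\leq\min\{\eta/L_{\mathcal{I}},\,\eta/L_r,\,\varepsilon/(2L_f)\}$ yields simultaneously: (a) $\NN_y(x)\in\K_y$, since $\Phi_y(x)\in C_{y,\eta}$ has an $\eta$-neighborhood inside $\K_y$ by Assumption~\ref{asu_KxKy}, and $\NN_a(x)\geq\Phi_a(x)-\rho\geq\underline{a}$; (b) $\mathcal{I}_s(x,\NN(x))\geq\eta-L_{\mathcal{I}}\rho\geq0$ for every $s\in\mathcal{S}$, so $\NN(x)\in\Gamma(x)$; and (c) $f(x,\NN(x))\leq f(x,\Phi(x))+L_f\rho\leq V(x)+\varepsilon$. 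The main obstacle is the third step above: keeping the partition-of-unity gluing approximately feasible across different $x$-values hinges critically on the concavity of $\mathcal{I}_s$ in $(a,y)$ (Assumption~\ref{asu_IS}(ii)) together with the uniform strict monotonicity $\mathcal{L}_{a,\mathcal{I}}>0$ (Assumption~\ref{asu_IS}(iv)), without which the $O(\delta)$ infeasibility induced by $x$-perturbations could not be repaired at an $O(\delta)$ cost in $f$-value, and the universal-approximation-plus-margin trick would fail.
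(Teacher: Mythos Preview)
Your argument is correct and takes a genuinely different route from the paper. The paper obtains a continuous $\varepsilon/2$-optimal selector by first proving that the correspondence $x\twoheadrightarrow \overline{\mathcal{M}_\varepsilon(x)}$ of near-optimizers is nonempty, convex, closed and lower hemicontinuous (via Berge's maximum theorem and a direct hemicontinuity check), then applying Michael's selection theorem, then projecting onto $[\underline{a}+\delta_0,\overline{a}-\delta_0]\times C_{y,\delta_0}$, approximating by a neural network, and finally shifting the $a$-component upward to restore feasibility. You bypass the set-valued analysis entirely: you build the continuous selector by hand, choosing near-optimal feasible points at the nodes of a finite $\delta$-net (already pushed into $C_{y,\eta}$ via the same project-then-lift-$a$ trick the paper uses later), gluing them with a partition of unity, and exploiting concavity of $\mathcal{I}_s$ and convexity of $f$ in $(a,y)$ to control the resulting infeasibility and suboptimality by $O(\delta)$. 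Your approach is more elementary and arguably more constructive; the paper's is more systematic and isolates the abstract reason (lower hemicontinuity of the $\varepsilon$-optimizer correspondence) why a continuous selector exists. A few bookkeeping points you should tighten in a full write-up: the condition guaranteeing $\NN_y(x)\in\K_y$ is simply $\rho\le\eta$ by Assumption~\ref{asu_KxKy} (your stated bounds do imply this since $L_{\mathcal{I}}\ge1$, but the $\eta/L_r$ term is misplaced); the supports of a partition of unity subordinate to a $\delta$-net cover carry radius $2\delta$ rather than $\delta$, a harmless constant factor; and you should record that the various $a$-lifts keep $\Phi_a$ below a fixed $\overline{a}$ so that uniform continuity of $f$ on a compact set is available throughout.
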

\vspace{0.1cm}
\noindent
The proof of Theorem~\ref{thm_main} is provided in the next section. 

%\newpage
\section{Proofs and Auxiliary Results}\label{sec_proofs}
In this section, we present the proofs of the main results from Section~\ref{sec_arbitrage} and ~\ref{sec_convex}.
\subsection{Proofs of Section~\ref{sec_arbitrage}}
The proof of Proposition~\ref{prop_arbitrage} consists of verifying that the optimization problem~\eqref{eq_defn_Psi_arbitrage_setting} is included in the general (CSIP) introduced in Section~\ref{sec_convex}. 
Then, applying Proposition~\ref{prop_arbitrage} together with the universal approximation property of neural networks allows to conclude Theorem~\ref{cor_ftpa} and Theorem~\ref{thm:epsilon-arbitrage}.
\begin{proof}[Proof of Proposition~\ref{prop_arbitrage}]
We verify that the conditions imposed in Theorem~\ref{thm_main} are satisfied under Assumption~\ref{asu_arbitrage} with $x \leftarrow (K,\pi)$, $a \leftarrow a$, $y \leftarrow h$, $V \leftarrow V$ in the notation of Theorem~\ref{thm_main}. To that end, note that Assumption~\ref{asu_f} holds with $\mathcal{L}_{a, f}=1$, and $L_f = \max\{1,~\overline{\pi}\}\sqrt{1+2N}$. Moreover, note that for all $x \in \K_x$, $(a,0) \in \left([\underline{a},\infty) \cap [0,\infty)\right)\times [0,\overline{H}]^{2N}$ satisfies
$\mathcal{I}_S(x,a,0) \geq 0$ for all $S \in \mathcal{S}$. Hence, Assumption~\ref{asu_IS} holds with $\mathcal{L}_{a, \mathcal{I}} = 1$ and $L_{\mathcal{I}} = \max\{1,~2\overline{H}L_\Psi,~C_{\Psi}\}\sqrt{3N+1}$. Furthermore, for any $0<r<1$ and any $0 < \delta<1$, Assumption~\ref{asu_KxKy} is satisfied with $C_{y,\delta}=[\delta,\overline{H}-\delta]^{2N}$ and $L_r=\sqrt{2N}$. Therefore, the result follows by Theorem~\ref{thm_main}.

%This follows from Theorem~\ref{thm_main} as the considered setting  in Section~\ref{sec_arbitrage} is a specialized setting of the setting provided in Section~\ref{sec_convex}, where we choose in the notation of Section~\ref{sec_convex}: $K_a = [\underline{a},\overline{a}]$, $\K_x = [0,\overline{\pi}]^{2N} $, $\K_y= [0,\overline{H}]^{2N}$. By definition, as $\K_y$ is defined as an interval, Assumption~\ref{asu_KxKy} is then directly fulfilled. The functions $f$ and $\mathcal{I}_s= \mathcal{I}_S$ are defined in \eqref{eq_defn_f_arbitrage_setting} and \eqref{eq_defn_Is_arbitrage_setting}. Note that $f$ fulfils by definition Assumption~\ref{asu_f}, in particular we have $\mathcal{L}_{a,f} = 1$.  Moreover $\Gamma$ and $\Psi$ are defined in \eqref{eq_defn_Gamma_arbitrage_setting} and \eqref{eq_defn_Psi_arbitrage_setting}. The Lipschitz constant $L_f$ is defined via the Lipschitz constant $L_{\Psi}$, and to fulfil Assumption~\ref{asu_IS} we note that $\mathcal{L}_{a, \mathcal{I}} =  \mathcal{L}_{a, f}=1$.
\end{proof}

\begin{proof}[Proof of Theorem~\ref{cor_ftpa}]
Let $(K,\pi)\in [0, \overline{K}]^{N}\times [0, \overline{H}]^{2N}$.
Assume first there exists \emph{model-free arbitrage}, i.e., we have $V(K,\pi)<0$. Then, we choose $\varepsilon \in \R$ with  $0<\varepsilon<-V(K,\pi)$ and obtain with Proposition~\ref{prop_arbitrage} the existence of a neural network $ \NN =(\NN_a,\NN_h) \in  \mathfrak{N}_{3N,1+2N} $ with $\NN(K,\pi) \in \Gamma(K)$ and with 
 $f\left(\pi,\NN_a(K,\pi),\NN_h(K,\pi)\right)-V(K,\pi) \leq \varepsilon$ which implies
 \[
 f\left(\pi,\NN_a(K,\pi),\NN_h(K,\pi)\right) \leq \varepsilon+V(K,\pi) <-V(K,\pi)+V(K,\pi) =0.
 \]
Conversely, if conditions (i) and (ii) hold, then the output of the neural network constitutes a \emph{model-free arbitrage} opportunity.
\end{proof}
%
%
%%%%%%%%%%%%%%%%%%%%%%%%%%%%%%%%%%%%%%
\begin{proof}[Proof of Theorem~\ref{thm:epsilon-arbitrage}]
Let $\varepsilon>0$. By Proposition~\ref{prop_arbitrage}, there exists a neural network  $\NN \in  \mathfrak{N}_{3N,1+2N} $ such that for every $(K,\pi) \in [0, \overline{K}]^{N} \times [0, \overline{\pi}]^{2N}$
\begin{equation*}
\begin{aligned}
&\NN(K,\pi):=\left(\NN_a(K,\pi),\NN_h(K,\pi)\right) \in \Gamma(K) \\
 \mbox{ and } \
 &f\left(\pi,\NN_a(K,\pi),\NN_h(K,\pi)\right)-V(K,\pi) \leq\varepsilon- \delta.
\end{aligned}
\end{equation*}
Moreover, for every $(K,\pi) \in [0, \overline{K}]^{N} \times [0, \overline{\pi}]^{2N}$, if the  market with respect to $(K,\pi)$ admits model-free static arbitrage of magnitude $\varepsilon$, then by definition $V(K,\pi)\leq -\varepsilon$. This  implies that $\NN(K,\pi):=\left(\NN_a(K,\pi),\NN_h(K,\pi)\right)$ provides a model-free static arbitrage strategy of magnitude $\delta$.

If the  market with respect to $(K,\pi)$ admits no model-free static arbitrage, then $V(K,\pi)=0$ and hence 
\[
f\left(\pi,\NN_a(K,\pi),\NN_h(K,\pi)\right) \leq V(K,\pi)+\varepsilon- \delta =\varepsilon-\delta.
\] 
\end{proof}
%
%%%%%%%%%%%%%%%%%%%%%%%%%%%%%%
\noindent
It remains to prove Theorem~\ref{thm_main}, which is our main \textit{technical} result. Its proof is provided in the next subsection.
%%%%%%%%%%%%%%%%%%%%%%%%%%%%%%%%
%\subsection{Proofs of Section~\ref{sec_exa}}
%\begin{proof}[Proof of Lemma~\ref{lem_algo2}]
%The assertion follows from \cite[Proposition 2.12]{neufeld2022detecting} where a multiperiod setting for $\mathcal{G}$-arbitrage is considered. We set $\mathcal{G}= \sigma(S)$and consider each of the traded derivatives as single underlying securities.
%\end{proof}
\subsection{Proofs of Section~\ref{sec_convex}}
The main idea of the proof of Theorem~\ref{thm_main} is to show that the correspondence of feasible $\varepsilon$-optimizers of the convex semi-infinite program (CSIP) defined in \eqref{eq:CSIP}, as a function of the input $x\in \K_x$ of the (CSIP),  is  
non-empty, convex, closed, and lower hemicontinuous\footnote{We refer to, e.g., \cite[Chapter~17]{Aliprantis} as reference for the standard notions of lower/upper (hemi)continuity of correspondences.}, where the major difficulty lies in the establishment of the lower hemicontinuity. This then allows us to apply Michael's continuous selection theorem (\cite{michael}), which together with the universal approximation property of neural networks leads to the existence of a \textit{single} neural network which for any input $x\in \K_x$ defining the (CSIP) in \eqref{eq:CSIP} outputs a \textit{feasible} solution which is $\varepsilon$-optimal. We highlight that no strict-convexity of the map $(a,y)\mapsto f(x,a,y)$ for any fixed $x$ is assumed in \eqref{eq:CSIP}, hence one cannot expect uniqueness of optimizers for the (CSIP), which in turn means that one cannot expect to have lower hemicontinuity of the correspondence of feasible true optimizers of the (CSIP) in \eqref{eq:CSIP}. 

\subsubsection{Auxiliary Results}\label{sec_auxiliary}
Before reporting the proof of Theorem~\ref{thm_main}, we establish several auxiliary results which are necessary for the proof of the main result from Theorem~\ref{thm_main}. 

For all of the auxiliary results from Section~\ref{sec_auxiliary} we assume the validity of Assumption~\ref{asu_f}, Assumption~\ref{asu_IS} and Assumption~\ref{asu_KxKy}. Moreover, from now on, we define the following quantity

\begin{equation}\label{eq_defn_a_UB_2}
\overline{a}^{\operatorname{UB}}:=\underline{a}+\frac{1}{\mathcal{L}_{a,\mathcal{I}}}\left|\inf_{s \in \mathcal{S}, \atop x \in \K_x, y \in \K_y} \mathcal{I}_s(x,\underline{a},y)\right|\in [\underline{a},\infty).
\end{equation}

\begin{lem} \label{lem_convex_1}~
\begin{itemize}
\item[(i)]
Let $a \in [\underline{a},\infty)$ such that $a\geq \overline{a}^{\operatorname{UB}}$.
Then, we have that $\mathcal{I}_s(x,a,y) \geq 0 $ for all $x \in \K_x$, $y \in \K_y$, $s \in \mathcal{S}$.
\item[(ii)] Let $a \in [\underline{a},\infty)$ such that $a \geq \overline{a}^{\operatorname{UB}}+\frac{1}{\mathcal{L}_{a,f}}$. Then, for all $x \in \K_x$ and for all $y \in \K_y$ we have $f(x,a,y)-V(x) \geq 1$.
\end{itemize}
\end{lem}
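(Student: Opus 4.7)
The plan is to derive both parts essentially directly from the monotonicity/uniform strict monotonicity assumptions, leveraging what was already shown in Remark~\ref{rem_assumptions}~(i).

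For part~(i), I would first recall from Remark~\ref{rem_assumptions}~(i) that the choice of $\overline{a}^{\operatorname{UB}}$ in \eqref{eq_defn_a_UB_2} already ensures $\mathcal{I}_s(x, \overline{a}^{\operatorname{UB}}, y) \geq 0$ for all $x \in \K_x$, $y \in \K_y$, $s \in \mathcal{S}$. The claim for arbitrary $a \geq \overline{a}^{\operatorname{UB}}$ then follows immediately from Assumption~\ref{asu_IS}~(iii), i.e.\ the monotonicity of $a \mapsto \mathcal{I}_s(x,a,y)$, which gives $\mathcal{I}_s(x,a,y) \geq \mathcal{I}_s(x,\overline{a}^{\operatorname{UB}}, y) \geq 0$.

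For part~(ii), I would first use part~(i) to argue that $(\overline{a}^{\operatorname{UB}}, y) \in \Gamma(x)$ for every $x \in \K_x$ and every $y \in \K_y$, so by definition of $V$ in \eqref{eq:CSIP} we have
\[
V(x) \;\leq\; f(x, \overline{a}^{\operatorname{UB}}, y) \qquad \text{for all } x \in \K_x,\ y \in \K_y.
\]
Combining the monotonicity of $f$ in $a$ from Assumption~\ref{asu_f}~(iv) with the uniform lower bound on the difference quotient in Assumption~\ref{asu_f}~(v), I would then observe that for $a \geq \overline{a}^{\operatorname{UB}}$ one has the Lipschitz-type lower estimate
\[
f(x,a,y) - f(x, \overline{a}^{\operatorname{UB}}, y) \;\geq\; \mathcal{L}_{a,f}\,(a - \overline{a}^{\operatorname{UB}}).
\]
Plugging in $a \geq \overline{a}^{\operatorname{UB}} + 1/\mathcal{L}_{a,f}$ yields $f(x,a,y) - f(x, \overline{a}^{\operatorname{UB}}, y) \geq 1$, and combining with the bound $V(x) \leq f(x, \overline{a}^{\operatorname{UB}}, y)$ above finishes the proof.

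I do not foresee any real obstacle here: both statements are straightforward monotonicity arguments built directly on the content of Remark~\ref{rem_assumptions}~(i) and on Assumptions~\ref{asu_f}~(iv)--(v) and \ref{asu_IS}~(iii). The only subtle point is to remember that the lower bound from Assumption~\ref{asu_f}~(v) is stated as an absolute value, so one has to invoke the monotonicity from Assumption~\ref{asu_f}~(iv) in order to drop the absolute value and conclude $f(x,a,y) - f(x, \overline{a}^{\operatorname{UB}}, y) \geq \mathcal{L}_{a,f}(a - \overline{a}^{\operatorname{UB}})$ with the correct sign; the analogous remark applies implicitly in part~(i) via Assumption~\ref{asu_IS}~(iii).
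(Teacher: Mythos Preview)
Your proposal is correct and follows essentially the same approach as the paper. The only cosmetic difference is that for part~(i) you invoke Remark~\ref{rem_assumptions}~(i) directly, whereas the paper re-derives the inequality $\mathcal{I}_s(x,\overline{a}^{\operatorname{UB}},y)\geq 0$ from scratch; part~(ii) is identical in structure, including the use of Assumption~\ref{asu_f}~(iv) to resolve the sign in the lower bound coming from Assumption~\ref{asu_f}~(v).
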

\begin{proof}~
\begin{itemize}
\item[(i)]
Let $a \in [\underline{a},\overline{a}]$ such that $a\geq \overline{a}^{\operatorname{UB}}$. Further, let $x\in \K_x$, $y \in \K_y$, $s \in \mathcal{S}$. Then, we have by the  monotonicity of $\mathcal{I}_s$ on $[\underline{a},\overline{a}]$ (stated in Assumption~\ref{asu_IS}~(iii)) that
\begin{equation}\label{eq_proof_51_first_ineq}
\mathcal{I}_s(x,a,y) \geq \mathcal{I}_s(x,\overline{a}^{\operatorname{UB}},y) = \mathcal{I}_s(x,\overline{a}^{\operatorname{UB}},y)-\mathcal{I}_s(x,\underline{a},y)+\mathcal{I}_s(x,\underline{a},y).
\end{equation}
By using the above inequality \eqref{eq_proof_51_first_ineq}, Assumption~\ref{asu_IS}~(iv), and the definition of $\overline{a}^{\operatorname{UB}}$ we then have 
\begin{align*}
\mathcal{I}_s(x,a,y) &\geq \mathcal{L}_{a,\mathcal{I}} \cdot \left(\overline{a}^{\operatorname{UB}}-\underline{a}\right)+\inf_{s \in \mathcal{S}, \atop x \in \K_x, y \in \K_y} \mathcal{I}_s(x,\underline{a},y)\\
&=\mathcal{L}_{a,\mathcal{I}} \cdot \overline{a}^{\operatorname{UB}}-\mathcal{L}_{a,\mathcal{I}}\cdot \underline{a}+\inf_{s \in \mathcal{S}, \atop x \in \K_x, y \in \K_y} \mathcal{I}_s(x,\underline{a},y)\\
&=\mathcal{L}_{a,\mathcal{I}} \cdot\left(\underline{a}+\frac{1}{\mathcal{L}_{a,\mathcal{I}}} \left| \inf_{s \in \mathcal{S}, \atop x \in \K_x, y \in \K_y} \mathcal{I}_s(x,\underline{a},y)\right| \right)-\mathcal{L}_{a,\mathcal{I}}\cdot \underline{a}+\inf_{s \in \mathcal{S}, \atop x \in \K_x, y \in \K_y} \mathcal{I}_s(x,\underline{a},y) \\
&= \left| \inf_{s \in \mathcal{S}, \atop x \in \K_x, y \in \K_y} \mathcal{I}_s(x,\underline{a},y) \right| +  \inf_{s \in \mathcal{S}, \atop x \in \K_x, y \in \K_y} \mathcal{I}_s(x,\underline{a},y) \geq 0.
\end{align*}
\item[(ii)] First note that, by the assertion from (i), we have $(\overline{a}^{\operatorname{UB}},y) \in \Gamma(x)$ for all $y \in \K_y$ and hence \begin{equation} \label{eq_fgeqpsi}
f(x,\overline{a}^{\operatorname{UB}},y) \geq V(x) \text{ for all } x \in \K_x,~y \in \K_y.
\end{equation}
Then, as by assumption $a \geq \overline{a}^{\operatorname{UB}}+\frac{1}{\mathcal{L}_{a,f}}$, we have for all $x \in \K_x$ and for all $y \in \K_y$ by Assumption~\ref{asu_f}~(iv), by Assumption~\ref{asu_f}~(v), and by  \eqref{eq_fgeqpsi}, that
\begin{align*}
f(x,a,y)-V(x) &=f(x,a,y) -f(x,\overline{a}^{\operatorname{UB}},y)+f(x,\overline{a}^{\operatorname{UB}},y)-V(x) \\
& \geq \mathcal{L}_{a,f} \cdot (a- \overline{a}^{\operatorname{UB}})+f(x,\overline{a}^{\operatorname{UB}},y)-V(x)\\
&\geq \mathcal{L}_{a,f}  \cdot (a- \overline{a}^{\operatorname{UB}}) \geq 1.
\end{align*}
\end{itemize}
\end{proof}

From now on, let 
\begin{equation}\label{eq_defn_overline_a}
\overline{a}:= \overline{a}^{\operatorname{UB}}+\frac{1}{\mathcal{L}_{a,f}}+2,
\end{equation}
where $\overline{a}^{\operatorname{UB}}$ is defined in \eqref{eq_defn_a_UB_2}. Moreover, we define the correspondence
\begin{equation}\label{eq_defn_Gamma_a}
X_x \ni x \twoheadrightarrow  \Gamma_{\overline{a}}(x):= \left\{(a,y) \in \Gamma(x)~|~a \leq \overline{a}\right\}= \left\{(a,y)\in [\underline{a},\overline{a}] \times \K_y~|~\mathcal{I}_s(x,a,y) \geq 0 \text{ for all } s \in \mathcal{S}\right\}.
\end{equation}

\begin{lem}\label{lem_new}
Let $\overline{a}$ be defined in \eqref{eq_defn_overline_a}. Moreover, let $\K_x \ni x \mapsto \Gamma_{\overline{a}}(x)$ be defined in \eqref{eq_defn_Gamma_a}. Then, for all $x \in \K_x$, $\Gamma_{\overline{a}}(x)$ is nonempty, and for all $x \in \K_x$
\[
V_{\overline{a}}(x):= \inf_{(a,y) \in \Gamma_{\overline{a}}(x)} f(x,a,y)=\inf_{(a,y) \in \Gamma (x)} f(x,a,y)=V(x).
\]
 \end{lem}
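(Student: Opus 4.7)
The plan is to prove the two assertions separately, both as essentially immediate consequences of Lemma~\ref{lem_convex_1}(i) and the monotonicity of $f$ in the $a$-coordinate stated in Assumption~\ref{asu_f}(iv).

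For nonemptiness, fix $x\in \K_x$. Since $\overline{a}\geq \overline{a}^{\operatorname{UB}}$ by the definition in \eqref{eq_defn_overline_a}, the value $a=\overline{a}^{\operatorname{UB}}\in[\underline{a},\overline{a}]$. Lemma~\ref{lem_convex_1}(i) gives $\mathcal{I}_s(x,\overline{a}^{\operatorname{UB}},y)\geq 0$ for every $y\in \K_y$ and every $s\in \mathcal{S}$. Since $\K_y$ is nonempty (being a nonempty compact subset of $\R^{n_y}$ by Assumption~\ref{asu_KxKy}), picking any $y\in \K_y$ yields $(\overline{a}^{\operatorname{UB}},y)\in \Gamma_{\overline{a}}(x)$.

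For the equality $V_{\overline{a}}(x)=V(x)$, the inequality $V_{\overline{a}}(x)\geq V(x)$ is immediate since $\Gamma_{\overline{a}}(x)\subseteq \Gamma(x)$. For the converse, I would show that any feasible point with $a>\overline{a}$ can be dominated by a point in $\Gamma_{\overline{a}}(x)$. Concretely, fix $(a,y)\in \Gamma(x)$. If $a\leq \overline{a}$, then $(a,y)\in \Gamma_{\overline{a}}(x)$ and trivially $V_{\overline{a}}(x)\leq f(x,a,y)$. If instead $a>\overline{a}>\overline{a}^{\operatorname{UB}}$, then by Lemma~\ref{lem_convex_1}(i) the point $(\overline{a}^{\operatorname{UB}},y)$ lies in $\Gamma_{\overline{a}}(x)$, and by the monotonicity of $f$ in the second argument (Assumption~\ref{asu_f}(iv)) combined with $\overline{a}^{\operatorname{UB}}<a$, we obtain
\[
V_{\overline{a}}(x)\leq f(x,\overline{a}^{\operatorname{UB}},y)\leq f(x,a,y).
\]
Taking the infimum over $(a,y)\in \Gamma(x)$ yields $V_{\overline{a}}(x)\leq V(x)$, completing the proof.

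There is no real obstacle; the lemma amounts to showing that the extra box constraint $a\leq \overline{a}$ is slack, which follows because we can always lower $a$ to $\overline{a}^{\operatorname{UB}}$ while preserving feasibility (Lemma~\ref{lem_convex_1}(i)) and only decreasing the objective (Assumption~\ref{asu_f}(iv)). Note that the extra $+\tfrac{1}{\mathcal{L}_{a,f}}+2$ slack built into $\overline{a}$ in \eqref{eq_defn_overline_a} is not needed for this lemma; it will be used in later results (e.g.\ through Lemma~\ref{lem_convex_1}(ii)) to construct strictly interior feasible points, so I would simply note that $\overline{a}\geq \overline{a}^{\operatorname{UB}}$ is all that is required here.
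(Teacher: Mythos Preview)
Your proposal is correct and follows essentially the same route as the paper's proof: the paper also derives nonemptiness from $(\overline{a}^{\operatorname{UB}},y)\in\Gamma_{\overline{a}}(x)$ (via Remark~\ref{rem_assumptions}(i), which is the same content as Lemma~\ref{lem_convex_1}(i)), gets $V_{\overline{a}}(x)\geq V(x)$ from $\Gamma_{\overline{a}}(x)\subseteq\Gamma(x)$, and for the reverse inequality replaces any $(a,y)\in\Gamma(x)$ by $(\min\{a,\overline{a}^{\operatorname{UB}}\},y)\in\Gamma_{\overline{a}}(x)$ using monotonicity of $f$ in $a$. Your case split $a\leq\overline{a}$ versus $a>\overline{a}$ is just a cosmetic variant of the paper's $\min\{a,\overline{a}^{\operatorname{UB}}\}$ device.
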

\begin{proof}
By Remark~\ref{rem_assumptions}~(i) we see that $\Gamma_{\overline{a}}(x) \neq \emptyset$ for all $x \in \K_x$. Moreover, as $\Gamma_{\overline{a}}(x) \subseteq \Gamma(x)$, we have $V_{\overline{a}}(x)\geq V(x)$ for every $x \in \K_x$. To see that $V_{\overline{a}}(x)\leq V(x)$ for every $x \in \K_x$, fix any $x \in \K_x$ and let $(a,y) \in \Gamma(x)$. By Remark~\ref{rem_assumptions}~(i), we have $(\overline{a}^{\operatorname{UB}},y) \in \Gamma_{\overline{a}}(x)$. Hence, $f(x,a,y) \geq f(x,\min\{a,\overline{a}^{\operatorname{UB}}\},y) \geq \inf_{(\widetilde{a}, \widetilde{y}) \in \Gamma_{\overline{a}}(x)}f(x,\widetilde{a}, \widetilde{y})$. Since $(a,y) \in \Gamma(x)$ was arbitrary we obtain the desired result.
\end{proof}

\begin{lem} \label{lem_convex_2}
The map 
$
\K_x \ni x \twoheadrightarrow \Gamma_{\overline{a}}(x)$ defined in \eqref{eq_defn_Gamma_a}
is a non-empty, compact-valued, convex-valued, and continuous correspondence.
\end{lem}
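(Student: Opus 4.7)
The plan is to establish each of the four listed properties of $\Gamma_{\overline{a}}$ separately, the first three being straightforward and the last (continuity, i.e., upper plus lower hemicontinuity) being the real content.

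Nonemptiness is immediate from Remark~\ref{rem_assumptions}(i): for any $y\in\K_y$, $(\overline{a}^{\operatorname{UB}},y)\in\Gamma(x)$, and since $\overline{a}^{\operatorname{UB}}\leq\overline{a}$ by \eqref{eq_defn_overline_a}, $(\overline{a}^{\operatorname{UB}},y)\in\Gamma_{\overline{a}}(x)$. Compact-valuedness: $\Gamma_{\overline{a}}(x)\subseteq [\underline{a},\overline{a}]\times\K_y$ is bounded, and it is the intersection over $s\in\mathcal{S}$ of the sets $\{(a,y)\in [\underline{a},\overline{a}]\times\K_y : \mathcal{I}_s(x,a,y)\geq 0\}$, each of which is closed by continuity of $(a,y)\mapsto\mathcal{I}_s(x,a,y)$ (Assumption~\ref{asu_IS}(i)). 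Convex-valuedness: $[\underline{a},\overline{a}]\times\K_y$ is convex (convexity of $\K_y$, Assumption~\ref{asu_KxKy}), and each set $\{\mathcal{I}_s(x,\cdot,\cdot)\geq 0\}$ is convex by concavity of $(a,y)\mapsto\mathcal{I}_s(x,a,y)$ (Assumption~\ref{asu_IS}(ii)).

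For upper hemicontinuity I would use the standard closed-graph characterization: since $\Gamma_{\overline{a}}$ takes values in the fixed compact set $[\underline{a},\overline{a}]\times\K_y$, it is upper hemicontinuous iff its graph is closed. Given $(x_n,a_n,y_n)\to (x,a,y)$ with $(a_n,y_n)\in\Gamma_{\overline{a}}(x_n)$, the Lipschitz continuity of $\mathcal{I}_s$ in $(x,a,y)$ uniformly in $s$ (Assumption~\ref{asu_IS}(i)) yields $\mathcal{I}_s(x,a,y)=\lim_n \mathcal{I}_s(x_n,a_n,y_n)\geq 0$ for every $s\in\mathcal{S}$, while $a\in[\underline{a},\overline{a}]$ and $y\in\K_y$ follow by closedness. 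Hence $(a,y)\in\Gamma_{\overline{a}}(x)$.

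The main obstacle is lower hemicontinuity. The idea is that a small perturbation $x\mapsto x_n$ can degrade $\mathcal{I}_s$ by at most $L_{\mathcal{I}}\|x_n-x\|$ (Assumption~\ref{asu_IS}(i)), whereas a small upward shift of $a$ by $\eta$ increases $\mathcal{I}_s$ by at least $\mathcal{L}_{a,\mathcal{I}}\,\eta$ (Assumption~\ref{asu_IS}(iii)--(iv)), so one simply overcompensates by nudging $a$ upward, while the ``ceiling'' $\overline{a}$ is handled via the safe-region Lemma~\ref{lem_convex_1}(i). Concretely, for $(a,y)\in\Gamma_{\overline{a}}(x)$ and $x_n\to x$, I would set
\[
\eta_n := \tfrac{2L_{\mathcal{I}}}{\mathcal{L}_{a,\mathcal{I}}}\,\|x_n-x\|,\qquad a_n := \min\{a+\eta_n,\,\overline{a}\},
\]
and check two cases. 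If $a+\eta_n\leq\overline{a}$, then using $\mathcal{I}_s(x,a,y)\geq 0$ plus the two bounds above gives $\mathcal{I}_s(x_n,a_n,y)\geq -L_{\mathcal{I}}\|x_n-x\|+\mathcal{L}_{a,\mathcal{I}}\eta_n = L_{\mathcal{I}}\|x_n-x\|\geq 0$. If instead $a+\eta_n>\overline{a}$, then $a_n=\overline{a}\geq\overline{a}^{\operatorname{UB}}$, so $\mathcal{I}_s(x_n,\overline{a},y)\geq 0$ by Lemma~\ref{lem_convex_1}(i). In either case $(a_n,y)\in\Gamma_{\overline{a}}(x_n)$, and $0\leq a_n-a\leq\eta_n\to 0$, proving $(a_n,y)\to(a,y)$. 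Combining upper and lower hemicontinuity gives continuity of $\Gamma_{\overline{a}}$.
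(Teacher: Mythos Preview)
Your proof is correct and follows the same overall structure as the paper's: nonemptiness from Remark~\ref{rem_assumptions}(i), compact-valuedness from closedness of each $\{\mathcal{I}_s\geq 0\}$ inside the compact box, convex-valuedness from concavity of $\mathcal{I}_s$, upper hemicontinuity via a closed-graph/sequential argument, and lower hemicontinuity by compensating the $L_{\mathcal{I}}\|x_n-x\|$ loss with an upward shift in $a$ controlled by $\mathcal{L}_{a,\mathcal{I}}$.

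The one genuine difference is in how the ceiling $\overline{a}$ is handled for lower hemicontinuity. The paper splits into the cases $a\leq\overline{a}^{\operatorname{UB}}$ and $a>\overline{a}^{\operatorname{UB}}$: in the first, the shifted value $a+\tfrac{L_{\mathcal{I}}}{\mathcal{L}_{a,\mathcal{I}}}\|x^{(n)}-x\|$ eventually stays below $\overline{a}$ and feasibility is verified as you do; in the second, strict positivity $\inf_s\mathcal{I}_s(x,a,y)>0$ (from the uniform lower slope $\mathcal{L}_{a,\mathcal{I}}$ and Remark~\ref{rem_assumptions}(i)) together with the uniform Lipschitz bound gives $(a,y)\in\Gamma_{\overline{a}}(x^{(n)})$ for large $n$ without any shift. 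Your cap $a_n=\min\{a+\eta_n,\overline{a}\}$ combined with Lemma~\ref{lem_convex_1}(i) for the capped case achieves the same thing with a single uniform construction and no case distinction on $a$; it is a slightly cleaner packaging of the same idea.
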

\begin{proof}
The non-emptiness follows from Remark~\ref{rem_assumptions}.

Let $x \in \K_x$. Consider a sequence $(a^{(n)},y^{(n)})_{n \in \N} \subseteq \Gamma_{\overline{a}}(x)$. Then, by the compactness of $[\underline{a},\overline{a}]  \times \K_y$, there exists a subsequence $(a^{(n_k)},y^{(n_k)})_{k \in \N} \subseteq \Gamma_{\overline{a}}(x)$ such that $(a^{(n_k)},y^{(n_k)}) \rightarrow (a,y)$ as $k \rightarrow \infty$ for some $(a,y) \in [\underline{a},\overline{a}]  \times \K_y$. The continuity of  $[\underline{a},\overline{a}] \times \K_y \ni (a,y) \mapsto \mathcal{I}_s(x,a,y)$, which is ensured by Assumption~\ref{asu_IS}~(i), then implies  that 
$0 \leq \lim_{k \rightarrow \infty} \mathcal{I}_s(x,a^{(n_k)},y^{(n_k)}) = \mathcal{I}_s(x,a,y)$. Hence, $\Gamma_{\overline{a}}(x)$ is compact.

Let $x \in \K_x$, and let $(a,y),~(a',y') \in \Gamma_{\overline{a}}(x)$. Then, it follows for all $t \in [0,1]$ by Assumption~\ref{asu_IS}~(ii) that 
\[
\mathcal{I}_s \left(x,~t\cdot  a+(1-t)a',~ty+(1-t)\cdot y' \right)\geq t\cdot \mathcal{I}_s(x,a,y)+(1-t)  \cdot \mathcal{I}_s(x,a',y')\geq 0 \text{ for all } s \in \mathcal{S}.
\]
Hence, the convexity of \eqref{eq_defn_Gamma_a} follows.

It remains to show the continuity, i.e., that the map from \eqref{eq_defn_Gamma_a} is lower hemicontinuous and upper hemicontinuous.

Let $(x^{(n)})_{n \in \N}\subseteq \K_x$ with $\lim_{n \rightarrow \infty} x^{(n)} = x \in \K_x$ and let $(a,y) \in \Gamma_{\overline{a}}(x) \subseteq [\underline{a},\overline{a}]  \times \K_y$.
To show the lower-hemicontinuity, according to the characterization provided, e.g., in \cite[Theorem 17.21]{Aliprantis}, we need to prove the existence of a subsequence $(x^{(n_k)})_{k \in \N}$ and  elements $(a^{(k)},y^{(k)}) \in \Gamma(x^{(n_k)})$ for each $k \in \N$ with $\lim_{k \rightarrow \infty}(a^{(k)},y^{(k)})  =(a,y)$.

  First assume that $a \leq \overline{a}^{\operatorname{UB}}$. Since $\lim_{n \rightarrow \infty} x^{(n)} = x$, there exists, by definition of $\overline{a}$, some $n_0 \in \N$ such that for all $n \geq n_0$ we have 
\begin{equation}\label{eq_proof_lemma52_an}
a^{(n)}:= a+\frac{{L}_{\mathcal{I}}}{\mathcal{L}_{a,\mathcal{I}}} \cdot \left\|x^{(n)}-x\right\|\leq \overline{a}.
\end{equation}
Since by Assumption~\ref{asu_IS}~(iii) the map  $[\underline{a},\overline{a}]  \ni a \mapsto \mathcal{I}_s(x,a,y)$ is monotone for all $x \in \K_x, y \in \K_y, s \in \mathcal{S}$, with Assumption~\ref{asu_IS}~~(iv), and with the Lipschitz-property of $\mathcal{I}_s$ from Assumption~\ref{asu_IS}~(i),  we have  for all $s\in \mathcal{S}$ and for all $n \in \N$ that
\begin{align*}
\mathcal{I}_s\left(x^{(n)},~a^{(n)},~y\right) & = \mathcal{I}_s\left(x^{(n)},~a,~y\right) -\mathcal{I}_s\left(x^{(n)},~a,~y\right) +\mathcal{I}_s\left(x^{(n)},~a+\frac{{L}_{\mathcal{I}}}{\mathcal{L}_{a,\mathcal{I}}} \cdot \left\|x^{(n)}-x\right\|,~y\right) \\
&\geq \mathcal{I}_s\left(x^{(n)},~a,~y\right)+ \mathcal{L}_{a,\mathcal{I}} \cdot\frac{{L}_{\mathcal{I}}}{\mathcal{L}_{a,\mathcal{I}}}  \cdot \left\|x^{(n)}-x\right\| \\ 
&\geq \mathcal{I}_s\left(x^{(n)},~a,~y\right)-\mathcal{I}_s\left(x^{(n)},~a,~y\right)+\mathcal{I}_s\left(x,~a,~y\right) \geq 0,
\end{align*}
where the last inequality follows since $(a,y) \in \Gamma_{\overline{a}}(x)$. Thus, we have $(a^{(n)},y) \in \Gamma_{\overline{a}}(x^{(n)})$ for all $n \geq n_0$ as well as  by \eqref{eq_proof_lemma52_an} that $\lim_{n \rightarrow \infty} (a^{(n)},y) = (a,y)$. Hence lower-hemicontinuity follows for the case $a \leq \overline{a}^{\operatorname{UB}}$.\\
Now we consider the case that $a > \overline{a}^{\operatorname{UB}}$. Note that in this case $\mathcal{I}
_s(x,a,y) >0$ for all $s \in \mathcal{S}$ due to the strict monotonicity of $\mathcal{I}
_s$ and by Remark~\ref{rem_assumptions}~(i).
Hence, by the continuity of $\mathcal{I}
_s$, there exists some $n_0 \in \N$ such that for all $n \geq n_0$ we have 
$\mathcal{I}
_s(x^{(n)},a,y)>0$ implying that $(a,y) \in \Gamma_{\overline{a}}(x^{(n)})$ for all $n \geq n_0$.
Thus, we conclude with \cite[Theorem 17.21]{Aliprantis} the lower hemicontinuity of the map from \eqref{eq_defn_Gamma_a} also for the case  $a > \overline{a}^{\operatorname{UB}}$.

It remains to show the upper hemicontinuity. To this end, let $(x^{(n)},a^{(n)},y^{(n)}) \in \operatorname{Gr} \Gamma_{\overline{a}}$ with $\lim_{n \rightarrow} x^{(n)} = x$.
We apply  the characterization of upper hemicontinuity provided, e.g., in \cite[Theorem 17.20]{Aliprantis}, and therefore we need to show the existence of a subsequence $(a^{(n_k)},y^{(n_k)})_{k \in \N}$ with $\lim_{k \rightarrow \infty}( a^{(n_k)},y^{(n_k)})=(a,y) \in \Gamma_{\overline{a}}(x)$.

As $(a^{(n)},y^{(n)})_{n \in \N} \subseteq [\underline{a},\overline{a}]  \times \K_y$  is a sequence defined on a compact space, there exists a subsequence $(a^{(n_k)},y^{(n_k)})_{k \in \N}$ with   $\lim_{k \rightarrow \infty}( a^{(n_k)},y^{(n_k)})=(a,y) \in [\underline{a},\overline{a}]  \times \K_y$. Since $\mathcal{I}_s(x^{(n_k)},a^{(n_k)},y^{(n_k)}) \geq 0$ for all $k \in \N$ as $\left(x^{(n_k)},a^{(n_k)},y^{(n_k)}\right) \in \operatorname{Gr} \Gamma_{\overline{a}}$, we obtain by the continuity of $ \mathcal{I}_s$ that $\mathcal{I}_s(x,a,y) \geq 0$. This means $(a,y) \in \Gamma_{\overline{a}}(x)$.
\end{proof}

\begin{lem} \label{lem_convex_3}
For all $\varepsilon\in (0,1)$ the correspondence
\begin{equation}\label{eq_lem_convex_3}
\K_x \ni x \twoheadrightarrow \mathcal{M}_{\varepsilon}(x):=\left\{(a,y) \in \Gamma_{\overline{a}}(x)~\middle|~ f(x,a,y)- V_{\overline{a}}(x) < \varepsilon \right\}
\end{equation}
is non-empty, convex-valued, and lower hemicontinuous.
\end{lem}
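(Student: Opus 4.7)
I would address the three properties separately, treating non-emptiness and convex-valuedness as short observations and devoting the main effort to lower hemicontinuity.

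For non-emptiness, fix $x\in \K_x$. By Lemma~\ref{lem_new}, $\Gamma_{\overline{a}}(x)\neq\emptyset$ and $V_{\overline{a}}(x)=\inf_{(a,y)\in\Gamma_{\overline{a}}(x)}f(x,a,y)\in\R$. The definition of infimum then yields some $(a,y)\in\Gamma_{\overline{a}}(x)$ with $f(x,a,y)<V_{\overline{a}}(x)+\varepsilon$, i.e.\ $(a,y)\in\mathcal{M}_{\varepsilon}(x)$. For convex-valuedness, let $(a,y),(a',y')\in\mathcal{M}_{\varepsilon}(x)$ and $t\in[0,1]$. Lemma~\ref{lem_convex_2} gives $(ta+(1-t)a',ty+(1-t)y')\in\Gamma_{\overline{a}}(x)$, and the convexity of $(a,y)\mapsto f(x,a,y)$ from Assumption~\ref{asu_f}~(iii) yields
\[
f\bigl(x,ta+(1-t)a',ty+(1-t)y'\bigr)\le tf(x,a,y)+(1-t)f(x,a',y')<V_{\overline{a}}(x)+\varepsilon,
\]
so the convex combination lies in $\mathcal{M}_{\varepsilon}(x)$.

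The main work is lower hemicontinuity, and the key preliminary step is showing that $V_{\overline{a}}$ is continuous on $\K_x$. This I would deduce from Berge's maximum theorem: Lemma~\ref{lem_convex_2} provides that $\Gamma_{\overline{a}}$ is non-empty, compact-valued, and continuous, and $f$ is jointly continuous by Assumption~\ref{asu_f}~(ii), so the value function $V_{\overline{a}}$ is continuous on $\K_x$. With this in hand, I apply the sequential characterization of lower hemicontinuity from \cite[Theorem~17.21]{Aliprantis}: fix $x\in\K_x$, $(a,y)\in\mathcal{M}_{\varepsilon}(x)$, and a sequence $x^{(n)}\to x$. By the lower hemicontinuity part of Lemma~\ref{lem_convex_2}, there exist a subsequence $(x^{(n_k)})_{k\in\N}$ and points $(a^{(k)},y^{(k)})\in\Gamma_{\overline{a}}(x^{(n_k)})$ with $(a^{(k)},y^{(k)})\to(a,y)$. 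Joint continuity of $f$ and continuity of $V_{\overline{a}}$ then give
\[
f\bigl(x^{(n_k)},a^{(k)},y^{(k)}\bigr)-V_{\overline{a}}\bigl(x^{(n_k)}\bigr)\;\xrightarrow[k\to\infty]{}\;f(x,a,y)-V_{\overline{a}}(x)<\varepsilon,
\]
so for $k$ sufficiently large the difference remains strictly below $\varepsilon$, which means $(a^{(k)},y^{(k)})\in\mathcal{M}_{\varepsilon}(x^{(n_k)})$ eventually. Appending arbitrary elements of $\mathcal{M}_{\varepsilon}(x^{(n_k)})$ (non-empty by the argument above) for the finitely many small $k$ produces the required sequence, establishing lower hemicontinuity.

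The step I expect to be the main obstacle is the continuity of $V_{\overline{a}}$: it is precisely here that the full strength of Lemma~\ref{lem_convex_2} (both lower and upper hemicontinuity of $\Gamma_{\overline{a}}$ together with compact-valuedness) is needed in order to invoke Berge. Once $V_{\overline{a}}$ is continuous, the strict inequality defining $\mathcal{M}_{\varepsilon}$ is stable under small perturbations of the input, which is exactly what drives the lower hemicontinuity argument.
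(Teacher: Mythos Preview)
Your proof is correct, and for non-emptiness and convexity it matches the paper. For lower hemicontinuity you take a genuinely cleaner route than the paper does. Both arguments first invoke Berge's maximum theorem (via Lemma~\ref{lem_convex_2}) to obtain continuity of $V_{\overline{a}}$. From there, however, the paper rebuilds an approximating feasible sequence by hand: given $(a,y)\in\mathcal{M}_\varepsilon(x)$ and $x^{(n)}\to x$, it sets $a^{(n)}:=a+\tfrac{L_{\mathcal{I}}}{\mathcal{L}_{a,\mathcal{I}}}\|x^{(n)}-x\|$, checks $(a^{(n)},y)\in\Gamma_{\overline{a}}(x^{(n)})$ directly (using Lemma~\ref{lem_convex_1}~(ii) to ensure $a^{(n)}\le\overline{a}$), and then uses continuity of $f$ and $V_{\overline{a}}$ to land in $\mathcal{M}_\varepsilon(x^{(n)})$. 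You instead simply cite the lower hemicontinuity of $\Gamma_{\overline{a}}$ already proved in Lemma~\ref{lem_convex_2} to obtain $(a^{(k)},y^{(k)})\in\Gamma_{\overline{a}}(x^{(n_k)})$ with $(a^{(k)},y^{(k)})\to(a,y)$, and then let the strict inequality and continuity do the rest. Your argument is more economical because it reuses Lemma~\ref{lem_convex_2} as a black box rather than reproducing its construction; the paper's explicit shift, on the other hand, keeps $y$ fixed and makes the quantitative dependence on $\|x^{(n)}-x\|$ visible, though that extra information is not actually needed here. One small cosmetic point: rather than ``appending arbitrary elements'' for small $k$, it is cleaner to pass to the tail subsequence $(k\ge k_0)$ directly, which is still a subsequence of $(x^{(n)})$ converging to the right limit.
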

\begin{proof}
Let $\varepsilon \in (0,1)$. The non-emptiness of $\mathcal{M}_{\varepsilon}(x)$ for each $x \in \K_x$  follows by definition and by Remark~\ref{rem_assumptions}.
To show the convexity of $\mathcal{M}_{\varepsilon}(x)$ for each $x \in \K_x$, fix any $x \in \K_x$ and let $(y,a),~ (\widetilde{y}, \widetilde{a}) \in \mathcal{M}_{\varepsilon}(x)$ and $t \in [0,1]$. Then by Lemma~\ref{lem_convex_2} implying that $\Gamma_{\overline{a}}(x)$ is convex, we have $t\cdot (a,y)+(1-t)\cdot (\widetilde{y}, \widetilde{a}) \in \Gamma_{\overline{a}}(x)$. Moreover, by Assumption~\ref{asu_f}~(iii) ensuring that $[\underline{a},\overline{a}] \times \K_y \ni (a,y) \mapsto f(x,a,y)$ is convex, we have
\begin{align*}
&f\left(x,t \cdot a + (1-t) \cdot \widetilde{a}, t \cdot y + (1-t) \cdot \widetilde{y}\right) - V_{\overline{a}}(x)\\ 
&\leq t \cdot \left( f\left(x, a,  y\right) - V_{\overline{a}}(x) \right)+ (1-t) \cdot \left(f\left(x, \widetilde{a},  \widetilde{y}\right) - V_{\overline{a}}(x)  \right) \leq t \cdot  \varepsilon +(1-t) \cdot \varepsilon = \varepsilon,
\end{align*}
from which we conclude the convexity of $\mathcal{M}_{\varepsilon}(x)$.
To show the lower hemicontinuity of \eqref{eq_lem_convex_3} let $(x^{(n)})_{n \in \N} \subseteq \K_x$ with $\lim_{n \rightarrow \infty}x^{(n)} =x  \in \K_x$, and let $(a,y) \in  \mathcal{M}_{\varepsilon}(x)$. We apply the characterization of lower hemicontinuity from \cite[Theorem 17.20]{Aliprantis} and therefore aim at showing that there exists a subsequence $(x^{(n_k)})_{k \in \N}$ and elements $(a^{(k)},y^{(k)}) \in \mathcal{M}_{\varepsilon}(x^{(n_k)})$ for each $k \in \N$ such that $\lim_{k \rightarrow \infty} (a^{(k)},y^{(k)}) = (a,y)$.

By Lemma~\ref{lem_convex_2} the correspondence $\K_x \ni x \twoheadrightarrow \Gamma_{\overline{a}}(x)$ is non-empty, compact-valued, continuous, and by Assumption~\ref{asu_f}~(ii), the map $ \K_x \times [\underline{a},\overline{a}] \times \K_y \ni(x,a,y) \mapsto f(x,a,y)$ is continuous. Hence, Berge's maximum theorem (see \cite{berge} or \cite[Theorem 17.31]{Aliprantis}) is applicable. 

We then obtain by Berge's maximum theorem that the map 
\[
\K_x \ni x \mapsto V_{\overline{a}}(x):=\inf_{(a,y) \in \Gamma_{\overline{a}}(x)} f(x,a,y)
\]
is continuous. Therefore, as $(a,y) \in \mathcal{M}_{\varepsilon}(x)$, and since both $f$ and $V_{\overline{a}}$ are continuous, there exists some $\gamma \in (0,1)$ such that for all $(x,'a',y')$ with $(x,'a',y') \in \mathcal{B}_{\gamma}(x,a,y) \subseteq \K_x \times [\underline{a},\overline{a}] \times \K_y$, it holds 
\begin{equation}\label{eq_f_psi_eps}
f(x',a',y') - V_{\overline{a}}(x') < \varepsilon.
\end{equation}
Moreover, as $\lim_{n \rightarrow \infty} x^{(n)} = x$, there exist some $n_0 \in \N$ such that for all $n \geq n_0$ we have 
\begin{equation}\label{eq_sqrt_gamma}
\sqrt{\left\|x^{(n)}-x\right\|^2+\left(\frac{L_{\mathcal{I}}}{\mathcal{L}_{a,\mathcal{I}}}\left\|x^{(n)}-x\right\|\right)^2} \leq \gamma.
\end{equation}
Moreover, since $(a,y) \in \mathcal{M}_{\varepsilon}(x)$ and $\varepsilon \in (0,1)$, we have by Lemma~\ref{lem_convex_1}~(ii) that $a < \overline{a}^{\operatorname{UB}}+\tfrac{1}{\mathcal{L}_{a,f}}$.
Hence, by \eqref{eq_sqrt_gamma} and by definition of $\overline{a}$ we have for all $n \geq n_0$ also that
\begin{equation}\label{eq_proof_lem_53_eq_1}
a+ \frac{L_{\mathcal{I}}}{\mathcal{L}_{a,\mathcal{I}}} \left\| x^{(n)}-x\right\| \leq a+ \gamma \leq \overline{a}.
\end{equation}
Note also that for  $n \geq n_0$ we have by Assumption~\ref{asu_IS}~(iv) and Assumption~\ref{asu_IS}~(i)   for all $ s \in \mathcal{S}$ the following inequality
\begin{equation} \label{eq_proof_lem_53_eq_2}
\begin{aligned}
\mathcal{I}_s \left(x^{(n)},~a+ \frac{L_{\mathcal{I}}}{\mathcal{L}_{a,\mathcal{I}}} \left\| x^{(n)}-x\right\|,~y\right) & = \mathcal{I}_s \left(x^{(n)},~a+ \frac{L_{\mathcal{I}}}{\mathcal{L}_{a,\mathcal{I}}} \left\| x^{(n)}-x\right\|,~y\right)-\mathcal{I}_s \left(x^{(n)},~a,~y\right)\\
&\hspace{6.2cm}+\mathcal{I}_s \left(x^{(n)},~a,~y\right)\\
&\geq \mathcal{L}_{a,\mathcal{I}} \frac{L_{\mathcal{I}}}{\mathcal{L}_{a,\mathcal{I}}} \left\|x^{(n)}-x\right\|+\mathcal{I}_s \left(x^{(n)},~a,~y\right)\\
&= {L_{\mathcal{I}}} \left\|x^{(n)}-x\right\|+\mathcal{I}_s \left(x^{(n)},~a,~y\right)-\mathcal{I}_s \left(x,~a,~y\right)+\mathcal{I}_s \left(x,~a,~y\right)\\
&\geq L_{\mathcal{I}} \left\|x^{(n)}-x\right\|-L_{\mathcal{I}} \left\|x^{(n)}-x\right\| +\mathcal{I}_s \left(x,~a,~y\right) \geq 0,
\end{aligned}
\end{equation}
since $(a,y) \in \Gamma_{\overline{a}}(x)$. Hence, \eqref{eq_proof_lem_53_eq_1} and \eqref{eq_proof_lem_53_eq_2} together show that
\begin{equation}\label{eq_stratingamma}
\left(a+ \frac{L_{\mathcal{I}}}{\mathcal{L}_{a,\mathcal{I}}} \left\| x^{(n)}-x\right\|,~y\right) \in \Gamma_{\overline{a}}(x^{(n)}) \text{ for all } n \geq n_0.
\end{equation}
By \eqref{eq_sqrt_gamma} we have $\left(x^{(n)},a+ \frac{L_{\mathcal{I}}}{\mathcal{L}_{a,\mathcal{I}}} \left\| x^{(n)}-x\right\|,~y\right) \in \mathcal{B}_{\gamma}(x,a,y)$ for all $n \geq n_0$. Thus, it follows with \eqref{eq_f_psi_eps} and \eqref{eq_stratingamma} that
\[
\left(a+ \frac{L_{\mathcal{I}}}{\mathcal{L}_{a,\mathcal{I}}} \left\| x^{(n)}-x\right\|,~y\right) \in \mathcal{M}_{\varepsilon}\left(x^{(n)}\right)\text{ for all } n \geq n_0,
\]
proving the lower hemicontinuity of \eqref{eq_lem_convex_3}, by applying the characterization of lower hemicontinuity from \cite[Theorem 17.20]{Aliprantis} to the subsequences $(x^{(n)})_{n\in \N,\atop n\geq n_0}$ and $\left(a+ \frac{L_{\mathcal{I}}}{\mathcal{L}_{a,\mathcal{I}}} \left\| x^{(n)}-x\right\|,~y\right)_{n\in \N,\atop n\geq n_0}$.
\end{proof}
\begin{cor}\label{cor_convex_1}
For all $\varepsilon\in (0,1)$ the correspondence
\begin{equation}\label{eq_cor_convex_1_1}
\K_x \ni x \twoheadrightarrow \overline{\mathcal{M}_{\varepsilon}(x)} := \operatorname{cl}\left(\mathcal{M}_{\varepsilon}(x)\right)
\end{equation}
is nonempty, convex, closed, lower hemicontinuous, and satisfies
\begin{equation}\label{eq_cor_convex_1_2}
\overline{\mathcal{M}_{\varepsilon}(x)} \subseteq \left\{(a,y) \in \Gamma_{\overline{a}}(x)~\middle|~ f(x,a,y)-V_{\overline{a}}(x) \leq \varepsilon\right\}.
\end{equation}
\end{cor}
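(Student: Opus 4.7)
The plan is to deduce each of the five claimed properties directly from what has already been established about $\mathcal{M}_{\varepsilon}(x)$ and $\Gamma_{\overline{a}}(x)$, by invoking standard facts about topological closures of correspondences. Since closure preserves the four basic properties (nonemptiness, convexity, closedness, and lower hemicontinuity), the corollary is essentially a packaging step on top of Lemma~\ref{lem_convex_3}, together with a continuity argument for the inclusion \eqref{eq_cor_convex_1_2}.

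First I would handle the easy items in one block. For any $x \in \K_x$, Lemma~\ref{lem_convex_3} gives $\mathcal{M}_{\varepsilon}(x) \neq \emptyset$; since $\mathcal{M}_{\varepsilon}(x) \subseteq \overline{\mathcal{M}_{\varepsilon}(x)}$, the closure is nonempty as well. Convexity of $\overline{\mathcal{M}_{\varepsilon}(x)}$ follows because the topological closure of a convex subset of $\R^{1+n_y}$ is convex. Closedness is immediate from the definition. For lower hemicontinuity, I would appeal to the standard fact (see, e.g., \cite[Lemma~17.22(1)]{Aliprantis}) that if a correspondence is lower hemicontinuous, then so is its closure; since lower hemicontinuity of $x \twoheadrightarrow \mathcal{M}_{\varepsilon}(x)$ was shown in Lemma~\ref{lem_convex_3}, the closure inherits this property.

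The remaining content is the inclusion \eqref{eq_cor_convex_1_2}. Fix $x \in \K_x$ and let $(a,y) \in \overline{\mathcal{M}_{\varepsilon}(x)}$; choose a sequence $(a^{(n)}, y^{(n)})_{n \in \N} \subseteq \mathcal{M}_{\varepsilon}(x)$ with $\lim_{n \to \infty} (a^{(n)}, y^{(n)}) = (a,y)$. By Lemma~\ref{lem_convex_2}, $\Gamma_{\overline{a}}(x)$ is compact and hence closed, so $(a,y) \in \Gamma_{\overline{a}}(x)$. By Assumption~\ref{asu_f}~(ii) the map $(a,y) \mapsto f(x,a,y)$ is continuous, and $V_{\overline{a}}(x)$ does not depend on $(a,y)$; taking limits in $f(x, a^{(n)}, y^{(n)}) - V_{\overline{a}}(x) < \varepsilon$ yields $f(x,a,y) - V_{\overline{a}}(x) \leq \varepsilon$, which proves \eqref{eq_cor_convex_1_2}.

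There is no serious obstacle here. The only mild point to be careful about is the use of the ``closure of a lower hemicontinuous correspondence is lower hemicontinuous'' result, which needs to be cited correctly; everything else reduces to elementary limit arguments combined with the continuity/compactness already established in Lemma~\ref{lem_convex_2} and Lemma~\ref{lem_convex_3}.
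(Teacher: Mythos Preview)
Your proposal is correct and follows essentially the same route as the paper: nonemptiness, convexity, and lower hemicontinuity are inherited from Lemma~\ref{lem_convex_3} (the latter via \cite[Theorem~17.22(1)]{Aliprantis}), closedness is definitional, and the inclusion \eqref{eq_cor_convex_1_2} is obtained from continuity of $f$ together with closedness of $\Gamma_{\overline{a}}(x)$. Your argument for \eqref{eq_cor_convex_1_2} is in fact slightly more explicit than the paper's, since you spell out why the limit point lies in $\Gamma_{\overline{a}}(x)$.
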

\begin{proof}
The non-emptiness and convexity  of the map defined in \eqref{eq_cor_convex_1_1} both follow from Lemma~\ref{lem_convex_3}. That the map is closed is a consequence of the definition of a closure of a set. The lower-hemicontinuity also follows from Lemma~\ref{lem_convex_3} and from \cite[Theorem 17.22~(1), p. 566]{Aliprantis} which ensures that the closure of a lower hemicontinuous map is again lower hemicontinuous. The relation \eqref{eq_cor_convex_1_2} follows as the map $ \K_x \times [\underline{a},\overline{a}] \times \K_y \ni(x,a,y) \mapsto f(x,a,y)$ is continuous by Assumption~\ref{asu_f}~(ii).
\end{proof}

\begin{cor}\label{cor_convex_2}
For all $\varepsilon \in (0,1)$ there exists a continuous map $\K_x \ni x \mapsto  \left(a^{*,\varepsilon}(x),~y^{*,\varepsilon}(x)\right) \in \Gamma_{\overline{a}}(x)$ satisfying both
\begin{itemize}
\item[(i)] $a^{*,\varepsilon}(x) \leq \overline{a}^{\operatorname{UB}}+\frac{1}{\mathcal{L}_{a,f}}$ for all $x \in \K_x$,
\item[(ii)]
$f\left(x,~a^{*,\varepsilon}(x),~y^{*,\varepsilon}\right)-V_{\overline{a}}(x) \leq \varepsilon$.
\end{itemize}
\end{cor}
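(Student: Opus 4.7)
The plan is to deduce this corollary as a direct consequence of Michael's continuous selection theorem applied to the correspondence $\overline{\mathcal{M}_{\varepsilon}(\cdot)}$ from Corollary~\ref{cor_convex_1}. First I would recall that by Corollary~\ref{cor_convex_1}, for any $\varepsilon \in (0,1)$ the map $\K_x \ni x \twoheadrightarrow \overline{\mathcal{M}_{\varepsilon}(x)} \subseteq [\underline{a},\overline{a}] \times \K_y$ is non-empty, convex-valued, closed-valued, and lower hemicontinuous. Since $\K_x \subseteq \R^{n_x}$ is a metric space (hence paracompact) and the target $\R^{1+n_y}$ is a Banach space containing the closed convex values, Michael's continuous selection theorem (see \cite{michael}) is applicable and yields a continuous map $\K_x \ni x \mapsto \left(a^{*,\varepsilon}(x),~y^{*,\varepsilon}(x)\right)$ with $\left(a^{*,\varepsilon}(x),~y^{*,\varepsilon}(x)\right) \in \overline{\mathcal{M}_{\varepsilon}(x)}$ for all $x \in \K_x$.

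Next I would verify (ii): by the inclusion \eqref{eq_cor_convex_1_2} from Corollary~\ref{cor_convex_1}, every element of $\overline{\mathcal{M}_{\varepsilon}(x)}$ lies in $\Gamma_{\overline{a}}(x)$ and satisfies $f(x,a,y) - V_{\overline{a}}(x) \leq \varepsilon$, so the continuous selection automatically fulfills the required feasibility and $\varepsilon$-optimality.

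For (i), I would use Lemma~\ref{lem_convex_1}~(ii) combined with the identity $V_{\overline{a}}(x) = V(x)$ from Lemma~\ref{lem_new}. Specifically, if a point $(a,y) \in \Gamma_{\overline{a}}(x)$ satisfies $a \geq \overline{a}^{\operatorname{UB}} + \tfrac{1}{\mathcal{L}_{a,f}}$, then Lemma~\ref{lem_convex_1}~(ii) gives $f(x,a,y) - V(x) \geq 1 > \varepsilon$, so such a point cannot belong to $\mathcal{M}_{\varepsilon}(x)$. Hence every element of $\mathcal{M}_{\varepsilon}(x)$ satisfies the strict bound $a < \overline{a}^{\operatorname{UB}} + \tfrac{1}{\mathcal{L}_{a,f}}$, and taking closure (with respect to a continuous coordinate projection) preserves the non-strict inequality $a \leq \overline{a}^{\operatorname{UB}} + \tfrac{1}{\mathcal{L}_{a,f}}$ on $\overline{\mathcal{M}_{\varepsilon}(x)}$. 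In particular, the selected $a^{*,\varepsilon}(x)$ obeys this bound, establishing (i).

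I do not anticipate a serious obstacle since the heavy lifting has already been carried out in Lemmas~\ref{lem_convex_1}--\ref{lem_convex_3} and Corollary~\ref{cor_convex_1}; the only nontrivial ingredient is invoking Michael's selection theorem, whose hypotheses have been laid out precisely by the structural properties just proved. The mild subtlety worth being careful about is that the bound in (i) is established for the open sublevel set $\mathcal{M}_{\varepsilon}(x)$ as a strict inequality and must be promoted to the closure $\overline{\mathcal{M}_{\varepsilon}(x)}$ as a non-strict inequality, but this is immediate since $(a,y) \mapsto a$ is continuous.
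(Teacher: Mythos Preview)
Your proposal is correct and follows essentially the same approach as the paper: both apply Michael's selection theorem to the correspondence $\overline{\mathcal{M}_{\varepsilon}(\cdot)}$ furnished by Corollary~\ref{cor_convex_1}, read off (ii) from the inclusion \eqref{eq_cor_convex_1_2}, and derive (i) from Lemma~\ref{lem_convex_1}~(ii). The only cosmetic difference is that the paper argues (i) by contradiction directly on the selected point (if $a^{*,\varepsilon}(x)>\overline{a}^{\operatorname{UB}}+\tfrac{1}{\mathcal{L}_{a,f}}$ then (ii) fails), whereas you bound the $a$-coordinate on all of $\mathcal{M}_{\varepsilon}(x)$ and pass to the closure; both are equivalent.
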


\begin{proof} Corollary~\ref{cor_convex_1} ensures that the requirements for an application of the Michael selection theorem (see \cite{michael} or \cite[Theorem 17.66]{Aliprantis}) are fulfilled. By the Michael selection theorem we then obtain a continuous selector $\K_x \ni x \mapsto  \left(a^{*,\varepsilon}(x),~y^{*,\varepsilon}(x)\right) \in \operatorname{cl}\left(\mathcal{M}_{\varepsilon}(x)\right) \subseteq \Gamma_{\overline{a}}(x)$  implying, by definition of $\mathcal{M}_{\varepsilon}(x)$, that (ii) is fulfilled.

Assume now that (i) does not hold, i.e., that we have $a^{*,\varepsilon}(x) > \overline{a}^{\operatorname{UB}}+\frac{1}{\mathcal{L}_{a,f}}$. This, however, by Lemma~\ref{lem_convex_1}~(ii), contradicts (ii), which concludes the proof.
\end{proof}
%
%%%%%%%%%%%%%%%%%%%%%%%%%%%%%%%%%%%%%%%%%%%%%%%%%%%%%%%%%%%%%%%%%
%
%
%%%%%%%%%%%%%%%%%%%%%%%%%%%%%%%%%%%%%%%%

Now, for any $0 < \delta < r$ recall the definition of the set $C_{y,\delta} \subseteq \K_y $ from Assumption~\ref{asu_KxKy}. 
\begin{lem} \label{lem_convex_4}
For all $\delta\in (0,r)$, the map 
\[
\K_x \ni x \mapsto \big({a_\delta}^{*,\varepsilon}(x),~ {y_\delta}^{*,\varepsilon}(x)\big):= \underset{(a,y) \in [\underline{a}+\delta,\overline{a}-\delta] \times C_{y,\delta}}{\operatorname{argmin}} \left\|(a,y)-\left(a^{*,\varepsilon}(x),y^{*,\varepsilon}(x)\right)\right\|^2
\]
is continuous.
\end{lem}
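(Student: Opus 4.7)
The plan is to observe that the minimizer is exactly the metric projection of $(a^{*,\varepsilon}(x),y^{*,\varepsilon}(x))$ onto the closed convex set $K_\delta := [\underline{a}+\delta,\overline{a}-\delta]\times C_{y,\delta}\subseteq \R^{1+n_y}$, and then combine two well-known facts: metric projection onto a non-empty closed convex set in a finite-dimensional Hilbert space is single-valued and $1$-Lipschitz, and the map $x\mapsto (a^{*,\varepsilon}(x),y^{*,\varepsilon}(x))$ is already known to be continuous by Corollary~\ref{cor_convex_2}.

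First, I would verify that $K_\delta$ is non-empty, closed, and convex. Convexity and closedness of $[\underline{a}+\delta,\overline{a}-\delta]$ are obvious, and convexity/closedness of $C_{y,\delta}$ is given by Assumption~\ref{asu_KxKy}; non-emptiness of $C_{y,\delta}$ follows from the final condition in Assumption~\ref{asu_KxKy} (the inner $\min$ would be meaningless otherwise, and $\K_y$ is non-empty). For non-emptiness of $[\underline{a}+\delta,\overline{a}-\delta]$, recall that $\overline{a}=\overline{a}^{\operatorname{UB}}+\tfrac{1}{\mathcal{L}_{a,f}}+2 \geq \underline{a}+2$, so together with $\delta<r<1$ this gives $\underline{a}+\delta<\overline{a}-\delta$.

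Next, I would identify the target map as a projection: by definition of the argmin,
\[
\bigl(a_\delta^{*,\varepsilon}(x),y_\delta^{*,\varepsilon}(x)\bigr)=\operatorname{proj}_{K_\delta}\bigl(a^{*,\varepsilon}(x),y^{*,\varepsilon}(x)\bigr),
\]
where $\operatorname{proj}_{K_\delta}$ denotes the Euclidean metric projection onto $K_\delta$. Since $K_\delta$ is a non-empty closed convex subset of the Hilbert space $\R^{1+n_y}$, the classical Hilbert projection theorem guarantees that $\operatorname{proj}_{K_\delta}$ is well-defined (uniqueness of the minimizer) and $1$-Lipschitz continuous (non-expansiveness of projection). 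In particular the argmin is indeed a singleton, so the notation is unambiguous.

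Finally, composing the continuous map $x\mapsto (a^{*,\varepsilon}(x),y^{*,\varepsilon}(x))$ from Corollary~\ref{cor_convex_2} with the continuous projection $\operatorname{proj}_{K_\delta}$ yields the continuity of $x\mapsto (a_\delta^{*,\varepsilon}(x),y_\delta^{*,\varepsilon}(x))$. There is essentially no obstacle here; the only bookkeeping point to double-check is the non-emptiness of $K_\delta$ under the constraint $\delta<r<1$, which is handled by the lower bound $\overline{a}-\underline{a}\geq 2$ coming from the definition of $\overline{a}$ in \eqref{eq_defn_overline_a}.
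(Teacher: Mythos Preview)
Your proposal is correct and follows essentially the same approach as the paper: both identify the map as the metric projection of the continuous selector $x\mapsto(a^{*,\varepsilon}(x),y^{*,\varepsilon}(x))$ from Corollary~\ref{cor_convex_2} onto the fixed closed convex set $[\underline{a}+\delta,\overline{a}-\delta]\times C_{y,\delta}$. The only cosmetic difference is that the paper invokes Berge's maximum theorem for continuity of the argmin, whereas you use the (more elementary) non-expansiveness of the Hilbert projection; either works.
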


\begin{proof} 
Note that $(x,a,y) \mapsto\left\|(a,y)-\left(a^{*,\varepsilon}(x),y^{*,\varepsilon}(x)\right)\right\|^2$ is continuous by Corollary~\ref{cor_convex_2}.
Moreover, the single-valued map is well-defined as the projection of the point $\big({a}^{*,\varepsilon}(x),~ y^{*,\varepsilon}(x)\big)$ onto the compact, convex set $[\underline{a}+\delta,\overline{a}-\delta] \times  C_{y,\delta} $. The continuity follows now by, e.g., Berge's maximum theorem (\cite[Theorem 17.31]{Aliprantis}) and \cite[Lemma 17.6]{Aliprantis}.
\end{proof}
\subsubsection{Proof of Theorem~\ref{thm_main}}

In Section~\ref{sec_auxiliary} we have established all auxiliary results that allow us now to report the proof of Theorem~\ref{thm_main}.
\begin{proof}[Proof of Theorem~\ref{thm_main}]
Without loss of generality let $\varepsilon \in (0,1)$, else we substitute $\varepsilon$ by $\overline{\varepsilon}:=\frac{\varepsilon}{1+\varepsilon}$. By Corollary~\ref{cor_convex_2}, for all $x \in \K_x$ there exists, by abuse of notation with $\varepsilon \leftarrow \varepsilon/2$ in the notation of Corollary~\ref{cor_convex_2}, some continuous map $\K_x \ni x \mapsto \left(a^{*,\varepsilon}(x),~y^{*,\varepsilon}(x)\right) \in \Gamma_{\overline{a}}(x)$ satisfying for all $x \in \K_x$ that 
\begin{equation}\label{eq_condition_astar}
a^{*,\varepsilon}(x) \leq \overline{a}^{\operatorname{UB}}+\frac{1}{\mathcal{L}_{a,f}}
\end{equation}
and  such that 
\begin{equation}
\label{eq_condition_astar_2}
f\left(x,~a^{*,\varepsilon}(x),~y^{*,\varepsilon}\right)-V_{\overline{a}}(x) \leq \varepsilon/2.
\end{equation}
 We recall $r \in (0,1)$ from Assumption~\ref{asu_KxKy} and define 
 \begin{equation}\label{eq_def_delta_0}
 \delta_0 := \frac{\varepsilon \min \{\mathcal{L}_{a,\mathcal{I}},1\}}{8 \max \left\{L_{\mathcal{I}},\mathcal{L}_{a,f}\right\} \sqrt{1+L_r^2} L_f} \cdot r \in (0,r). 
 \end{equation}
Note that by definition of the projection from Lemma~\ref{lem_convex_4} with respect to $[\underline{a}-\delta_0, \underline{a}+\delta_0] \times C_{y,\delta_0}$, and by Assumption~\ref{asu_KxKy}  we have
\begin{equation}\label{eq_diff_arguments}
\left \| \left(x,a^{*,\varepsilon}(x),~y^{*,\varepsilon}(x)\right)- \left(x,a_{\delta_0}^{*,\varepsilon}(x),~y_{\delta_0}^{*,\varepsilon}(x)\right)\right\| \leq \sqrt{\delta_0^2+L_r^2 \delta_0^2} = \delta_0 \sqrt{1+L_r^2} \text{ for all } x \in \K_x.
\end{equation}
Hence, for all $x \in \K_x$, by using the Lipschitz-continuity of $f$ from Assumption~\ref{asu_f}~(i), by \eqref{eq_diff_arguments}, and by the definition of $\delta_0$ in \eqref{eq_def_delta_0}, we have
\begin{equation}\label{eq_diff_f}
\begin{aligned}
\left|f \left(x,a^{*,\varepsilon}(x),~y^{*,\varepsilon}(x)\right)-f \left(x,a_{\delta_0}^{*,\varepsilon}(x),~y_{\delta_0}^{*,\varepsilon}(x)\right)\right| &\leq L_f \left \| \left(x,a^{*,\varepsilon}(x),~y^{*,\varepsilon}(x)\right)- \left(x,a_{\delta_0}^{*,\varepsilon}(x),~y_{\delta_0}^{*,\varepsilon}(x)\right)\right\| \\
&\leq   L_f \cdot \delta_0 \sqrt{1+L_r^2} \\
& = r\cdot \frac{L_f}{L_f} \frac{ \sqrt{1+L_r^2}}{ \sqrt{1+L_r^2}}\cdot \frac{\varepsilon \min \{\mathcal{L}_{a,\mathcal{I}},1\}}{8 \max \left\{L_{\mathcal{I}},\mathcal{L}_{a,f}\right\}}\leq \frac{\varepsilon}{8}.
\end{aligned}
\end{equation}
By Corollary~\ref{cor_convex_1}, we have $(x,a^{*,\varepsilon}(x),y^{*,\varepsilon}(x)) \in \Gamma_{\overline{a}}(x)$, and in particular, $\mathcal{I}_s(x,a^{*,\varepsilon}(x),y^{*,\varepsilon}(x)) \geq 0$ for all $s \in  \mathcal{S}$ and all $x\in \K_x$. This implies by the Lipschitz-continuity of $\mathcal{I}_s$ (Assumption~\ref{asu_IS}~(i)), by using \eqref{eq_diff_arguments}, and the definition of $\delta_0$,  that
\begin{equation}\label{eq_ineq_Is_proof_1} 
\begin{aligned}
\mathcal{I}_s(x,~a_{\delta_0}^{*,\varepsilon}(x),~y_{\delta_0}^{*,\varepsilon}(x))  &=\mathcal{I}_s(x,~a_{\delta_0}^{*,\varepsilon}(x),~y_{\delta_0}^{*,\varepsilon}(x))-\mathcal{I}_s(x,~a^{*,\varepsilon}(x),~y^{*,\varepsilon}(x))+ \mathcal{I}_s(x,~a^{*,\varepsilon}(x),~y^{*,\varepsilon}(x))\\
&\geq - L_{\mathcal{I}}\left\| \left(x,a^{*,\varepsilon}(x),~y^{*,\varepsilon}(x)\right)- \left(x,a_{\delta_0}^{*,\varepsilon}(x),~y_{\delta_0}^{*,\varepsilon}(x)\right)\right\|\\
&\geq - L_{\mathcal{I}} \delta_0 \sqrt{1+L_r^2} \\
&=  -  r \cdot \frac{\sqrt{1+L_r^2}} { \sqrt{1+L_r^2}}  \cdot\frac{\min \{\mathcal{L}_{a,\mathcal{I}},1\}}{L_f}  \cdot \frac{L_{\mathcal{I}}}{\max \left\{L_{\mathcal{I}},\mathcal{L}_{a,f}\right\}}  \cdot \frac{\varepsilon }{8  }  \geq -\frac{\mathcal{L}_{a,\mathcal{I}}}{L_f} \frac{\varepsilon}{8}.
\end{aligned}
\end{equation}
%%%%%%%%%%%%%%%%%%%%%%%%%%%%%%%%%%%%%%%%%%%%
%%%%%%%%%%%%%%%%%%%%%%%%%%%%%%%%%%%%%%%%%%%
By the universal approximation theorem (Proposition~\ref{lem_universal}) and Lemma~\ref{lem_convex_4} there exists a neural network 
$
 \widetilde{\NN}:= \left(\widetilde{\NN}_a,\widetilde{\NN}_y\right)\in  \mathfrak{N}_{n_x,1+n_y}
$
such that
\begin{equation}\label{eq_sup_nn_small}
\sup_{x \in \K_x} \left\| \left({a}_{\delta_0}^{*,\varepsilon}(x),~{y}_{\delta_0}^{*,\varepsilon}(x)\right)-\left(\widetilde{\NN}_a(x),\widetilde{\NN}_y(x)\right)  \right\| < \delta_0.
\end{equation}
Moreover, we have by \eqref{eq_ineq_Is_proof_1} and \eqref{eq_sup_nn_small} for all $x\in \K_x$, $s \in \mathcal{S}$ that
\begin{equation}\label{eq_ariel_6}
\begin{aligned}
\mathcal{I}_s(x,\widetilde{\NN}_a(x),\widetilde{\NN}_y(x))  &=\mathcal{I}_s(x,\widetilde{\NN}_a(x),\widetilde{\NN}_y(x))-\mathcal{I}_s(x,{a}_{\delta_0}^{*,\varepsilon}(x),~{y}_{\delta_0}^{*,\varepsilon}(x))+\mathcal{I}_s(x,{a}_{\delta_0}^{*,\varepsilon}(x),~{y}_{\delta_0}^{*,\varepsilon}(x))\\
&\geq -L_{\mathcal{I}} \delta_0 +\mathcal{I}_s(x,{a}_{\delta_0}^{*,\varepsilon}(x),~{y}_{\delta_0}^{*,\varepsilon}(x)) \\
&\geq -L_{\mathcal{I}}\frac{\varepsilon \min \{\mathcal{L}_{a,\mathcal{I}},1\}}{8 \max \left\{L_{\mathcal{I}},\mathcal{L}_{a,f}\right\} \sqrt{1+L_r^2} L_f} \cdot r -\frac{\mathcal{L}_{a,\mathcal{I}}}{L_f} \frac{\varepsilon}{8}  \\
&\geq -\frac{\mathcal{L}_{a,\mathcal{I}}}{L_f} \frac{\varepsilon}{8}  -\frac{\mathcal{L}_{a,\mathcal{I}}}{L_f} \frac{\varepsilon}{8} = -\frac{\mathcal{L}_{a,\mathcal{I}}}{L_f} \frac{\varepsilon}{4}.
\end{aligned}
\end{equation}
In addition, we have by \eqref{eq_sup_nn_small} and Assumption~\ref{asu_KxKy} that $\left(\widetilde{\NN}_a(x),\widetilde{\NN}_y(x)\right) \in [\underline{a},\overline{a}]\times \K_y $ for all $x \in \K_x$. Furthermore, for all $x \in \K_x$
\begin{equation}
\label{eq_ariel_8}
\begin{aligned}
\left|f\left(x, \widetilde{\NN}_a(x),\widetilde{\NN}_y(x) \right)-f \left(x, {a}_{\delta_0}^{*,\varepsilon}(x),~{y}_{\delta_0}^{*,\varepsilon}(x)\right)\right| \leq L_f \delta_0  = r \cdot \frac{L_f}{L_f} \frac{\varepsilon \min \{\mathcal{L}_{a,\mathcal{I}},1\}}{8 \max \left\{L_{\mathcal{I}},\mathcal{L}_{a,f}\right\} \sqrt{1+L_r^2}} \leq \frac{\varepsilon}{8}.
\end{aligned}
\end{equation}
Next, define  a neural network $\NN:= \left({\NN}_a,{\NN}_y\right) \in  \mathfrak{N}_{n_x,1+n_y}$ by
\begin{equation}\label{eq_ariel_9}
\left({\NN}_a(x),{\NN}_y(x)\right):=\left(\widetilde{\NN}_a(x)+\frac{1}{L_f} \frac{\varepsilon}{4},\widetilde{\NN}_y(x)\right),\qquad x \in \R^{n_x}.
\end{equation}
Then, for all $x\in \K_x$, by using \eqref{eq_sup_nn_small}, \eqref{eq_diff_arguments}, \eqref{eq_def_delta_0}, Corollary~\ref{cor_convex_2}, and the definition of $\overline{a}$ in \eqref{eq_defn_overline_a}, we obtain
\begin{align}\label{eq_ariel_10}
\NN_a(x) &= \widetilde{\NN}_a(x)+\frac{1}{L_f} \frac{\varepsilon}{4} \leq {a}_{\delta_0}^{*,\varepsilon}(x)+ \delta_0 +\frac{1}{L_f} \frac{\varepsilon}{4} \leq {a}^{*,\varepsilon}(x)+ \delta_0 \sqrt{1+L_r^2}+ \delta_0 +\frac{1}{L_f} \frac{\varepsilon}{4} \\
&\leq  \overline{a}^{\operatorname{UB}}+\frac{1}{\mathcal{L}_{a,f}}+ \delta_0 \sqrt{1+L_r^2}+ \delta_0 +\frac{1}{L_f} \frac{\varepsilon}{4} \\
&= \overline{a}^{\operatorname{UB}}+\frac{1}{\mathcal{L}_{a,f}}+ \frac{\varepsilon \min \{\mathcal{L}_{a,\mathcal{I}},1\}}{8 \max \left\{L_{\mathcal{I}},\mathcal{L}_{a,f}\right\} \sqrt{1+L_r^2} L_f} \cdot r  \sqrt{1+L_r^2}+ \delta_0 +\frac{1}{L_f} \frac{\varepsilon}{4}\\
&\leq \overline{a}^{\operatorname{UB}}+\frac{1}{\mathcal{L}_{a,f}}+ \frac{\varepsilon}{8} + \delta_0 +\frac{\varepsilon}{4} \leq \overline{a}^{\operatorname{UB}}+\frac{1}{\mathcal{L}_{a,f}}+2 \leq \overline{a}.
\end{align}
Hence, we conclude by \eqref{eq_ariel_9} and \eqref{eq_ariel_10} that 
\begin{equation}\label{eq_ariel_11}
\left({\NN}_a(x),{\NN}_y(x)\right) \in [\underline{a},\overline{a}]\times \K_y \text{ for all } x \in \K_x.
\end{equation}
Moreover, by \eqref{eq_ariel_9} and \eqref{eq_ariel_6} we have for all $x \in \K_x$ and $s \in \mathcal{S}$ that
\begin{equation}\label{eq_ariel_12}
\begin{aligned}
\mathcal{I}_s(x,{\NN}_a(x),{\NN}_y(x)) &=\mathcal{I}_s(x,{\NN}_a(x),\widetilde{\NN}_y(x))-\mathcal{I}_s(x,\widetilde{\NN}_a(x),\widetilde{\NN}_y(x))\\
&\hspace{4.5cm}+\mathcal{I}_s(x,\widetilde{\NN}_a(x),\widetilde{\NN}_y(x))\\
&\geq \frac{\mathcal{L}_{a,\mathcal{I}}}{L_f}\frac{\varepsilon}{4}+\mathcal{I}_s(x,\widetilde{\NN}_a(x),\widetilde{\NN}_y(x))\\
&\geq \frac{\mathcal{L}_{a,\mathcal{I}}}{L_f}\frac{\varepsilon}{4}-\frac{\mathcal{L}_{a,\mathcal{I}}}{L_f} \frac{\varepsilon}{4}=0.
\end{aligned}
\end{equation}
Hence, we see that 
\begin{equation}\label{eq_ariel_13}
\begin{aligned}
\left({\NN}_a(x),{\NN}_y(x)\right) \in \Gamma_{\overline{a}}(x) \subseteq \Gamma(x) \text{ for all } x \in \K_x.
\end{aligned}
\end{equation}
Furthermore, by \eqref{eq_ariel_9}, we have for all $x \in \K_x$ that
\begin{equation}\label{eq_ariel_14}
\begin{aligned}
\left|f\left(x, \widetilde{\NN}_a(x),\widetilde{\NN}_y(x) \right)-f\left(x, \NN_a(x),\NN_y(x) \right)\right| \leq L_f \frac{1}{L_f} \frac{\varepsilon}{4}= \frac{\varepsilon}{4}.
\end{aligned}
\end{equation}
Therefore, we conclude by Lemma~\ref{lem_new}, \eqref{eq_condition_astar_2}, \eqref{eq_diff_f}, \eqref{eq_ariel_8}, and \eqref{eq_ariel_14} that for all $x \in \K_x$ 
\begin{align*}
f\left(x, \NN_a(x),\NN_y(x) \right) -V(x) =& f\left(x, \NN_a(x),\NN_y(x) \right) -V_{\overline{a}}(x)\\
=&\left(f\left(x, \NN_a(x),\NN_y(x) \right)-f\left(x, \widetilde{\NN}_a(x),\widetilde{\NN}_y(x) \right) \right)\\
&+\left(f\left(x, \widetilde{\NN}_a(x),\widetilde{\NN}_y(x) \right) -f \left(x, {a}_{\delta_0}^{*,\varepsilon}(x),~{y}_{\delta_0}^{*,\varepsilon}(x)\right)\right) \\&+ \left(f \left(x, {a}_{\delta_0}^{*,\varepsilon}(x),~{y}_{\delta_0}^{*,\varepsilon}(x)\right)-f \left(x,a^{*,\varepsilon}(x),~y^{*,\varepsilon}(x)\right)\right)\\
&+\left(f \left(x,a^{*,\varepsilon}(x),~y^{*,\varepsilon}(x)\right)-V_{\overline{a}}(x)\right) \\
\leq &\frac{\varepsilon}{4}+\frac{\varepsilon}{8}+\frac{\varepsilon}{8}+\frac{\varepsilon}{2} = \varepsilon.
\end{align*}
\end{proof}
%
%
%%%%%%%%%%%%%%%%%%%%%%%%%%%%%%%%%%%%%%%%%%%%%%%%%%%%%%%%%%%%%%%%%%%%%%
\vspace{0.6cm}
\section*{Acknowledgments}
Financial support by the Nanyang Assistant Professorship Grant (NAP Grant) \textit{Machine Learning based Algorithms in Finance and Insurance} is
gratefully acknowledged.
\vspace{0.6cm}
%\newpage
%%%%%%%%%%%%%%%%%%%%%%%%%%%%%%%%%%%%%%%%%%%%%%%%%%%5
\bibliographystyle{ecta} 
\bibliography{literature}

\begin{thebibliography}{38}
\newcommand{\enquote}[1]{``#1''}
\expandafter\ifx\csname natexlab\endcsname\relax\def\natexlab#1{#1}\fi

\bibitem[\protect\citeauthoryear{Acciaio, Beiglb{\"o}ck, Penkner, and
  Schachermayer}{Acciaio et~al.}{2016}]{acciaio2016model}
\textsc{Acciaio, B., M.~Beiglb{\"o}ck, F.~Penkner, and W.~Schachermayer}
  (2016): \enquote{A model-free version of the fundamental theorem of asset
  pricing and the super-replication theorem,} \emph{Mathematical Finance}, 26,
  233--251.

\bibitem[\protect\citeauthoryear{Aliprantis and Border}{Aliprantis and
  Border}{2006}]{Aliprantis}
\textsc{Aliprantis, C.~D. and K.~C. Border} (2006): \emph{Infinite dimensional
  analysis}, Springer, Berlin, third ed., a hitchhiker's guide.

\bibitem[\protect\citeauthoryear{Auslender, Goberna, and L{\'o}pez}{Auslender
  et~al.}{2009}]{auslender2009penalty}
\textsc{Auslender, A., M.~A. Goberna, and M.~A. L{\'o}pez} (2009):
  \enquote{Penalty and smoothing methods for convex semi-infinite programming,}
  \emph{Mathematics of Operations Research}, 34, 303--319.

\bibitem[\protect\citeauthoryear{Bartl, Kupper, and Neufeld}{Bartl
  et~al.}{2020}]{bartl2020pathwise}
\textsc{Bartl, D., M.~Kupper, and A.~Neufeld} (2020): \enquote{Pathwise
  superhedging on prediction sets,} \emph{Finance and Stochastics}, 24,
  215--248.

\bibitem[\protect\citeauthoryear{Berge}{Berge}{1959}]{berge}
\textsc{Berge, C.} (1959): \emph{Espaces topologiques: {F}onctions
  multivoques}, Collection Universitaire de Math\'{e}matiques, Vol. III, Dunod,
  Paris.

\bibitem[\protect\citeauthoryear{Biagini, Gonon, Mazzon, and
  Meyer-Brandis}{Biagini et~al.}{2022}]{biagini2022detecting}
\textsc{Biagini, F., L.~Gonon, A.~Mazzon, and T.~Meyer-Brandis} (2022):
  \enquote{Detecting asset price bubbles using deep learning,} \emph{arXiv
  preprint arXiv:2210.01726}.

\bibitem[\protect\citeauthoryear{Burzoni, Frittelli, Hou, Maggis, and
  Ob{\l}{\'o}j}{Burzoni et~al.}{2019}]{burzoni2019pointwise}
\textsc{Burzoni, M., M.~Frittelli, Z.~Hou, M.~Maggis, and J.~Ob{\l}{\'o}j}
  (2019): \enquote{Pointwise arbitrage pricing theory in discrete time,}
  \emph{Mathematics of Operations Research}, 44, 1034--1057.

\bibitem[\protect\citeauthoryear{Burzoni, Frittelli, and Maggis}{Burzoni
  et~al.}{2017}]{Burzoni}
\textsc{Burzoni, M., M.~Frittelli, and M.~Maggis} (2017): \enquote{{Model-free
  superhedging duality},} \emph{The Annals of Applied Probability}, 27, 1452 --
  1477.

\bibitem[\protect\citeauthoryear{Burzoni, Riedel, and Soner}{Burzoni
  et~al.}{2021}]{burzoni2021viability}
\textsc{Burzoni, M., F.~Riedel, and H.~M. Soner} (2021): \enquote{Viability and
  arbitrage under Knightian uncertainty,} \emph{Econometrica}, 89, 1207--1234.

\bibitem[\protect\citeauthoryear{Chen and Guestrin}{Chen and
  Guestrin}{2016}]{chen2016xgboost}
\textsc{Chen, T. and C.~Guestrin} (2016): \enquote{Xgboost: A scalable tree
  boosting system,} in \emph{Proceedings of the 22nd acm sigkdd international
  conference on knowledge discovery and data mining}, 785--794.

\bibitem[\protect\citeauthoryear{Cheridito, Kupper, and Tangpi}{Cheridito
  et~al.}{2017}]{cheridito2017duality}
\textsc{Cheridito, P., M.~Kupper, and L.~Tangpi} (2017): \enquote{Duality
  formulas for robust pricing and hedging in discrete time,} \emph{SIAM Journal
  on Financial Mathematics}, 8, 738--765.

\bibitem[\protect\citeauthoryear{Cohen, Reisinger, and Wang}{Cohen
  et~al.}{2020}]{cohen2020detecting}
\textsc{Cohen, S.~N., C.~Reisinger, and S.~Wang} (2020): \enquote{Detecting and
  repairing arbitrage in traded option prices,} \emph{Applied Mathematical
  Finance}, 27, 345--373.

\bibitem[\protect\citeauthoryear{Cui, Qian, Taylor, and Zhu}{Cui
  et~al.}{2020}]{cui2020detecting}
\textsc{Cui, Z., W.~Qian, S.~Taylor, and L.~Zhu} (2020): \enquote{Detecting and
  identifying arbitrage in the spot foreign exchange market,}
  \emph{Quantitative Finance}, 20, 119--132.

\bibitem[\protect\citeauthoryear{Cui and Taylor}{Cui and
  Taylor}{2020}]{cui2020arbitrage}
\textsc{Cui, Z. and S.~Taylor} (2020): \enquote{Arbitrage detection using max
  plus product iteration on foreign exchange rate graphs,} \emph{Finance
  Research Letters}, 35, 101279.

\bibitem[\protect\citeauthoryear{Davis, Ob{\l}{\'o}j, and Raval}{Davis
  et~al.}{2014}]{davis2014arbitrage}
\textsc{Davis, M., J.~Ob{\l}{\'o}j, and V.~Raval} (2014): \enquote{Arbitrage
  bounds for prices of weighted variance swaps,} \emph{Mathematical Finance},
  24, 821--854.

\bibitem[\protect\citeauthoryear{Eckstein, Guo, Lim, and Ob{\l}{\'o}j}{Eckstein
  et~al.}{2021}]{eckstein2021robust}
\textsc{Eckstein, S., G.~Guo, T.~Lim, and J.~Ob{\l}{\'o}j} (2021):
  \enquote{Robust pricing and hedging of options on multiple assets and its
  numerics,} \emph{SIAM Journal on Financial Mathematics}, 12, 158--188.

\bibitem[\protect\citeauthoryear{Eckstein and Kupper}{Eckstein and
  Kupper}{2021}]{eckstein2021computation}
\textsc{Eckstein, S. and M.~Kupper} (2021): \enquote{Computation of optimal
  transport and related hedging problems via penalization and neural networks,}
  \emph{Applied Mathematics \& Optimization}, 83, 639--667.

\bibitem[\protect\citeauthoryear{Fahim and Huang}{Fahim and
  Huang}{2016}]{fahim2016model}
\textsc{Fahim, A. and Y.-J. Huang} (2016): \enquote{Model-independent
  superhedging under portfolio constraints,} \emph{Finance and Stochastics},
  20, 51--81.

\bibitem[\protect\citeauthoryear{Ho}{Ho}{1995}]{ho1995random}
\textsc{Ho, T.~K.} (1995): \enquote{Random decision forests,} in
  \emph{Proceedings of 3rd international conference on document analysis and
  recognition}, IEEE, vol.~1, 278--282.

\bibitem[\protect\citeauthoryear{Hobson, Laurence, and Wang}{Hobson
  et~al.}{2005}]{hobson2005static2}
\textsc{Hobson, D., P.~Laurence, and T.-H. Wang} (2005):
  \enquote{Static-arbitrage optimal subreplicating strategies for basket
  options,} \emph{Insurance: Mathematics and Economics}, 37, 553--572.

\bibitem[\protect\citeauthoryear{Hobson*, Laurence, and Wang}{Hobson*
  et~al.}{2005}]{hobson2005static}
\textsc{Hobson*, D., P.~Laurence, and T.-H. Wang} (2005):
  \enquote{Static-arbitrage upper bounds for the prices of basket options,}
  \emph{Quantitative finance}, 5, 329--342.

\bibitem[\protect\citeauthoryear{Hou and Ob{\l}{\'o}j}{Hou and
  Ob{\l}{\'o}j}{2018}]{hou2018robust}
\textsc{Hou, Z. and J.~Ob{\l}{\'o}j} (2018): \enquote{Robust pricing--hedging
  dualities in continuous time,} \emph{Finance and Stochastics}, 22, 511--567.

\bibitem[\protect\citeauthoryear{Kingma and Ba}{Kingma and
  Ba}{2014}]{kingma2014adam}
\textsc{Kingma, D.~P. and J.~Ba} (2014): \enquote{Adam: A method for stochastic
  optimization,} \emph{arXiv preprint arXiv:1412.6980}.

\bibitem[\protect\citeauthoryear{Kleinbaum, Dietz, Gail, Klein, and
  Klein}{Kleinbaum et~al.}{2002}]{kleinbaum2002logistic}
\textsc{Kleinbaum, D.~G., K.~Dietz, M.~Gail, M.~Klein, and M.~Klein} (2002):
  \emph{Logistic regression}, Springer.

\bibitem[\protect\citeauthoryear{Kozhan and Tham}{Kozhan and
  Tham}{2012}]{kozhan2012execution}
\textsc{Kozhan, R. and W.~W. Tham} (2012): \enquote{Execution risk in
  high-frequency arbitrage,} \emph{Management Science}, 58, 2131--2149.

\bibitem[\protect\citeauthoryear{Li and Neufeld}{Li and
  Neufeld}{2023}]{li2023quantum}
\textsc{Li, Y. and A.~Neufeld} (2023): \enquote{Quantum Monte Carlo algorithm
  for solving Black-Scholes PDEs for high-dimensional option pricing in finance
  and its proof of overcoming the curse of dimensionality,} \emph{arXiv
  preprint arXiv:2301.09241}.

\bibitem[\protect\citeauthoryear{Michael}{Michael}{1956}]{michael}
\textsc{Michael, E.} (1956): \enquote{Continuous selections. {I},} \emph{Ann.
  of Math. (2)}, 63, 361--382.

\bibitem[\protect\citeauthoryear{Mykland}{Mykland}{2003}]{mykland2003financial}
\textsc{Mykland, P.~A.} (2003): \enquote{Financial options and statistical
  prediction intervals,} \emph{The Annals of Statistics}, 31, 1413--1438.

\bibitem[\protect\citeauthoryear{Neufeld, Papapantoleon, and Xiang}{Neufeld
  et~al.}{2023}]{neufeld2022model}
\textsc{Neufeld, A., A.~Papapantoleon, and Q.~Xiang} (2023):
  \enquote{Model-free bounds for multi-asset options using option-implied
  information and their exact computation,} \emph{Management Science}, 69,
  2051--2068.

\bibitem[\protect\citeauthoryear{Neufeld and Sester}{Neufeld and
  Sester}{2021}]{neufeldsester2021model}
\textsc{Neufeld, A. and J.~Sester} (2021): \enquote{Model-free price bounds
  under dynamic option trading,} \emph{SIAM Journal on Financial Mathematics},
  12, 1307--1339.

\bibitem[\protect\citeauthoryear{Neufeld and Sester}{Neufeld and
  Sester}{2023}]{neufeld2022deep}
---\hspace{-.1pt}---\hspace{-.1pt}--- (2023): \enquote{A deep learning approach
  to data-driven model-free pricing and to martingale optimal transport,}
  \emph{IEEE Transactions on Information Theory}, 69, 3172--3189.

\bibitem[\protect\citeauthoryear{Neufeld, Sester, and Yin}{Neufeld
  et~al.}{2024}]{neufeld2022detecting}
\textsc{Neufeld, A., J.~Sester, and D.~Yin} (2024): \enquote{Detecting
  data-driven robust statistical arbitrage strategies with deep neural
  networks,} \emph{SIAM Journal on Financial Mathematics}, 15, 436--472.

\bibitem[\protect\citeauthoryear{Papapantoleon and Sarmiento}{Papapantoleon and
  Sarmiento}{2021}]{papapantoleon2021detection}
\textsc{Papapantoleon, A. and P.~Y. Sarmiento} (2021): \enquote{Detection of
  arbitrage opportunities in multi-asset derivatives markets,} \emph{Dependence
  Modeling}, 9, 439--459.

\bibitem[\protect\citeauthoryear{Pinkus}{Pinkus}{1999}]{pinkus1999approximation}
\textsc{Pinkus, A.} (1999): \enquote{Approximation theory of the {MLP} model in
  neural networks,} \emph{Acta numerica}, 8, 143--195.

\bibitem[\protect\citeauthoryear{Riedel}{Riedel}{2015}]{riedel2015financial}
\textsc{Riedel, F.} (2015): \enquote{Financial economics without probabilistic
  prior assumptions,} \emph{Decisions in Economics and Finance}, 38, 75--91.

\bibitem[\protect\citeauthoryear{Soon and Ye}{Soon and
  Ye}{2011}]{soon2011currency}
\textsc{Soon, W. and H.-Q. Ye} (2011): \enquote{Currency arbitrage detection
  using a binary integer programming model,} \emph{International Journal of
  Mathematical Education in Science and Technology}, 42, 369--376.

\bibitem[\protect\citeauthoryear{Tavin}{Tavin}{2015}]{tavin2015detection}
\textsc{Tavin, B.} (2015): \enquote{Detection of arbitrage in a market with
  multi-asset derivatives and known risk-neutral marginals,} \emph{Journal of
  Banking \& Finance}, 53, 158--178.

\bibitem[\protect\citeauthoryear{Wang and Ren}{Wang and
  Ren}{2021}]{wang2021necessary}
\textsc{Wang, R. and R.~Ren} (2021): \enquote{Necessary and Sufficient
  Conditions for No Static Arbitrage: Based on Shanghai 50ETF Call Options
  Market,} in \emph{2021 International Conference on Computer, Blockchain and
  Financial Development (CBFD)}, IEEE, 528--532.

\end{thebibliography}
\end{document}